\newcommand{\beq}{\begin{equation}}
\newcommand{\eeq}{\end{equation}}
\newtheorem{lem}{Lemma}
\newtheorem{theorem}{Theorem}
\newtheorem{corollary}{Corollary}
\newtheorem{claim}{Claim}
\newtheorem{definition}{Definition}
\theoremstyle{definition}
\newtheorem{example}{Example}
\newtheorem*{prob*}{Problem}
\theoremstyle{remark}
\newtheorem{remark}{Remark}
\global\long\def\RR{\mathbb{R}}
\global\long\def\ZZ{\mathbb{Z}}
\global\long\def\EE{\mathbb{E}}
\global\long\def\11{\mathbbm{1}}
\begin{document}
\title{How to Compute Modulo Prime-Power Sums ?}
%\author{Mohsen Heidari, Farhad Shirani, Sandeep Pradha1n}
\author{\IEEEauthorblockN{Mohsen Heidari, Farhad Shirani, and S. Sandeep
    Pradhan\thanks{This work was presented in part at IEEE International Symposium on Information Theory (ISIT), July 2016 and July 2017. }}\\
\IEEEauthorblockA{Department of Electrical Engineering and Computer Science,\\
University of Michigan, Ann Arbor, MI 48109, USA.\\
Email: \tt\small mohsenhd@umich.edu, fshirani@umich.edu, pradhanv@umich.edu}\\ August 2, 2017}

%\author{	
% \IEEEauthorblockN{Mohsen Heidari}
%  \IEEEauthorblockA{EECS Department\\University of Michigan\\ Ann Arbor,USA \\
%    Email: mohsenhd@umich.edu} 
%
% \IEEEauthorblockN{Farhad Shirani}
%  \IEEEauthorblockA{EECS Department\\University of Michigan\\ Ann Arbor,USA \\
%    Email: fshirani@umich.edu} 
%
%  \and
%  \IEEEauthorblockN{S. Sandeep Pradhan}
%  \IEEEauthorblockA{EECS Department\\University of Michigan\\ Ann Arbor,USA \\
%    Email: pradhanv@umich.edu}
%}

%
%
%\author{Mohsen Heidari,
%        Farhad Shirani,
%        and~ S. Sandeep Pradhan,% <-this % stops a space
%\thanks{M. Heidari, F. Shirani and S. Pradhan are with the Department
%of Electrical Engineering and Computer Science, University of Michigan, Ann Arbor,
%MI, 48109 USA (e-mail: mohsenhd@umich.edu; fshirani@umich.edu; pradhanv@umich.edu).}}% <-this % stops a space

%\thanks{Manuscript received April 19, 2005; revised January 11, 2007.}}

%\IEEEoverridecommandlockouts

%\thanks{ This work was presented in part at IEEE International Symposium on Information Theory (ISIT), July 2016 and July 2017}% <-this % stops a space

%\IEEEpeerreviewmaketitle
\maketitle
\begin{abstract}
A new class of structured codes called Quasi Group Codes (QGC) is introduced. A QGC is a subset
of a group code. In contrast with group codes, QGCs are not closed under group addition. The parameters of the QGC can be chosen such that the size of $\mathcal{C}+\mathcal{C}$ is equal to any number between $|\mathcal{C}|$  and $|\mathcal{C}|^2$ . We analyze the performance of a specific class of QGCs. This class of QGCs is constructed by assigning single-letter distributions to the indices of the codewords in a group code. Then, the QGC is defined as the set of codewords whose index is in the typical set corresponding to these single-letter distributions. The asymptotic performance limits of this class of QGCs is characterized using single-letter information quantities. Corresponding covering and packing bounds are derived. It is shown that the point-to-point channel capacity and optimal rate-distortion
function are achievable using QGCs. Coding strategies based on QGCs are introduced for three fundamental
multi-terminal problems: the K\"orner-Marton problem for modulo prime-power sums, computation over the multiple access channel (MAC), and MAC with distributed states.  For each problem a single-letter achievable rate-region is derived. It is shown, through examples, that the coding strategies improve upon the previous strategies based on unstructured codes, linear codes and group codes.

\end{abstract}
% IEEEtran.cls defaults to using nonbold math in the Abstract.
% This preserves the distinction between vectors and scalars. However,
% if the journal you are submitting to favors bold math in the abstract,
% then you can use LaTeX's standard command \boldmath at the very start
% of the abstract to achieve this. Many IEEE journals frown on math
% in the abstract anyway.

% Note that keywords are not normally used for peerreview papers.
%\begin{IEEEkeywords}
%IEEEtran, journal, \LaTeX, paper, template.
%\end{IEEEkeywords}

\section{Introduction}
\IEEEPARstart{T}{he}  conventional technique of deriving the performance limits for any
communication problem in information theory is  via
random coding \cite{Cover2006} involving so-called  Independent Identically Distributed
(IID) random codebooks. Since such a code possesses only single-letter empirical
properties, coding techniques are constrained to exploit only these
for enabling efficient communication. We refer to them as unstructured
codes. These techniques have been  proven to achieve capacity for point-to-point (PtP) channels, multiple-access channel (MAC) and particular multi-terminal channels such as degraded
broadcast channels. Based on these initial successes,  
it was widely believed that  one can achieve the capacity of any network 
communication problem using IID codebooks.

Stepping beyond this conventional technique, K\"orner and Marton
\cite{korner-marton} proposed a technique based on
statistically correlated codebooks (in particular, identical random linear codes) 
possessing algebraic closure
properties, henceforth referred to as (random) structured codes, that
outperformed all techniques based on (random) unstructured codes. 
This technique was proposed for the problem of distributed computation of the modulo two sum of two correlated symmetric binary sources \cite{korner-marton}.
Applications of structured codes  were also studied for various multi-terminal communication systems, including, but not limited to, distributed source coding \cite{ Dinesh_dist_source_coding , Ahlswede-Han, Han_Kobayashi_DSC, Han1987}, computation over MAC \cite{Nazer_Gasper_Comp_MAC, Nazer_Gaspar_2016, ISIT15_transversal, ISIT16_QGC, Arun_comp_over_MAC_ISIT13, Zhan2013, Appuswamy2013}, MAC with side information  \cite{Philosof2011, Philosof-Zamir, Arun-MAC-with-States, Ahlswede-Han, ISIT17_MAC_States}, the joint source-channel coding over MAC \cite{ISIT16_MAC_Correlated_Sources}, multiple-descriptions   \cite{ISIT15Lattices}, interference channel  \cite{Vishwanath_Jafar_Shamai_2008,hong_caire,Bresler2010a, Niesen2013, Jafarian2012, Ordentlich2012, ISIT16_QLC_Interference},  broadcast channel  \cite{Aron-BC-ISIT13} and MAC with Feedback \cite{ISIT17_Arxiv_MAC_Feedback}. In these works, algebraic structures are exploited to design new coding schemes which outperform all coding schemes solely based on random unstructured codes.
The emerging opinion in this regard is that even if computational complexity is a non-issue, algebraic structured codes may be
necessary,  in a deeply fundamental way, to achieve optimality in transmission and storage of information in networks. % \emph{Structured codes facilitate information cooperation among distributed terminals more efficiently.}

There are several algebraic structures such as fields, ring and groups. Linear codes are defined over finite fields. The focus of this work is on structured codes defined over the ring of modulo-$m$ integers, that is $\ZZ_{m}$. Group codes are a class of structured codes constructed over $\ZZ_{m}$, and were first studied by Slepian \cite{Slepian1968} for the Gaussian channel.  A group code over $\ZZ_m$ is defined as a set of codeswords that is closed under the element-wise modulo-$m$ addition. Linear codes are a special case of group codes (the case when $m$ is a prime). There are two  main incentives to study group codes. First, linear codes are defined only over finite fields, and finite fields exists only when alphabet sizes equal to a prime power, i.e., $\ZZ_{p^r}$. Second, there are several communications problems in which group codes have superior performance limits compared to linear codes. As an example, group codes over $\ZZ_8$ have better error correcting properties than linear codes for communications over an additive white Gaussian noise channel with 8-PSK constellation \cite{Loeliger1991}. As an another example, construction of polar codes over alphabets of size equal to a prime power $p^r$, is more efficient with a module structure rather than a vector space structure \cite{Sasoglu_Polar_codes_book2009,sahebi_polar_codes, Abbe2012, Park2013}. Bounds on the achievable rates of group codes in PtP communications were studied in  \cite{Ahlswede1971, Ahlswede1971a, Ahlswede1971c, Como2009, Loeliger1991, Aria_group_codes}.  Como \cite{Como2009} derived the largest achievable rate using group codes for certain PtP channels.  In \cite{Ahlswede1971}, Ahlswede showed that group codes do not achieve the capacity of a general discrete memoryless channel. In \cite{Aria_group_codes},  Sahebi et.al., unified the previously known works, and characterized the ensemble of all group codes over finite commutative groups. In addition, the authors derived the optimum asymptotic performance limits of group codes for PtP channel/source coding problems.

%Consider a more general version of K\"orner-Marton problem. Suppose $X$ and $Y$ take values from $\mathbb{Z}_n$, the ring of modulo-$n$ integers. In addition, the modulo-$n$ sum of the sources needs to be reconstructed at the receiver losslessly. For an arbitrary integer $n$, the set $\mathbb{Z}_n$ is not a field necessarily. Hence, there are no  linear codes for that  alphabet. For non-prime $n$ another class of structured codes called group codes can be constructed. A group code over the group $\mathbb{Z}_n$ is defined as a set of codeswords that is closed under the element-wise group addition.

%Consider the  K\"orner-Marton's problem over $\ZZ_4$. The lowest achievable sum rate using group codes is $$R_1+R_2 \geq 2\max\{H(X\oplus_4 Y), 2H(X\oplus_4 Y ~ | ~ [X\oplus_4 Y]_2) \}, $$ where $[\cdot]_2$ is the modulo-2 operation \cite{Aria_group_codes}.  Contrary to linear codes, codewords of a randomly generated group code are not necessary pairwise independent. This causes a penalty in the achievable rate of randomly generated group codes. Despite this penalty, group codes outperform linear codes, and random unstructured codes in this problem and several other settings \cite{Aria_group_codes, Dinesh_dist_source_coding} .

It appears that there is a trade-off between cooperation and communication/compression  in networks.  To see this consider the following observations.  K\"orner and Marton suggested the use of identical linear codes
to effect binning of two correlated binary sources when the
objective is to reconstruct the modulo-two sum of the sources
at the decoder. A similar approach  has been used in interference alignment 
using lattices and linear codes in channel coding over interference
channels   \cite{Jafar_Vishwanath_2010, Vishwanath_Jafar_Shamai_2008}.          %\cite{vishwanath3,vishwanath4,jafar1,jafar2,jafar3,201403arXiv_PadPra}. 
The aligning users must use identical linear/lattice codes (modulo shifts). 
In summary, to achieve network cooperation the users must use identical linear codes. 
A linear code, group code or lattice code $\mathcal{C}$ is completely structured in the sense
that the size of $\mathcal{C}+\mathcal{C}$ \emph{equals the size} of $\mathcal{C}$.
However, if the objective is to have the full
reconstruction of both the sources at the decoder (Slepian-Wolf
setting \cite{Slepian-Wolf}), then it has been shown that  using
identical binning can be strictly suboptimal. In general,
to achieve the Slepian-Wolf performance limit, one needs to
use independent unstructured binning of the two sources using
Shannon-style unstructured code ensembles \cite{Cover2006}. A similar observation was made recently 
regarding the interference channels \cite{ISIT16_QLC_Interference}: each cooperating transmitter using identical 
linear codes must pay some penalty  in terms of sacrificing her/his rate for the overall
good of the network. A selfish user intent on maximizing individual throughput must use
essentially independent Shannon-style unstructured code ensembles. 
A  code $\mathcal{C}$ used in random coding in
Shannon ensembles is completely unstructured (complete lack of structure) in the sense that 
the size of $\mathcal{C}+\mathcal{C}$ nearly \emph{equals the square of the size} of $\mathcal{C}$.

This gap between the completely structured codes and the completely unstructured codes leads to the following question: Is there a spectrum of strategies involving partially structured codes or partially unstructured codes that lie between these two extremes? Based
on this line of thought, we consider a new class of  codes which are not fully closed with respect to any algebraic structure but maintain a
degree of ``closedness" with respect to some. In our earlier works \cite{ISIT15_transversal, ISIT16_QGC}, it was observed  that adding a certain set of codewords to a group code improves the performance of the code. Based on these observations\footnote{The motivation for this work comes from our earlier work on multi-level polar codes based on $\ZZ_{p^r}$ \cite{sahebi_polar_codes}. A multi-level polar code is not a group code. But it is a subset a nontrivial group code. }, we introduce a new class of structured code ensembles
called Quasi Group Codes (QGC)
 whose \emph{closedness can be controlled}. A QGC is
a subset of a group code. The degree of closedness of a QGC can be controlled in the sense that the size of $\mathcal{C}+\mathcal{C}$ can be any number between the size of $\mathcal{C}$ and the square of the size of $\mathcal{C}$.
We provide a method for constructing specific subsets of these codes by
putting single-letter distributions on the indices of the codewords.
We are able to analyze the performance of the resulting code ensemble,
and  characterize the asymptotic performance using
single-letter information quantities. By choosing the single-letter
distribution on the indices one can operate anywhere
in the spectrum between the two extremes: group codes
and unstructured codes. %We study QGCs for three fundamental multi-terminal problems: the K\"orner-Marton problem for modulo prime-power sums, Computation over MAC, and MAC with States.  We show that using QGCs strict improvements are obtained over group codes, linear codes and random unstructured codes.

The contributions of this work are as follows. A new class of  codes over groups called Quasi Group Codes (QGC) is introduced. These codes are constructed by taking subsets of group codes. This work considers QGCs over cyclic groups $\mathbb{Z}_{p^r}$. One can use the fundamental theorem of finitely generated Abelian groups to generalize the results of this paper to QGCs over non-cyclic finite Abelian groups. Information-theoretic characterizations for the asymptotic performance limits and properties of QGCs for source coding and channel coding problems are derived in terms of single-letter information quantities. Covering and packing bounds are derived for an ensemble of QGCs. Next, a binning technique for the QGCs is developed by constructing nested QGCs. As a result of these bounds,  the PtP channel capacity and optimal rate-distortion function of sources are shown to be achievable using nested QGCs. The applications of QGCs in some multi-terminal communications problems are considered. More specifically our study includes the following problems:

%This work considers QGCs defined over cyclic groups $\mathbb{Z}_{p^r}$. One can use the fundamental theorem of finitely generated Abelian groups to generalize the results of this paper to QGCs over non-cyclic finite Abelian groups. Ensembles of QGCs are defined in this work. An ensemble is characterized by the single-letter distribution on the indices of the codewords. For each ensemble the average size of the codebooks,  covering and packing bounds are derived in terms of the single-letter information theory quantities. Next, a binning technique for QGCs is developed by constructing nested QGCs. Using the covering and packing bounds, it is shown that the channel capacity and optimal rate-distortion function are achievable using nested QGCs  for PtP communications.  Applications of QGCs for multi-terminal problems are also studied in this work. The focus of this work is on three fundamental multi-terminal problems: the K\"orner-Marton problem for modulo prime-power sums, computation over MAC, and MAC with States. The problems are described as follows: %The covering and packing bounds are used to derive asymptotic performance limits of QGCs in terms of single letter information theory quantities.  %We show that using QGCs strict improvements are obtained over group codes, linear codes and random unstructured codes.

\paragraph*{\bf Distributed Source Coding}
A more general version of K\"orner-Marton problem is considered. In this problem, there are two distributed sources taking values from $\ZZ_{p^r}$.  The sources are to be compressed in a distributed fashion. The decoder wishes to compute the modulo $p^r$-addition of the sources losslessly.  %We introduced a coding scheme based on QGCs, and characterized a computable achievable rate-region. Through an example, we show that the coding scheme strictly improves upon coding strategies based on unstructured codes, linear codes and group codes. 

\paragraph*{\bf Computation over MAC}
In this problem, two transmitters wish to communicate independent information to a receiver over a MAC. The objective is to decode the modulo-$p^r$ sum of the codewords sent by the transmitters at the receiver. This problem is of interest in its own right. Moreover, this problem finds applications as an intermediate step in the study of other fundamental problems such as the interference channel and broadcast channel \cite{Aron-BC-ISIT13, Padakandla_IC_2012}. %For this problem, a coding scheme based on nested QGCs is introduced, and a single-letter achievable rate region is derived. It is shown through an example that this achievable rate regions extends all the previously known rate regions. 

\paragraph*{\bf MAC with Distributed States}
In this problem,  two transmitters wish to communicate independent information to a  receiver over a MAC. The transition probability  between the output and the inputs depends on states $S_1$, and  $S_2$ corresponding to the two transmitters. The state sequences are generated IID according to some fixed joint probability distribution. Each encoder observes the corresponding state sequence non-causally. The objective of the receiver is to decode the messages of both transmitters. 

These problems are formally defined in the sequel. For each of these problems,  a coding scheme based on (nested) QGCs is introduced. It is shown, through examples, that the coding scheme improves upon the best-known coding strategies based on unstructured codes, linear codes and group codes.   In addition, for each problem  a new single-letter achievable rate-region is derived. These rate-regions strictly subsume all the previously known rate-regions for each of these problems. % we introduced a coding scheme based on QGCs. and characterized a computable achievable rate-region. Through an example, we show that the coding scheme strictly improves upon coding strategies based on unstructured codes, linear codes and group codes. 

The rest of this paper is organized as follows: Section \ref{sec: preliminaries} provides the preliminaries and notations. In Section \ref{sec: proposed scheme} we introduce QGC's and define an ensemble of QGCs. Section \ref{sec: Properties of QGC} characterizes basic properties of QGCs. Section \ref{sec: binning for QGCs} describes a method for binning using QGCs. In Section  \ref{sec: dist} and Section \ref{sec: comp_over_mac}, we discuss the applications of QGC's in distributed source coding and computation over MAC, respectively. In Section \ref{sec: MAC with States } we investigate applications of nested QGCs in the problem of MAC with states. Finally, Section \ref{sec: conclusion} concludes the paper.

\section{Preliminaries} \label{sec: preliminaries}
%\paragraph*{\textbf{Notations}}
\subsection{Notations}
We denote (i) vectors using lowercase bold letters such as $\mathbf{b}, \mathbf{u}$, (ii) matrices using uppercase bold letters such as $\mathbf{G}$, (iii) random variables using capital letters such as $X,Y$, (iv) numbers, realizations of random variables and elements of sets using lower case letters such as $a, x$. Calligraphic letters such as $\mathcal{C}$ and  $\mathcal{U}$  are used to represent sets. For shorthand, we denote the set $\{1, 2, \dots, m\}$ by $[1:m]$.

\subsection{Definitions}
%\paragraph*{\textbf{Groups}}
A group is a set equipped with a binary operation denoted by “$+$”. Given a prime power $p^r$, the group of integers modulo-$p^r$ is denoted by $\ZZ_{p^r}$, where the underlying set is $\{0,1,\cdots, p^r-1\}$, and the addition  is modulo-$p^r$ addition. Given a group $M$, a subgroup is a subset $H$ which is closed under the group addition.  For $s \in [0: r]$, define $$H_{s}=p^{s}\ZZ_{p^r}=\{0, p^{s}, 2p^{s}, \cdots,  (p^{r-s}-1)p^{s}\},$$ and $T_s=\{0,1, \cdots
 , p^s-1\}$. For example, $H_0=\ZZ_{p^r}, T_0=\{0\}$, whereas $H_r=\{0\}, T_r=\ZZ_{p^r}$. Note, $H_s$ is a subgroup of $\ZZ_{p^r}$, for $s \in [0: r]$. Given $H_{s}$ and $T_s$, each element $a$ of $\ZZ_{p^r}$ can be represented uniquely as a sum $a=t+h$, where $h\in H_{s}$ and $t \in T_s$. We denote such $t$ by $[a]_{s}$. Therefore, with this notation, $[\cdot ]_s$ is a function from $\ZZ_{p^r}\rightarrow T_s$. Note that this function satisfies the distributive property:
 \begin{align*}
 [a + b]_s=\Big[ [a]_s + [b]_s \Big]_s
 \end{align*}
 
For any elements $a,b \in \ZZ_{p^r}$, we define the multiplication $a\cdot b$ by adding $a$ with itself $b$ times. Given a positive integer $n$,  denote $\ZZ_{p^r}^n=\bigotimes_{i=1}^n\ZZ_{p^r}$. Note that $\ZZ_{p^r}^n$ is a group, whose addition is element-wise and its underlying set is $\{0,1, \dots, p^r-1\}^n$.   We follow the definition of shifted group codes on $\ZZ_{p^r}$ as in \cite{Aria_group_codes} \cite{Dinesh_dist_source_coding}.
\begin{definition}[Shifted Group Codes]\label{def: group codes}
%A \textit{group code} $\mathcal{C}$ over $\ZZ_{p^r}$ with length $n$ is a subgroup of $\ZZ_{p^r}^n$. A \textit{shifted group code}  over $\ZZ_{p^r}$ is a translation of a group code $\mathcal{C}$ by a fixed element $\mathbf{b}\in \ZZ_{p^r}^n$.   
An $(n,k)$-\textit{shifted group code} over $\ZZ_{p^r}$ is defined as
\begin{equation}\label{eq: group code}
\mathcal{C}=\{ \mathbf{u} \mathbf{G}+\mathbf{b}: \mathbf{ u} \in \ZZ_{p^r}^{k}\},
\end{equation}
where $\mathbf{b}\in \ZZ^n_{p^r}$ is the translation (dither) vector and  $\mathbf{G}$ is a $k\times n$ generator matrix with elements in $\ZZ_{p^r}$.
\end{definition}

We follow the definition of typicality as in \cite{Csiszar}. 
\begin{definition}
For any probability distribution $P$ on $\mathcal{X}$ and $\epsilon >0$, a sequence $\mathbf{x}^n\in \mathcal{X}^n$ is said to be $\epsilon$-typical with respect to $P$ if
\begin{align*}
\Big| \frac{1}{n} N(a|\mathbf{x}^n)-P(a)\Big| \leq \frac{\epsilon}{|\mathcal{X}|},~ \forall a\in \mathcal{X},
\end{align*}
and, in addition, no $a\in \mathcal{X}$ with $P(a)=0$ occurs in $\mathbf{x}^n$. Note $N(a|x^n)$ is the number of the occurrences of $a$ in the sequence $\mathbf{x}^n$. The set of all $\epsilon$-typical sequences with respect to a probability distribution $P$ on $\mathcal{X}$ is denoted by $A_\epsilon^{(n)}(X)$.
\end{definition}

\section{Quasi Group Codes}\label{sec: proposed scheme}
Linear codes and group codes are two classes of structured codes. These codes are closed under the addition of the underlying group or field.  It is known in the literature that coding schemes based on linear codes and group codes  improve upon unstructured random coding strategies \cite{korner-marton}. In this section, we propose a new class of structured codes called \textit{quasi-group codes}.

 A QGC is defined as a subset of a group code. Therefore, QGCs are not necessarily closed under the addition of the underlying group. An $(n,k)$ shifted group code over $\ZZ_{p^r}$ is defined as the image of a linear mapping from $\ZZ_{p^r}^k$ to $\ZZ_{p^r}^n$ as in Definition \ref{def: group codes}. Let $\mathcal{U}$ be an arbitrary subset of $\ZZ_{p^r}^k$. Then a QGC is defined as
  \begin{align} \label{eq: QGC codebook}
 \mathcal{C}=\{\mathbf{u}\mathbf{G}+\mathbf{b}: \mathbf{u}\in \mathcal{U}\}, 
\end{align}
where $\mathbf{G}$ is a $k\times n$ matrix and $\mathbf{b}$ is an element of $\ZZ_{p^r}^n$.   If $\mathcal{U}=\ZZ_{p^r}^k$, then $\mathcal{C}$ is a shifted group code.  As we will show, by changing the subset $\mathcal{U}$, the code $\mathcal{C}$ ranges from completely structured codes (such as group codes and linear codes) where $|\mathcal{C} + \mathcal{C}|=|\mathcal{C}|$ to completely unstructured codes where $|\mathcal{C} + \mathcal{C}|=|\mathcal{C}|^2$. For a general subset $\mathcal{U}$, it is difficult to derive a single-letter characterization of the asymptotic performance of such codes. To address this issue, we present a special type of subsets $\mathcal{U}$ for which single-letter characterization of their performance is possible.  

\begin{example}
Let $U$ be a random variable over $\ZZ_{p^r}$ with PMF $P_U$. For $\epsilon>0$, set $\mathcal{U}$ to be the set of all $\epsilon$-typical sequences $\mathbf{u}^k$. More precisely, define $\mathcal{U}= A_{\epsilon}^{(k)}(U)$. In this case, the set $\mathcal{U}$ is determined by the PMF $P_U$ and $\epsilon$. For instance, if $U$ is uniform over $\ZZ_{p^r}$, then $\mathcal{U}=\ZZ_{p^r}^k$.  
\end{example}
Next, we provide a more general construction of $\mathcal{U}$: 
\paragraph*{\bf Construction of $\mathcal{U}$}
Given a positive integer $m$, consider $m$ mutually independent random variables $U_1, U_2, \cdots, U_m$. Suppose each $U_i$ takes values from $\ZZ_{p^r}$ with distribution $P_{U_i}, i\in [1:m]$. For $\epsilon>0$, and positive integers $k_i$, define $\mathcal{U}$ as a Cartesian product of the $\epsilon$-typical sets of $U_i, i\in [1:m]$. More precisely,
\begin{align}\label{eq: U}
\mathcal{U}\triangleq \bigotimes_{i=1}^m A_{\epsilon}^{(k_i)}(U_i).
\end{align} 
In this  construction,  set $\mathcal{U}$ is determined by $m$, $k_i, \epsilon$, and the PMFs $P_{U_i}, i\in[1:m]$. 

For more convenience, we use a different representation for this construction. Let $k\triangleq \sum_{i=1}^m k_i$. Denote $q_i \triangleq \frac{k_i}{k}$. Note that $q_i\geq 0$ and $\sum_i q_i=1$. Therefore, we can define a random variable $Q$ with $P(Q=i)=q_i$. Define a random variable $U$ with the conditional distribution $P(U=a|Q=i)=P(U_i=a)$ for all $a\in \ZZ_{p^r}, i\in [1:m]$. With this notation the set $\mathcal{U}$ in the above construction is characterized by a finite set $\mathcal{Q}$, a pair of random variables $(U,Q)$ distributed over $\ZZ_{p^r}\times \mathcal{Q}$, an integer $k$, and $\epsilon>0$.  The joint distribution of $U$ and $Q$ is denoted by $P_{UQ}$. Note that we assume $P_Q(q)>0$ for all $q\in \mathcal{Q} $. For a more concise notation, we identify the set $\mathcal{U}$ without explicitly specifying $\epsilon$.  With the notation given for the construction of $\mathcal{U}$, we define its corresponding QGC.
\begin{definition}\label{def: QGC}
An  $(n,  k)$- QGC $\mathcal{C}$ over $\ZZ_{p^r}$ is defined as in (\ref{eq: QGC codebook}) and \eqref{eq: U}, and is characterized by a matrix $\mathbf{G}\in \ZZ_{p^r}^{k\times n}$, a translation $\mathbf{b}\in \ZZ^n_{p^r}$, and a pair of random variables $(U, Q)$ distributed over the finite set $\ZZ_{p^r} \times \mathcal{Q}$. The set $\mathcal{U}$ in \eqref{eq: U} is defined as the index set of $\mathcal{C}$.
\end{definition} 

\begin{remark}
 Any shifted group code over $\ZZ_{p^r}$ is a QGC.
\end{remark}
\begin{remark}
Let $\mathcal{C}$ be an $(n,k)$-QGC with randomly selected matrix and translation. In contrast to linear codes, codewords of $\mathcal{C}$ are not necessary pairwise independent. 
\end{remark}

Fix $ (n,  k)$ and random variables $(U, Q)$. We create an ensemble of codes by taking the collection of all $ (n, k)$-QGCs with random variables $(U, Q)$,  for all matrices $\mathbf{G}$ and translations $\mathbf{b}$. We call such a collection as the ensemble of $(n,k)$-QGCs with random variables $(U,Q)$. A random codebook $\mathcal{C}$ from this ensemble is chosen by selecting the elements of $\mathbf{G}$ and $\mathbf{b}$ randomly and uniformly over $\ZZ_{p^r}$. In order to characterize the asymptotic performance limits of QGCs, we need to define sequences of ensembles of QGCs.  For any positive integer $n$, let $k_n=cn$, where  $c>0$ is a constant. Consider the sequence of the ensembles of $(n,k_n)$-QGCs with random variables $(U,Q)$. In the next two lemmas, we characterize the size of randomly selected codebooks from these ensembles. 
\begin{lem}\label{lem: size of U}
Let $\mathcal{U}_n$ be the index set associated with the ensemble of $(n,k_n)$-QGCs with random variables $(U,Q)$ and $\epsilon>0$, where $k_n=cn$ for a constant $c>0$. Then there exists $N>0$, such that for all $n>N$, 
\begin{align*}
\Big|\frac{1}{k_n}\log_2 |\mathcal{U}_n|- H(U|Q)\Big| \leq \epsilon',
\end{align*}
where $\epsilon'$ is a continuous function of $\epsilon$, and $\epsilon'\rightarrow 0$ as $\epsilon \rightarrow 0$.
\end{lem}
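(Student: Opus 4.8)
The plan is to exploit the product structure of $\mathcal{U}_n$ in \eqref{eq: U} and reduce the claim to the classical estimate on the size of a strongly typical set, applied separately to each of the $m$ factors. Write $k_{i,n}$ for the block length of the $i$-th factor, so that $\sum_{i=1}^m k_{i,n}=k_n$ and $k_{i,n}/k_n\to q_i$ (the $k_{i,n}$ are the integers closest to $q_i k_n$, the $O(1)$ rounding being harmless below). Then $|\mathcal{U}_n|=\prod_{i=1}^m |A_\epsilon^{(k_{i,n})}(U_i)|$, hence
\[
\frac{1}{k_n}\log_2|\mathcal{U}_n|=\sum_{i=1}^m \frac{k_{i,n}}{k_n}\cdot\frac{1}{k_{i,n}}\log_2\big|A_\epsilon^{(k_{i,n})}(U_i)\big|.
\]

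First I would invoke the standard two-sided bound on strongly typical sets (as in \cite{Csiszar}): for each $i\in[1:m]$ there is a function $\delta_i(\epsilon)$, continuous in $\epsilon$ with $\delta_i(\epsilon)\to 0$ as $\epsilon\to 0$, a vanishing sequence $\gamma_\ell\to 0$, and an integer $N_i$ such that for all $\ell>N_i$,
\[
(1-\gamma_\ell)\,2^{\ell(H(U_i)-\delta_i(\epsilon))}\le \big|A_\epsilon^{(\ell)}(U_i)\big|\le 2^{\ell(H(U_i)+\delta_i(\epsilon))}.
\]
Because $P_Q(q)>0$ for every $q\in\mathcal{Q}$ we have $q_i>0$, so $k_{i,n}=q_i k_n+O(1)\to\infty$; since $[1:m]$ is finite we may pick $N$ so large that $k_{i,n}>N_i$ for every $i$ once $n>N$. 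Substituting the bound into each summand and using $H(U_i)=H(U\mid Q=i)$ yields
\[
\Big|\frac{1}{k_n}\log_2|\mathcal{U}_n|-\sum_{i=1}^m \frac{k_{i,n}}{k_n}H(U\mid Q=i)\Big|\le \sum_{i=1}^m \frac{k_{i,n}}{k_n}\Big(\delta_i(\epsilon)+\frac{|\log_2(1-\gamma_{k_{i,n}})|}{k_{i,n}}\Big).
\]

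Finally I would collapse the remaining discrepancies into a single $n$-independent error term. The convex combination $\sum_i (k_{i,n}/k_n)H(U\mid Q=i)$ differs from $H(U\mid Q)=\sum_i q_i H(U\mid Q=i)$ by at most $(\log_2 p^r)\sum_i|k_{i,n}/k_n-q_i|=O(1/n)$, and each term $|\log_2(1-\gamma_{k_{i,n}})|/k_{i,n}\to 0$ as $n\to\infty$; hence, enlarging $N$ if necessary, the total slack is at most $\max_i \delta_i(\epsilon)+\epsilon=:\epsilon'$ for all $n>N$, where $\epsilon'$ is continuous in $\epsilon$ and $\epsilon'\to 0$ as $\epsilon\to 0$, using $\sum_i k_{i,n}/k_n=1$. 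I do not expect any genuine obstacle here; the only points requiring care are (i) checking that $k_{i,n}\to\infty$ for every $i$ — which is exactly where the hypothesis $P_Q(q)>0$ is used, so that the lower typicality bound is applicable to all $m$ factors — and (ii) bookkeeping the finitely many per-factor and rounding errors into one $\epsilon'$ that depends on $\epsilon$ alone.
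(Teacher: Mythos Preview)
Your proposal is correct and follows essentially the same route as the paper: factor $\mathcal{U}_n$ as a product of per-$q$ typical sets, apply the standard Csisz\'ar--K\"orner bound $\big|\tfrac{1}{k_{q,n}}\log_2|A_\epsilon^{(k_{q,n})}(U_q)|-H(U_q)\big|\le 2\epsilon'_q$ to each factor, and then take the $P_Q$-weighted sum to recover $H(U|Q)$ with slack $\epsilon'=2\max_q\epsilon'_q$. Your extra care about the rounding of $k_{i,n}$ is in fact unnecessary here, since in the paper's construction $q_i$ is \emph{defined} as $k_i/k$, so the $k_{q,n}$ are exact integers; otherwise the two arguments coincide.
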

\begin{proof}
The proof is Given in Appendix \ref{APP: proof of lemma size of U}
\end{proof}

\begin{remark}\label{rem: size of a random QGC}
Let $\mathcal{C}_n$ be an $(n,k_n)$-QGC with random variables $(U, Q)$. Then, using Lemma \ref{lem: size of U}, for large enough $n$,
\begin{align}\label{eq: QGC_rate}
\frac{1}{n}\log_2|\mathcal{C}_n| \leq  \frac{k_n}{n}H(U|Q) + \epsilon'.
\end{align} 
\end{remark}

\begin{lem}\label{lem: injective map}
Let $\mathcal{U}_n$ be the index set associated with the ensemble of $(n,k_n)$-QGCs with random variables $(U,Q)$, where $k_n=cn$ for a constant $c>0$. Define a map $\Phi_n: \mathcal{U}_n \rightarrow \ZZ_{p^r}^{n}$, $\Phi_n(\mathbf{u})=\mathbf{uG}_n$ for all $\mathbf{u}\in \mathcal{U}_n$, where $\mathbf{G}_n$ is a $k_n \times n$ matrix whose elements are chosen randomly and uniformly from $\ZZ_{p^r}$. Suppose $H(U|[U]_s,Q) < \frac{1}{c} (r-s) \log_2p 	$ for all $s\in [0 : r-1]$.  Then, for any $\delta>0$, there exists $N>0$ such that for each $n>N$  and for any randomly selected $\mathbf{U}\in \mathcal{U}_n$, the size of inverse image $|\Phi_n^{-1}(\Phi(\mathbf{U}))|=1$ with probability at least $(1-\delta)$.
% $$\big|\frac{1}{n}\log_2 |\mathcal{C}_n|-\frac{1}{k_n}\log_2 |\mathcal{U}_n| \big|< \epsilon.$$
\end{lem}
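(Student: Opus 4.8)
The plan is to bound, for a uniformly random generator matrix $\mathbf{G}_n$ and a fixed index $\mathbf{u}\in\mathcal{U}_n$, the probability that $\mathbf{u}$ collides with another index under $\Phi_n$. Since $\Phi_n$ is the restriction to $\mathcal{U}_n$ of the $\ZZ_{p^r}$-linear map $\mathbf{v}\mapsto\mathbf{v}\mathbf{G}_n$, the event $|\Phi_n^{-1}(\Phi_n(\mathbf{u}))|>1$ is exactly the event that $(\mathbf{u}'-\mathbf{u})\mathbf{G}_n=\mathbf{0}$ for some $\mathbf{u}'\in\mathcal{U}_n$ with $\mathbf{u}'\neq\mathbf{u}$. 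First I would apply a union bound over such $\mathbf{u}'$, reducing the problem to two ingredients: (i) computing $\PP\big(\mathbf{v}\mathbf{G}_n=\mathbf{0}\big)$ for a fixed nonzero $\mathbf{v}\in\ZZ_{p^r}^{k_n}$, and (ii) counting how many difference vectors $\mathbf{u}'-\mathbf{u}$ with $\mathbf{u}'\in\mathcal{U}_n$ give each value of that probability. The argument is carried out with $\mathbf{u}$ fixed and the estimates made uniform over $\mathbf{u}\in\mathcal{U}_n$, so it applies equally to a randomly chosen $\mathbf{U}$.

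For (i), the key point is that $\ZZ_{p^r}$ is not a field. Given a nonzero $\mathbf{v}$, let $s\in[0:r-1]$ be the minimum $p$-adic valuation among its entries, i.e.\ the largest $s$ with $\mathbf{v}\in H_s^{k_n}=(p^s\ZZ_{p^r})^{k_n}$, and write $\mathbf{v}=p^s\mathbf{w}$ where $\mathbf{w}$ has at least one entry that is a unit of $\ZZ_{p^r}$. For a single column $\mathbf{g}$ of $\mathbf{G}_n$ (uniform on $\ZZ_{p^r}^{k_n}$), conditioning on every coordinate of $\mathbf{g}$ except the one multiplying that unit entry shows $\mathbf{w}\cdot\mathbf{g}$ is uniform on $\ZZ_{p^r}$, hence $\mathbf{v}\cdot\mathbf{g}=p^s(\mathbf{w}\cdot\mathbf{g})$ is uniform on $H_s$, a set of size $p^{\,r-s}$. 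By independence of the $n$ columns, $\PP\big(\mathbf{v}\mathbf{G}_n=\mathbf{0}\big)=p^{-n(r-s)}$.

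For (ii), I would use that $\mathbf{u}'-\mathbf{u}\in H_s^{k_n}$ if and only if $[\mathbf{u}']_s=[\mathbf{u}]_s$ coordinatewise, which follows from the distributive property of $[\cdot]_s$ (namely $a'-a\in p^s\ZZ_{p^r}\iff[a']_s=[a]_s$). Thus every $\mathbf{u}'\neq\mathbf{u}$ has a well-defined minimal valuation $s\in[0:r-1]$ for $\mathbf{u}'-\mathbf{u}$ and is counted among those $\mathbf{u}'\in\mathcal{U}_n$ with $[\mathbf{u}']_s=[\mathbf{u}]_s$, so the union bound becomes $\PP\big(|\Phi_n^{-1}(\Phi_n(\mathbf{u}))|>1\big)\le\sum_{s=0}^{r-1}N_s(\mathbf{u})\,p^{-n(r-s)}$ with $N_s(\mathbf{u})=\big|\{\mathbf{u}'\in\mathcal{U}_n:[\mathbf{u}']_s=[\mathbf{u}]_s\}\big|$. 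To estimate $N_s(\mathbf{u})$ I would exploit the product structure $\mathcal{U}_n=\bigotimes_{i=1}^m A_\epsilon^{(k_i)}(U_i)$ with $k_i=q_ik_n$: if $\mathbf{u}\in\mathcal{U}_n$, then inside block $i$ the projection $[\mathbf{u}]_s$ is itself a typical sequence for $[U_i]_s$, so standard conditional-typicality counting limits the number of sequences in $A_\epsilon^{(k_i)}(U_i)$ sharing its $[\cdot]_s$-projection to $2^{k_i(H(U_i|[U_i]_s)+\xi_i)}$ with $\xi_i\to0$ as $\epsilon\to0$; multiplying over the $m$ blocks and using $\sum_i q_i H(U_i|[U_i]_s)=\sum_i q_i H(U|[U]_s,Q=i)=H(U|[U]_s,Q)$ gives $N_s(\mathbf{u})\le 2^{k_n(H(U|[U]_s,Q)+\epsilon'')}$, uniformly over $\mathbf{u}\in\mathcal{U}_n$, where $\epsilon''$ is continuous in $\epsilon$ and vanishes as $\epsilon\to0$.

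Putting the pieces together, with $k_n=cn$ and $p^{-n(r-s)}=2^{-n(r-s)\log_2 p}$, the collision probability is at most $\sum_{s=0}^{r-1}2^{\,n(c(H(U|[U]_s,Q)+\epsilon'')-(r-s)\log_2 p)}$, uniformly in $\mathbf{u}$. By hypothesis $c\,H(U|[U]_s,Q)<(r-s)\log_2 p$ for all $s\in[0:r-1]$, so for $\epsilon$ (hence $\epsilon''$) small enough each of these finitely many exponents is strictly negative and the sum tends to $0$; choosing $N$ so that it drops below $\delta$ for all $n>N$ completes the argument, since $|\Phi_n^{-1}(\Phi_n(\mathbf{U}))|\ge1$ always. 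I expect the main obstacle to be step (ii): making the conditional-typicality count uniform over all typical $\mathbf{u}$ and compatible with the block-product structure of $\mathcal{U}_n$, so that the split by valuation $s$ matches precisely the single-letter quantities $H(U|[U]_s,Q)$. The probability computation in (i) is the other non-routine ingredient, since over $\ZZ_{p^r}$ the answer genuinely depends on the minimal valuation of the difference vector rather than being a fixed negative power of $p$ per symbol as it would be over a field.
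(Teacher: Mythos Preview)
Your proposal is correct and follows essentially the same route as the paper: a union bound over $\mathbf{u}'\neq\mathbf{u}$, stratification by the maximal $s$ with $\mathbf{u}'-\mathbf{u}\in H_s^{k_n}$, the kernel probability $p^{-n(r-s)}$ (the paper cites this as Lemma~\ref{lem: P(phi)}; you prove it inline), and the count $|\mathcal{U}_n\cap(\mathbf{u}+H_s^{k_n})|\le 2^{k_n(H(U|Q,[U]_s)+\epsilon')}$ via the block-product structure (the paper packages this as Lemma~\ref{lem: typical set intersection subgroup}). The combination and conclusion are identical.
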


\begin{proof}
The proof is provided in Appendix \ref{APP: proof of lemma injective map}.
\end{proof}

In the case of linear codes ($r=0$), suppose $\Phi_n$ is  the map induced by an $(n,k)$-linear code. If $k\geq n$, then the inverse image of any vector by the map $\Phi_n$ has more than one candidate. However, based on Lemma \ref{lem: injective map}, this is not the case for a the map induced by a  $(n,k)$-QGC. %Using Lemma \ref{lem: injective map},  a necessary condition for injectivity of this map is $\frac{k_n}{n}H(U|[U]_s,Q) < (r-s) \log_2p - \epsilon'$ for $0\leq s\leq r-1$.

In our earlier work, we considered a special class of QGCs which is called transversal group codes  \cite{ISIT15_transversal}. 
% We derived their asymptotic performance limits for PtP as well as certain multi-terminal problems such as distributed source coding and computation over MAC. Strict improvements using transversal group codes over group codes and unstructured codes were also shown for these problems.

\begin{definition}[Transversal Group Codes]
 An $(n,k_1,k_2, \dots, k_r)$-transversal group code over $\ZZ_{p^r}$ is defined as 
$$\mathcal{C}=\Big \{ \sum_{s=1}^r \mathbf{u}_s \mathbf{G}_s+\mathbf{b}: \mathbf{u}_s \in T_s^{k_s}, s\in [1:r] \Big \},$$ where  $T_s=[0:p^s-1]$, $\mathbf{b}\in \ZZ^n_{p^r}$ and $\mathbf{G}_s$ is a $k_s \times n$ matrix with elements in $\ZZ_{p^r}$.
\end{definition}
A transversal group code is a code created by removing a certain set of codewords from a group code. Based our results for transversal group codes, we introduce QGCs.  
\section{Properties of Quasi Group Codes}\label{sec: Properties of QGC}
It is known that if $\mathcal{C}$ is a random unstructured codebook, then $|\mathcal{C}+\mathcal{C}|\approx|\mathcal{C}|^2$ with high probability. Group codes on the other hand are closed under the addition, which means $|\mathcal{C}+\mathcal{C}|=|\mathcal{C}|$. Comparing to unstructured codes, when the structure of the group codes matches with that of a multi-terminal channel/source coding problem, it turns out that higher/lower transmission rates are obtained. However, in certain problems, the structure of the group codes is too restrictive. More precisely, when the underlying group is $\ZZ_{p^r}$ for $r\geq 2$, there are several nontrivial subgroups. These subgroups cause a penalty on the rate of a group code. This results in lower transmission rates in channel coding and higher transmission rates in source coding. 

Quasi group codes balance the trade-off between the structure of the group codes and that of the unstructured codes. More precisely, when $\mathcal{C}$ is a QGC, then $|\mathcal{C}+\mathcal{C}|$ is a number between $|\mathcal{C}|$ and $|\mathcal{C}|^2$. This results in a more flexible algebraic structure to match better with the structure of the channel or source. This trade-off is shown more precisely in the following lemma. 

\begin{lem}\label{lem:sum of two quasi group code}
Let $\mathcal{C}_i, i=1,2$ be an $(n,k_i)$-QGC over $\ZZ_{p^r}$ with random variables $(U_i, Q)$. Consider the joint distribution among $(U_1, U_2, Q)$ that is consistent with marginals $(U_1, Q)$ and $(U_2, Q)$, and that satisfies the Markov chain  $U_1 \leftrightarrow Q \leftrightarrow U_2$.
%
%Let $\mathcal{C}$ and $\mathcal{C}'$ be two  $(n,k)$-QGCs with random variables $(U_i,Q), i=1,2$, respectively. Suppose  $\mathcal{C}$ and $\mathcal{C}'$ have an identical matrix and translation with elements chosen randomly and uniformly over $\ZZ_{p^r}$. Then for large enough $n$, with probability one, the followings hold:%. For $i\in [1:m]$, let $U_i$ and $U'_i$ be the random variables corresponding to $\mathcal{C}$ and $\mathcal{C}'$, respectively. Then, for large enough $n$ and for any $l \in \ZZ_{p^r}$, the followings hold:
\begin{enumerate} 
\item Suppose $k_1=k_2=k$, and let $\mathcal{D}$ be an $(n, k)$-QGC with random variable $(U_1+U_2, Q)$. The generator matrices of $\mathcal{C}_1,\mathcal{C}_2$ and $\mathcal{D}$ are identical. Suppose $(\mathbf{U}_1, \mathbf{U}_2)$ are chosen randomly and uniformly from $\mathcal{U}_1 \times \mathcal{U}_2$. Let $\mathbf{X}_i$ be the codewrod of $\mathcal{C}_i$ corresponding to $\mathbf{U}_i, i=1,2$. Then, for all $\epsilon>0$ and all sufficiently large n, $$P\{\mathbf{X}_1+\mathbf{X}_2 \in \mathcal{D}\}\geq 1-\delta(\epsilon),$$
where $\delta(\epsilon)\rightarrow 0$ as $\epsilon \rightarrow 0$.
%\item {\color{red} Suppose $k_1=k_2=k$, and  the generator matrices $\mathcal{C}_1$ and $\mathcal{C}_2$ are identical. Let $m_q=kP(Q=q), \forall q\in \mathcal{Q}$. Divide each sequence of length $k$ into $m_q, q\in \mathcal{Q}$ segments.  By $\mathbf{a}_{q}$ denote the $q$th segment of a sequence $\mathbf{a}$. Suppose $( \mathbf{u}_{1,m_q}, \mathbf{u}_{2,m_q})$ are generated randomly according to $P_{U_1|Q}(\cdot | q) P_{U_2|Q}(\cdot| q) ~ \forall q\in \mathcal{Q}$.  Let $\mathcal{D}$ denote an $(n,k)$-QGC with random variables $(U_1 + U_2, Q)$.  If $\mathbf{x}_i$ is the corresponding codeword to $\mathbf{u}_i, i=1,2$, then $ \mathbf{x}_1 + \mathbf{x}_2 \in \mathcal{D}$ with probability one.}

\item $\mathcal{C}_1+ \mathcal{C}_2$ is an  $(n,k_1+k_2)$-QGC with random variables $\left(U_I, (Q,I)\right)$, where $I\in \{1,2\}$. If $I=i$, then $U_I=U_i, i=1,2$. In addition,  $P(I=i, Q=q, U_I=a)=\frac{k_i}{k_1+k_2} P(Q=q) P(U_i=a | Q=q)$, for all $a\in \ZZ_{p^r}, q\in \mathcal{Q}$ and $i=1,2$.
%
%Conditioned on $I=1$, the random variable $T=Q$ with probability one. Given $I=2$, we have $T=Q'$ with probability one. The conditional distribution of $V$ given $T$ and $I$ is described by $P(V=b| T=q, I=1)=P(U=b|Q=q)$, and $P(V=b| T=q, I=2)=P(U'=b|Q'=q)$ for all $b\in \ZZ_{p^r}$.
%\item $ \max\{| \mathcal{C}_1|, |\mathcal{C}_2| \} \leq |\mathcal{C}_1+ \mathcal{C}_2| \leq \min\{p^{rn}, |\mathcal{C}_1| \cdot |  \mathcal{C}_2|\} $.
\end{enumerate}
\end{lem}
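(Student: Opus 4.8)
The plan is to reduce both parts to elementary properties of typical sets, exploiting the product (block) structure of the index sets. Recall from the construction of $\mathcal{U}$ that the index set of an $(n,k)$-QGC with random variables $(V,W)$ partitions the $k$ coordinates into blocks indexed by $w\in\mathcal{W}$, the block labelled $w$ having length $P_W(w)k$, and equals $\bigotimes_{w\in\mathcal{W}}A_\epsilon^{(P_W(w)k)}(V\mid W=w)$. Since $P_Q(q)>0$ for every $q$ and $k_n=cn$, every block length grows linearly in $n$, so the standard typicality estimates apply inside each block; in particular the index set of $\mathcal{C}_i$ is $\mathcal{U}_i=\bigotimes_{q}A_\epsilon^{(k_{i,q})}(U_i\mid Q=q)$ with $k_{i,q}=P_Q(q)k_i$.

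\emph{Part 1.} I would argue block by block. Since $k_1=k_2$, the $q$-blocks of $\mathcal{U}_1$ and $\mathcal{U}_2$ share a common length $k_q:=P_Q(q)k_1$, and on this block $\mathbf{U}_{1,q}$ is uniform on $A_\epsilon^{(k_q)}(U_1\mid Q=q)$ and, independently, $\mathbf{U}_{2,q}$ is uniform on $A_\epsilon^{(k_q)}(U_2\mid Q=q)$. The Markov chain $U_1\leftrightarrow Q\leftrightarrow U_2$ makes the joint law against which we test typicality on this block the product $P_{U_1\mid Q=q}\times P_{U_2\mid Q=q}$, whose sum-marginal equals $P_{U_1+U_2\mid Q=q}$. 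I would then invoke the standard conditional/joint typicality lemma (see \cite{Csiszar}): an independently and uniformly chosen member of a typical set is jointly typical with a fixed member of an independent typical set with probability tending to one as the blocklength grows — intuitively because the number of $\mathbf{u}_2\in A_\epsilon^{(k_q)}(U_2\mid Q=q)$ that are jointly typical with a given $\mathbf{u}_1$ is $2^{k_q H(U_2\mid Q=q)+o(k_q)}$, which is essentially all of the typical $\mathbf{u}_2$. This gives $(\mathbf{U}_{1,q},\mathbf{U}_{2,q})\in A_{\epsilon'}^{(k_q)}(U_1,U_2\mid Q=q)$ except with probability $\delta_q(\epsilon)\to0$; a union bound over the finitely many $q\in\mathcal{Q}$ handles all blocks simultaneously, with total error $\delta(\epsilon):=\sum_q\delta_q(\epsilon)$. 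On that event the identity $N(a\mid\mathbf{u}_1+\mathbf{u}_2)=\sum_{b+c=a}N(b,c\mid\mathbf{u}_1,\mathbf{u}_2)$, together with $\sum_{b+c=a}P_{U_1\mid Q=q}(b)P_{U_2\mid Q=q}(c)=P_{U_1+U_2\mid Q=q}(a)$, shows that $\mathbf{U}_{1,q}+\mathbf{U}_{2,q}$ is $\epsilon'$-typical with respect to $U_1+U_2\mid Q=q$; moreover no symbol of zero probability can occur in it, since a vanishing product forces a vanishing factor, and that symbol is then already excluded from $\mathbf{U}_{1,q}$ or $\mathbf{U}_{2,q}$. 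Reassembling the blocks, $\mathbf{U}_1+\mathbf{U}_2$ lies in the index set of $\mathcal{D}$, and since the generator matrix is shared and the dither of $\mathcal{D}$ may be taken equal to the sum of those of $\mathcal{C}_1$ and $\mathcal{C}_2$, we get $\mathbf{X}_1+\mathbf{X}_2=(\mathbf{U}_1+\mathbf{U}_2)\mathbf{G}+\mathbf{b}_1+\mathbf{b}_2\in\mathcal{D}$ with probability at least $1-\delta(\epsilon)$.

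\emph{Part 2.} This is a bookkeeping identity. Stack the generator matrices on top of each other into a single $(k_1+k_2)\times n$ matrix $\mathbf{G}$ over $\ZZ_{p^r}$ and write an index as $\mathbf{u}=(\mathbf{u}_1,\mathbf{u}_2)\in\ZZ_{p^r}^{k_1+k_2}$, so that $\mathbf{u}\mathbf{G}=\mathbf{u}_1\mathbf{G}_1+\mathbf{u}_2\mathbf{G}_2$ and hence $\mathcal{C}_1+\mathcal{C}_2=\{\mathbf{u}\mathbf{G}+(\mathbf{b}_1+\mathbf{b}_2):\mathbf{u}\in\mathcal{U}_1\times\mathcal{U}_2\}$. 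It remains to recognize $\mathcal{U}_1\times\mathcal{U}_2$ as the index set of the QGC with parameter $k_1+k_2$ and random variables $(U_I,(Q,I))$. For that pair the label alphabet is $\mathcal{Q}\times\{1,2\}$ with weights $P((Q,I)=(q,i))=\frac{k_i}{k_1+k_2}P_Q(q)$, so the block labelled $(q,i)$ has length $\frac{k_i}{k_1+k_2}P_Q(q)\cdot(k_1+k_2)=P_Q(q)k_i=k_{i,q}$, which is exactly the length of the $q$-block of $\mathcal{U}_i$, and the type constraint on that block is $\epsilon$-typicality with respect to $U_I\mid(Q,I)=(q,i)$, i.e.\ with respect to the law of $U_i\mid Q=q$. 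Therefore $\bigotimes_{(q,i)}A_\epsilon(U_I\mid(Q,I)=(q,i))=\left(\bigotimes_q A_\epsilon(U_1\mid Q=q)\right)\times\left(\bigotimes_q A_\epsilon(U_2\mid Q=q)\right)=\mathcal{U}_1\times\mathcal{U}_2$, so $\mathcal{C}_1+\mathcal{C}_2$ is precisely the asserted QGC.

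I expect the main obstacle to be the joint-typicality step in Part 1: one must ensure the per-block failure probabilities genuinely vanish — which hinges on each block length $P_Q(q)k_n$ growing without bound, guaranteed by $P_Q(q)>0$ and $k_n=cn$ — and that the chain of typicality parameters $\epsilon\to\epsilon'$ closes up. Consistent with the paper's convention of leaving the typicality parameter of a QGC unspecified, $\mathcal{D}$ here should be understood to carry a parameter $\epsilon'>\epsilon$ with $\epsilon'\to0$ as $\epsilon\to0$, which is what makes $\delta(\epsilon)\to0$ meaningful. Part 2, by contrast, is a direct unwinding of the definitions once the block lengths are matched.
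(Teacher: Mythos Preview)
Your proposal is correct and follows essentially the same route as the paper. For Part~2 the arguments are identical: stack the generator matrices, take $\mathbf{b}_1+\mathbf{b}_2$ as the dither, and match the blocks of $\mathcal{U}_1\times\mathcal{U}_2$ to the $(q,i)$-blocks of the index set associated with $(U_I,(Q,I))$. For Part~1 the paper also reduces to showing $\mathbf{U}_1+\mathbf{U}_2$ lands in the index set $\mathcal{U}_d$ of $\mathcal{D}$ with high probability and then pushes this through the shared generator matrix; the only difference is that the paper dispatches this step by citing ``standard arguments on typical sets'' (after noting via its Lemma on sums of typical sets that $\mathcal{U}_d\subseteq\mathcal{U}_1+\mathcal{U}_2$, hence $\mathcal{D}\subseteq\mathcal{C}_1+\mathcal{C}_2$), whereas you unpack that standard argument block by block via joint typicality and the counting identity for the empirical distribution of a sum. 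Your added care about the growth of each block length $P_Q(q)k$ and about allowing $\mathcal{D}$ a slightly relaxed typicality parameter $\epsilon'>\epsilon$ is exactly what makes the paper's one-line appeal rigorous.
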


\begin{proof}
Using \eqref{eq: QGC codebook}, suppose $\mathcal{U}_i$ is the index set, $\mathbf{G}_i$ is the matrix, and $\mathbf{b}_i$ is the translation of $\mathcal{C}_i, i=1,2$.  For the first statement, since $k_1=k_2$ and $\mathbf{G}_1=\mathbf{G}_2$, then $\mathbf{X}_i=\mathbf{U}_i \mathbf{G}+\mathbf{b}_i, i=1,2$. With this notation, $\mathbf{X}_1 + \mathbf{X}_2 =( \mathbf{U}_1+\mathbf{U}_2) \mathbf{G}+\mathbf{b}_1+\mathbf{b}_2$. By definition, $\mathcal{U}_i$ is the product of typical sets as in \eqref{eq: U}. By $\mathcal{U}_d$ denote the index set of $\mathcal{D}$. By Lemma \ref{lem: sum of typical sets }, $\mathcal{U}_d \subseteq (\mathcal{U}_1+\mathcal{U}_2)$. Thus, $\mathcal{D} \subseteq (\mathcal{C}_1+\mathcal{C}_2)$.  Since $\mathbf{U}_1, \mathbf{U}_2$ are independent random variables with uniform distribution over $\mathcal{U}_1 \times \mathcal{U}_2$, then $\mathbf{U}_1+\mathbf{U}_2 \in \mathcal{U}_d$ with probability at least $(1-\delta(\epsilon))$. This follows from standard arguments on typical sets \cite{ElGamal-book}. As a result, $\mathbf{X}_1+\mathbf{X}_2 \in \mathcal{D}$ with probability at least $(1-\delta(\epsilon))$.

% we have $\mathcal{C}_1+\mathcal{C}_2 = \{\mathbf{u} \mathbf{G}_1+\mathbf{b}_1+\mathbf{b}_2: \mathbf{u} \in \mathcal{U}_1+\mathcal{U}_2 \}$. Note that $\mathcal{U}_1+ \mathcal{U}_2$ is equal to the Cartesian product of $\epsilon$-typical sets of $U_{1,q}+U_{2, q}, q\in \mathcal{Q}$.  Hence, $\mathcal{C}_1+\mathcal{C}_2$ is an $(n,k)$-QGC with random variables $(U_1+U_2,Q)$, matrix $\mathbf{G}_1$, and translation $\mathbf{b}_1+\mathbf{b}_2$. 

For the second statement, we have 
\begin{align*}
\mathcal{C}_1+\mathcal{C}_2 = \{[\mathbf{u}_1, \mathbf{u}_2] \begin{bmatrix}
\mathbf{G}_1\\ \mathbf{G}_2 \end{bmatrix}+\mathbf{b}_1+\mathbf{b}_2: \mathbf{u}_i \in \mathcal{U}_i, i=1,2 \}.
\end{align*}
Therefore,  $\mathcal{C}_1+\mathcal{C}_2$ is an $(n, k_1+k_2)$-QGC. Note that $\mathcal{U}_1 \times \mathcal{U}_2$ is the index set associated with this codebook. The statement follows, since each subset $\mathcal{U}_i, i=1,2$ is a Cartesian product of $\epsilon$-typical sets of $U_{i, q}, q\in \mathcal{Q}$. The random variables $(U_I, (Q,I))$ describes such a Cartesian product. 

For the third statement, the inequalities follow from standard counting arguments. 
\end{proof}
{ We explain the intuition behind the lemma. Suppose  $\mathcal{C}_1, \mathcal{C}_2$ and $\mathcal{D}$ are  QGCs with identical generator matrices and with random variables $U_1, U_2$ and $U_1+U_2$, respectively. Then $\mathcal{D} = \mathcal{C}_1+\mathcal{C}_2$ with probability approaching one. }
\begin{remark}
If $\mathcal{C}_1$ and $\mathcal{C}_2$ are the QGCs as in Lemma \ref{lem:sum of two quasi group code}, then from standard counting arguments we have  $$ \max\{| \mathcal{C}_1|, |\mathcal{C}_2| \} \leq |\mathcal{C}_1+ \mathcal{C}_2| \leq \min\{p^{rn}, |\mathcal{C}_1| \cdot |  \mathcal{C}_2|\} $$
\end{remark}

%\begin{remark}
%The third statement in Lemma \ref{lem:sum of two quasi group code}, 
%Note that QGCs ca
%Note that the inequalities in 3) in Lemma \ref{lem:sum of two quasi group code} are tight. For example, let $\mathcal{C}_1$ and $\mathcal{C}_2$ be two QGCs over $\ZZ_{4}$ with parameters as in the first statement. Equality in the left-hand side holds, if $U_1$ and $U_2$ are uniform over $\ZZ_{p^r}$. As for the right-hand side, suppose the generator matrix and translations of $\mathcal{C}_1$ and $\mathcal{C}_2$ are selected randomly and uniformly from $\ZZ_4$. Let $U_1$ be uniform over $\{0,2\}$, and $U_2$ be uniform over $\{0,1\}$. Then  $U_1+U_2$ is uniform over $\ZZ_4$, and  $H(U_1+U_2)=2$. Note $H(U_1)=H(U_2)=1$. Hence, using Remark \ref{rem: size of a random QGC}, with probability close to one, we have
%$\frac{1}{n}\log_2 |\mathcal{C}_1+\mathcal{C}_2| \geq \frac{1}{n}\log_2 |\mathcal{C}_1||\mathcal{C}_2| - 2\epsilon'$.
%\end{remark}

In what follows, we derive a packing bound and a covering bound for a QGC with matrices and translation chosen randomly and uniformly. Fix a PMF $P_{XY}$, and suppose an $\epsilon$-typical sequence $\mathbf{y}$ is given with respect to the marginal distribution $P_Y$.  Consider the set of all codewords in a QGC that are jointly typical with $\mathbf{y}$ with respect to $P_{XY}$. In the packing lemma, we characterize the conditions under which the probability of this set is small. This implies the existence of a  ``good-channel" code which is also a QGC. In the covering lemma, we derive the conditions for which, with high probability, there exists at least one such codeword in a QGC. In this case a ``good-source" code exists which is also a QGC. These conditions are provided in the next two lemmas. 

For any positive integer $n$, let $k_n=cn$, where  $c>0$ is a constant. Let $\mathcal{C}_n$ be a sequence of $(n,k_n)$-QGCs with random variables $(U,Q)$, $\epsilon>0$. By $R_n$ denote the rate of $\mathcal{C}_n$. Suppose the elements of the generator matrix and the translation of $\mathcal{C}_n$ are chosen randomly and uniformly from $\ZZ_{p^r}$.

\begin{lem}[Packing]\label{lem: packing}
Let $(X,Y)\sim P_{XY}$.  By $\mathbf{c}_n(\theta)$ denote the $\theta$th codeword of $\mathcal{C}_n$. Let $\tilde{\mathbf{Y}}^n$ be a random sequence distributed according to $\prod_{i=1}^n P_{Y|X}(\tilde{y}_i|c_{n, i}(\theta))$. Suppose, conditioned on $\mathbf{c}_n(\theta)$, $\tilde{\mathbf{Y}}^n$ is independent of all other codewords in $\mathcal{C}_n$. Then, for any $\theta \in [1:|\mathcal{C}_n|]$, and $\delta>0$, $\exists N>0$ such that for all $n>N$, $$P\{\exists \mathbf{x}\in \mathcal{C}_n: (\mathbf{x}, \tilde{\mathbf{Y}}^n)\in A_{\epsilon}^{(n)} (X,Y), \mathbf{x}\neq \mathbf{c}_n(\theta)\}<\delta,$$ if the following bounds hold
 \begin{align}\label{eq: packing bound}
R_n < \min_{0 \leq s\leq r-1} \frac{H(U|Q)}{H(U|Q,[U]_s)}\big( \log_2p^{r-s}-H(X|Y[X]_s) +\eta(\epsilon)\big),
\end{align}
where $\eta(\epsilon)\rightarrow 0$ as $\epsilon \rightarrow 0$ .
\end{lem}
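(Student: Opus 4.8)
The plan is a union bound over the index set $\mathcal{U}_n$, stratified by the chain $\ZZ_{p^r}=H_0\supset H_1\supset\cdots\supset H_r=\{0\}$, exploiting that the only statistical dependence among QGC codewords comes from the order type of the difference of their indices. Fix $\theta$ and argue conditionally on $\mathbf{c}_n(\theta)=\mathbf{c}$; by hypothesis $\tilde{\mathbf{Y}}^n$ is then independent of every other codeword. Writing $\mathbf{c}_n(\theta)=\mathbf{u}_\theta\mathbf{G}_n+\mathbf{b}_n$, conditioning on $\mathbf{c}_n(\theta)=\mathbf{c}$ leaves $\mathbf{G}_n$ uniform on $\ZZ_{p^r}^{k_n\times n}$ (the dither absorbs the conditioning), so any other codeword equals $\mathbf{x}=\mathbf{c}+\mathbf{v}\mathbf{G}_n$ with $\mathbf{v}=\mathbf{u}-\mathbf{u}_\theta\neq\mathbf{0}$, $\mathbf{u}\in\mathcal{U}_n$. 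Partition $\mathcal{U}_n\setminus\{\mathbf{u}_\theta\}$ into classes $\mathcal{V}_s$, $s\in[0:r-1]$, with $\mathbf{u}\in\mathcal{V}_s$ iff $\mathbf{v}\in H_s^{k_n}\setminus H_{s+1}^{k_n}$, equivalently $[\mathbf{u}]_s=[\mathbf{u}_\theta]_s$ but $[\mathbf{u}]_{s+1}\neq[\mathbf{u}_\theta]_{s+1}$. For $\mathbf{u}\in\mathcal{V}_s$ one has $\mathbf{v}=p^s\mathbf{v}'$ with at least one entry of $\mathbf{v}'$ a unit of $\ZZ_{p^r}$; since multiplying a uniform row of $\mathbf{G}_n$ by a unit is a bijection, $\mathbf{v}'\mathbf{G}_n$ is uniform on $\ZZ_{p^r}^n$, so $\mathbf{v}\mathbf{G}_n$ is uniform on $H_s^n$. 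Therefore, given $\mathbf{c}_n(\theta)=\mathbf{c}$, the codeword $\mathbf{x}$ is uniform over the coset $\mathbf{c}+H_s^n=\{\mathbf{x}':[\mathbf{x}']_s=[\mathbf{c}]_s\}$, of size $p^{(r-s)n}$, and independent of $\tilde{\mathbf{Y}}^n$.

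Next I would bound, for a single $\mathbf{u}\in\mathcal{V}_s$, the probability that its codeword is jointly typical with $\tilde{\mathbf{Y}}^n$. Since $\mathbf{x}$ is uniform on that coset and independent of $\tilde{\mathbf{Y}}^n$ given $\mathbf{c}$,
\begin{align*}
P\bigl\{(\mathbf{x},\tilde{\mathbf{Y}}^n)\in A_\epsilon^{(n)}(X,Y)\;\big|\;\mathbf{c}_n(\theta)=\mathbf{c}\bigr\}\;\le\;\frac{1}{p^{(r-s)n}}\,\max_{\mathbf{y}}\,\bigl|\{\mathbf{x}':[\mathbf{x}']_s=[\mathbf{c}]_s,\ (\mathbf{x}',\mathbf{y})\in A_\epsilon^{(n)}(X,Y)\}\bigr|.
\end{align*}
As $[\,\cdot\,]_s$ is deterministic, joint typicality of $(\mathbf{x}',\mathbf{y})$ forces $([\mathbf{x}']_s,\mathbf{y})$ to be jointly $([X]_s,Y)$-typical, so the inner cardinality vanishes unless $([\mathbf{c}]_s,\mathbf{y})$ is $([X]_s,Y)$-typical, and in that case is at most $2^{n(H(X|Y[X]_s)+\delta(\epsilon))}$ by the conditional-typicality counting bound; in all cases it is $\le 2^{n(H(X|Y[X]_s)+\delta(\epsilon))}$, uniformly in $\mathbf{c},\mathbf{y}$. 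Hence $P\{(\mathbf{x},\tilde{\mathbf{Y}}^n)\in A_\epsilon^{(n)}(X,Y)\}\le 2^{-n(\log_2 p^{r-s}-H(X|Y[X]_s)-\delta(\epsilon))}$ for every $\mathbf{u}\in\mathcal{V}_s$, and this survives averaging over $\mathbf{c}_n(\theta)$.

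It remains to bound $|\mathcal{V}_s|\le|\{\mathbf{u}\in\mathcal{U}_n:[\mathbf{u}]_s=[\mathbf{u}_\theta]_s\}|$. Because $\mathcal{U}_n$ is a Cartesian product over $q\in\mathcal{Q}$ of $\epsilon$-typical sets of the conditional laws $P_{U|Q=q}$, and $[\mathbf{u}_\theta]_s$ is the corresponding product of their ($\epsilon$-typical) images under $[\,\cdot\,]_s$, the conditional-typicality counting bound applied block-by-block gives $|\mathcal{V}_s|\le 2^{k_n(H(U|Q,[U]_s)+\epsilon_1(\epsilon))}$, with $\sum_q(k_q/k_n)H(U|[U]_s,Q=q)\to H(U|Q,[U]_s)$. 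Combining the last two steps and summing over $s$,
\begin{align*}
P\bigl\{\exists\,\mathbf{x}\in\mathcal{C}_n:\ (\mathbf{x},\tilde{\mathbf{Y}}^n)\in A_\epsilon^{(n)}(X,Y),\ \mathbf{x}\neq\mathbf{c}_n(\theta)\bigr\}\;\le\;\sum_{s=0}^{r-1}2^{\,n\left(c\,H(U|Q,[U]_s)-\log_2 p^{r-s}+H(X|Y[X]_s)+\nu(\epsilon)\right)},
\end{align*}
where $\nu(\epsilon)>0$ and $\nu(\epsilon)\to 0$; the right side tends to $0$ as soon as $c\,H(U|Q,[U]_s)<\log_2 p^{r-s}-H(X|Y[X]_s)-\nu(\epsilon)$ for every $s\in[0:r-1]$. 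Since $H(X|Y[X]_s)\ge 0$, these inequalities imply the hypothesis $c\,H(U|[U]_s,Q)<\log_2 p^{r-s}$ of Lemma~\ref{lem: injective map}, so $\Phi_n$ is injective with high probability and, by Lemma~\ref{lem: size of U} and Remark~\ref{rem: size of a random QGC}, $R_n=\frac{k_n}{n}H(U|Q)+o(1)=c\,H(U|Q)+o(1)$. Dividing each inequality by $H(U|Q)$, multiplying by $H(U|Q,[U]_s)$, and collecting all slacks into one function $\eta(\epsilon)\to 0$ converts them into \eqref{eq: packing bound}; taking the minimum over $s$ completes the proof.

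The step I expect to be the crux is the coset-counting in the second paragraph. Unlike linear codes over a field, QGC codewords are not pairwise independent, and an index difference of order type $s$ confines $\mathbf{x}$ to only $p^{(r-s)n}$ values; one must locate the right conditioning $[\mathbf{x}']_s=[\mathbf{c}]_s$ that upgrades the conditional-typicality exponent from $H(X|Y)$ to $H(X|Y[X]_s)$, and check that this cardinality bound holds for \emph{every} realization of $\mathbf{c}_n(\theta)$ — not only $P_X$-typical ones — so the dependence of $\tilde{\mathbf{Y}}^n$ on $\mathbf{c}_n(\theta)$ is harmless. A secondary delicate point, handled above via Lemma~\ref{lem: injective map}, is that the bound must be stated for the actual rate $R_n$, which equals $c\,H(U|Q)$ only in the injective regime; the rest is routine tracking of $\epsilon$-parameters.
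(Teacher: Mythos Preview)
Your proposal is correct and matches the paper's proof essentially step for step: a union bound over $\mathcal{U}_n\setminus\{\mathbf{u}_\theta\}$ stratified by the order $s$ of $\mathbf{u}-\mathbf{u}_\theta$, uniformity of $(\mathbf{u}-\mathbf{u}_\theta)\mathbf{G}_n$ on $H_s^n$ (the paper's Lemma~\ref{lem: P(phi)}), the coset--typicality count $|\mathcal{A}|\le 2^{n(H(X|Y[X]_s)+\delta)}$ (Lemma~\ref{lem: typical set intersection subgroup}, handling the case $([\mathbf{c}]_s,\mathbf{y})\notin A_\epsilon^{(n)}([X]_s,Y)$ exactly as you do), and the bound $|\mathcal{V}_s|\le 2^{k_n(H(U|Q,[U]_s)+\delta)}$. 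The only cosmetic differences are that the paper averages over the dither directly instead of conditioning on $\mathbf{c}_n(\theta)$, and for the final conversion to $R_n$ it simply uses $R_n\le cH(U|Q)+\epsilon'$ rather than your (also valid) detour through Lemma~\ref{lem: injective map}.
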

\begin{proof}
See Appendix \ref{sec: proof of the packing lemma}.
\end{proof}

\begin{lem}[Covering]\label{lem: covering}
 Let $(X,\hat{X})\sim P_{X \hat{X}}	$,  where $\hat{X}$ takes values from $\ZZ_{p^r}$.  Let $\mathbf{X}^n$ be a random sequence distributed according to $\prod_{i=1}^n P_X(x_i)$.  Then, for any $\delta>0$, $\exists N>0$ such that for all $n>N$,  $$ P\{ \exists \hat{\mathbf{x}} \in \mathcal{C}_{n}: (\mathbf{X}^n,\mathbf{\hat{x}})\in A_{\epsilon}^{(n)} (X,\hat{X})\} > 1-\delta$$
 if the following inequalities hold
\begin{align}\label{eq: covering bound}
R_n > \max_{1 \leq s\leq r} \frac{H(U|Q)}{H([U]_s|Q)}\big( \log_2 p^s-H([\hat{X}]_s|X) +\eta(\epsilon)\big).
\end{align}
	
\end{lem}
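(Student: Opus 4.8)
The plan is to run a second-moment (covering) argument. Condition on a realization $\mathbf{x}^n$ of $\mathbf{X}^n$ that is $\epsilon'$-typical with respect to $P_X$ for a small $\epsilon'<\epsilon$ (this holds with probability $\to 1$), and over the random generator matrix $\mathbf{G}_n$ and dither $\mathbf{b}_n$ consider
\[
\tilde N \;=\; \sum_{\mathbf{u}\in\mathcal{U}_n}\mathbbm{1}\big[(\mathbf{x}^n,\mathbf{u}\mathbf{G}_n+\mathbf{b}_n)\in A_{\epsilon}^{(n)}(X,\hat X)\big].
\]
The event $\{\tilde N\ge 1\}$ is exactly the event in the lemma, so by Paley--Zygmund it suffices to show $\mathbb{E}\tilde N\to\infty$ and $\mathbb{E}[\tilde N^2]\le(1+o(1))(\mathbb{E}\tilde N)^2$ whenever the stated rate condition holds; unconditioning on $\mathbf{X}^n$ then finishes the argument.

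First I would compute $\mathbb{E}\tilde N$. The uniform dither makes each codeword $\mathbf{u}\mathbf{G}_n+\mathbf{b}_n$ uniform on $\ZZ_{p^r}^n$, so each summand has probability equal to (number of $\hat{\mathbf{x}}$ jointly typical with $\mathbf{x}^n$)$/p^{rn}=2^{-n(\log_2 p^r - H(\hat X|X)\pm\eta(\epsilon))}$ by the usual conditional-typicality count. With Lemma~\ref{lem: size of U} giving $\tfrac1{k_n}\log_2|\mathcal{U}_n|=H(U|Q)\pm\epsilon'$ and $k_n=cn$, this yields $\mathbb{E}\tilde N=2^{\,n(\tfrac{k_n}{n}H(U|Q)-\log_2 p^r+H(\hat X|X))\pm n\,o_\epsilon(1)}$; invoking Remark~\ref{rem: size of a random QGC} to pass between $\tfrac{k_n}{n}H(U|Q)$ and $R_n$, the hypothesis at $s=r$ (where $[U]_r=U$, $[\hat X]_r=\hat X$, and the prefactor $H(U|Q)/H([U]_r|Q)$ equals $1$) is exactly what makes $\mathbb{E}\tilde N\to\infty$.

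Then comes the second moment, $\mathbb{E}[\tilde N^2]=\mathbb{E}\tilde N+\sum_{\mathbf{u}\ne\mathbf{u}'}P(\text{both jointly typical})$. The structural fact I would use is that for a uniformly random $\mathbf{G}_n$ the pair $(\mathbf{u}\mathbf{G}_n+\mathbf{b}_n,\mathbf{u}'\mathbf{G}_n+\mathbf{b}_n)$ has the law of $(\mathbf{X},\mathbf{X}+\mathbf{W})$ with $\mathbf{X}$ uniform on $\ZZ_{p^r}^n$ and $\mathbf{W}$ uniform on $H_s^n$ independent, where $p^s$ is the largest power of $p$ dividing every coordinate of $\mathbf{u}-\mathbf{u}'$, so $s\in[0:r-1]$. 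Thus the two codewords always share the same $[\cdot]_s$-image, and conditioned on one the ``high part'' of the other is uniform. Grouping off-diagonal pairs by this level $s$: for a level-$s$ pair, $P(\text{both typical})\le 2^{-n(\log_2 p^r-H(\hat X|X)-\eta)}\cdot 2^{-n(\log_2 p^{r-s}-H(\hat X|X,[\hat X]_s)-\eta)}$, the second factor being the probability that the shift --- whose $[\cdot]_s$-part is already the correct typical image --- completes to a jointly typical sequence. Counting typical completions block-by-block over the $Q$-classes, the number of $\mathbf{u}'\in\mathcal{U}_n$ with $[\mathbf{u}']_s=[\mathbf{u}]_s$ (i.e.\ at level $\ge s$, which dominates level $=s$ to exponential order) is at most $2^{k_n(H(U|[U]_s,Q)+\delta)}=|\mathcal{U}_n|\,2^{-k_n(H([U]_s|Q)-\delta')}$. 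Multiplying these and dividing by $(\mathbb{E}\tilde N)^2\doteq|\mathcal{U}_n|^2\,2^{-2n(\log_2 p^r-H(\hat X|X))}$, the level-$s$ contribution relative to $(\mathbb{E}\tilde N)^2$ carries exponent $-k_nH([U]_s|Q)+n(\log_2 p^s-H([\hat X]_s|X))+n\,o_\epsilon(1)$, which $\to-\infty$ exactly when $\tfrac{k_n}{n}H([U]_s|Q)>\log_2 p^s-H([\hat X]_s|X)$, i.e.\ (via $R_n\approx\tfrac{k_n}{n}H(U|Q)$) when $R_n>\tfrac{H(U|Q)}{H([U]_s|Q)}\big(\log_2 p^s-H([\hat X]_s|X)+\eta(\epsilon)\big)$ for each $s\in[1:r-1]$. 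Level-$0$ pairs are mutually independent, so their total contribution is $\le(\mathbb{E}\tilde N)^2$; hence $\mathbb{E}[\tilde N^2]\le(1+o(1))(\mathbb{E}\tilde N)^2$, and finally $P(\exists\hat{\mathbf{x}}\in\mathcal{C}_n:(\mathbf{X}^n,\hat{\mathbf{x}})\in A_\epsilon^{(n)})\ge P(\mathbf{X}^n\in A_{\epsilon'}^{(n)}(X))\cdot\min_{\mathbf{x}^n\ \mathrm{typ}}P(\tilde N\ge1)\to1$.

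The hard part is the second-moment accounting: first, recognizing that a pair of codewords is governed by the single parameter $s$ --- the subgroup level of $\mathbf{u}-\mathbf{u}'$ --- and that their joint-typicality probability then factors through $[\hat X]_s$; and second, the pair-counting inside the product-of-typical-sets index set, which must be done on each $Q$-class separately and is precisely what produces the ratio $H(U|Q)/H([U]_s|Q)$ and thus the $\max_{1\le s\le r}$ in the bound (with the $s=r$ term encoding $\mathbb{E}\tilde N\to\infty$). The remainder is routine typicality bookkeeping, the only real care being the simultaneous management of $\epsilon$, $\eta(\epsilon)$, $\delta$, $\epsilon'$ and the reindexing between difference-levels in $[0:r-1]$ and the statement's $\max$ over $[1:r]$.
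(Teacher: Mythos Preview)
Your proposal is correct and follows essentially the same route as the paper's proof: a second-moment argument on the count $\lambda_n(\mathbf{x})$ of jointly typical codewords, with the off-diagonal pairs stratified by the subgroup level $s$ of $\mathbf{u}-\mathbf{u}'$, the pair probability computed via Lemma~\ref{lem: P(phi)} (your ``structural fact'' about $(\mathbf{X},\mathbf{X}+\mathbf{W})$), and the pair count via Lemma~\ref{lem: typical set intersection subgroup} applied $Q$-block by $Q$-block. The only cosmetic differences are that the paper phrases the concentration step as Chebyshev rather than Paley--Zygmund, and it keeps the diagonal term as the $s=r$ summand in the variance (so the $s=r$ inequality in \eqref{eq: covering bound} arises there rather than being singled out as $\mathbb{E}\tilde N\to\infty$); the resulting exponents and the final condition are identical.
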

\begin{proof}
See Appendix \ref{sec: proof of the covering lemma}.
\end{proof}

%\begin{lem}[Covering]\label{lem: covering}
%Let $\mathcal{C}_{Out}$ be an $(n,k, l)$-UQGC with  random variables $(U,Q)$, and rate $R$. Also let $l=2^{nR_{bin}}$. Suppose the generator matrix, the translation, and the mapping of $\mathcal{C}_{out}$ are selected randomly. Let $(X,\hat{X})\sim p(x,\hat{x})$,  where $\hat{X}$ is uniform over $\ZZ_{p^r}$.  Let $\mathbf{X}^n$ be a random sequence distributed according to $\prod_{i=1}^n p(x_i)$. Then, as $n \rightarrow \infty$, $ P\{ \exists \hat{\mathbf{x}} \in \mathcal{C}_{out}: (\mathbf{X}^n,\mathbf{\hat{x}})\in A_{\epsilon'}^{(n)} (X,\hat{X})\} $
%is arbitrary close to one, if 
%\begin{align}\label{eq: covering bound}
%R_{bin} + \frac{H([U]_s|Q)}{H(U|Q)} R > \log_2 p^s-H([\hat{X}]_s|X)
%\end{align}
%holds for $1\leq s \leq r$.	
%\end{lem}
%\begin{proof}
%See Appendix \ref{sec: proof of the covering lemma}.
%\end{proof}

Lemma \ref{lem:sum of two quasi group code},  \ref{lem: packing} and Lemma \ref{lem: covering} provide a tool to derive inner bounds for achievable rates using quasi group codes in multi-terminal channel coding and source coding problems.

\section{Binning Using QGC}\label{sec: binning for QGCs}
Note that in a randomly generated QGC, all codewords have uniform distribution over $\ZZ^n_{p^r}$. However, in many communication setups we require application of codes with non-uniform distributions. In addition, we require binning techniques for various multi-terminal communications. In this section, we present a method for random binning of QGCs. In the next sections, we will use random binning of QGCs to propose coding schemes for various multi-terminal problems.

We introduce nested quasi group codes using which we propose a random binning technique. A QGC $\mathcal{C}_I$ is said to be nested in a QGC $\mathcal{C}_O$, if $\mathcal{C}_I \subset \mathcal{C}_O +\mathbf{b}$, for some translation $\mathbf{b}$. Suppose $\mathcal{C}_O$ is an $(n,k+l)$-QGC with the following structure, 
\begin{align}\label{eq: nested QGC codebook}
\mathcal{C}_O\triangleq \{\mathbf{u}\mathbf{G}+ \mathbf{v}\mathbf{\tilde{G}} +\mathbf{b}: \mathbf{u}\in \mathcal{U}, \mathbf{v}\in \mathcal{V}\},
\end{align}
where $\mathcal{U}$ and $\mathcal{V}$ are subsets of $\ZZ_{p^r}^k$, and $\ZZ_{p^r}^l$, respectively. Define the inner code as 
\begin{align*}
\mathcal{C}_I \triangleq \{\mathbf{u}\mathbf{G}+\mathbf{b}: \mathbf{u}\in \mathcal{U}\}.
\end{align*}
By Definition \ref{def: QGC}, $\mathcal{C}_I$ is an $(n,k)$-QGC. In addition $\mathcal{C}_I \subset \mathcal{C}_O+\mathbf{b}$. The pair $(\mathcal{C}_I, \mathcal{C}_O)$ is called a nested QGC.  For any fixed element $\mathbf{v}\in \mathcal{V}$, we define its corresponding bin as the set 
\begin{align}\label{eq: bin for nested QGC}
\mathcal{B}(\mathbf{v})\triangleq \{\mathbf{u}\mathbf{G}+ \mathbf{v}\mathbf{\tilde{G}} +\mathbf{b}:  \mathbf{u}\in \mathcal{U}\}.
\end{align} 
%In this situation, $\mathcal{C}_O$ is binned using $\mathcal{C}_I$ as the inner code and $\mathcal{B}(\mathbf{u})$ as the bins.

\begin{definition}\label{def: nested QGC}
An $(n, k, l)$-nested QGC is defined as a pair $(\mathcal{C}_I, \mathcal{C}_O)$, where $\mathcal{C}_I$ is an $(n,k)$-QGC, and  $\mathcal{C}_O=\{\mathbf{x}_I+ \mathbf{\bar{x}}: \mathbf{x}_I \in \mathcal{C}_I, \mathbf{\bar{x}}\in \bar{\mathcal{C}}\},$ where $ \bar{\mathcal{C}}$ is an $(n,l)$-QGC. Let the random variables corresponding to $\mathcal{C}_I$ and $\bar{\mathcal{C}}$ are $(U,Q)$ and $(V,Q)$, respectively. Then, $\mathcal{C}_O$ is characterized by $(U,V, Q)$.   
\end{definition}
In a nested QGC both the outer-code and the inner code are themselves QGCs. More precisely we have the following remark.
\begin{remark}\label{rem: nested QGC are QGC}
Let $(\mathcal{C}_I,\mathcal{C}_O)$ be an $(n,k_1,k_2)$-nested QGC with random variables $(U_1,U_2,Q)$. Suppose the joint distribution among $(U_1,U_2,Q)$ is the one that satisfies  the Markov chain $U_1 \leftrightarrow Q  \leftrightarrow U_2$. Then by Lemma \ref{lem:sum of two quasi group code} $\mathcal{C}_O$ is an $(n, k_1+k_2)$-QGC with random variables $(U_I, (Q,I))$.
\end{remark}
\begin{remark}\label{rem: nested QGC rate}
Suppose $(\mathcal{C}_I, \mathcal{C}_O)$ is an $(n, k_1,k_2)$-nested QGC with random matrices and translations. By $R_O$ and $R_I$ denote the rates of $\mathcal{C}_O$ and $\mathcal{C}_I$, respectively. Let $\rho$ denote the rate of the $\bar{\mathcal{C}}$ associated with $(\mathcal{C}_I, \mathcal{C}_O)$ as in Definition \ref{def: nested QGC}. Using Remark \ref{rem: nested QGC are QGC} and \ref{rem: size of a random QGC}, for large enough $n$, with probability close to one, $|R_O-R_I-\rho| \leq o(\epsilon)$.
\end{remark}
Intuitively, as a result of this remark, $R_O\approx R_I + \rho$. This implies that the bins $\mathcal{B}(\mathbf{v})$ corresponding to different $\mathbf{v}\in \bar{\mathcal{C}}$ are ``almost disjoint". In this method for binning, since both the inner-code and the outer-code are QGCs, the structure of the inner-code, bins and the outer-code can be determined using the PMFs of the related random variables (that is $U, V$ and $Q$ as in the definition of nested QGCs). We show that nested QGCs improve upon the previously known schemes in certain multi-terminal problems. Such codes are also used to induce non-uniform distributions on the codewords, for instance, in PtP source coding as well as channel coding. In the following, it is shown that nested QGC achieve the Shannon performance limits for PtP channel and source coding problem.

\noindent{\bf Channel Model:}
A discrete memoryless channel is  characterized by the triple $(\mathcal{X},\mathcal{Y},P_{Y|X})$, where the two finite sets $\mathcal{X}$ and $\mathcal{Y}$ are the input and output alphabets, respectively, and $P_{Y|X}$ is the channel transition probability matrix.

\begin{definition}
An $(n,\Theta)$-code for a channel $(\mathcal{X},\mathcal{Y},P_{Y|X})$ is a pair of mappings $(e,f)$ where  $e:[1: \Theta]\to \mathcal{X}^{n}$ and $f:\mathcal{Y}^n \to [1:\Theta]$. 
\end{definition}

\begin{definition}
For a given channel $(\mathcal{X},\mathcal{Y},P_{Y|X})$, a  rate $R$ is said to be achievable  if  for any $\epsilon>0$ and for all sufficiently large $n$, there  exists an $(n,\Theta)$-code  such that :
\begin{align*}
&\frac{1}{\Theta}\sum_{i=1}^{\Theta}P_{Y|X}^n(f(Y^n)\neq i|X^n=e(i))< \epsilon, \quad \frac{1}{n}\log{\Theta} > R-\epsilon.
\end{align*}
The channel capacity is defined as the supremum of all achievable rates.
\end{definition}

\noindent{\bf Source Model:}
A discrete memoryless source is a tuple $(\mathcal{X},\hat{\mathcal{X}}, P_X,d)$, where the two finite sets $\mathcal{X}$ and $\hat{\mathcal{X}}$ are the source and reconstruction alphabets,  respectively, $P_X$ is the source probability distribution, and  $d:\mathcal{X}\times\hat{\mathcal{X}}\to \mathbb{R}^+$ is the (bounded) distortion function.

\begin{definition}
An $(n,\Theta)$-code for a source $(\mathcal{X},\hat{\mathcal{X}}, P_X,d)$ is a pair of mappings $(e,f)$ where   $f:\mathcal{X}^n \to [1:\Theta]$ and $e:[1: \Theta]\to \hat{\mathcal{X}}^{n}$.
\end{definition}

\begin{definition}
For a given source $(\mathcal{X},\hat{\mathcal{X}}, P_X,d)$, a  rate-distortion pair $(R,D)$ is said to be achievable  if  for any $\epsilon>0$ and for all sufficiently large $n$, there  exists an $(n,\Theta)$-code  such that :
\begin{align*}
&\frac{1}{n}\sum_{i=1}^{n}d(X_i,\hat{X}_i)< D+\epsilon, \quad \frac{1}{n}\log{\Theta} < R+\epsilon,
\end{align*}
where $\hat{X}^n=e(f(X^n))$.
The optimal rate-distortion region is defined as the set of all achievable rate-distortion pairs.
\end{definition}

\begin{definition}
An $(n, \Theta)$-code is said to be based on nested QGCs, if there exists an $(n, k, l)$-nested QGC with random variables  $(U,V,Q)$ such that a) $\Theta=|\mathcal{V}|$, where $\mathcal{V}$ is the index set associated with the codebook $\bar{\mathcal{C}}$ (see Definition \ref{def: nested QGC}), b) for any $\mathbf{v}\in \mathcal{V}$, the output of the mapping $e(\mathbf{v})$ is in  $\mathcal{B}(\mathbf{v})$, where $\mathcal{B}(\mathbf{v)}$ is the bin associated with $\mathbf{v}$, and is defined as in  \eqref{eq: bin for nested QGC}.
\end{definition}

\begin{definition}
For a channel, a rate $R$ is said to be achievable using nested QGCs if for any $\epsilon>0$ and all sufficiently large $n$, there exists an $(n,\Theta)$-code based on nested QGCs such that: 
\begin{align*}
&\frac{1}{\Theta}\sum_{i=1}^{\Theta}P(f(Y^n)\neq i|X^n=e(i))< \epsilon, \quad \frac{1}{n}\log{\Theta}> R-\epsilon.
\end{align*}
For a source, a  rate-distortion pair $(R,D)$ is said to be achievable using nested QGSs, if  for any $\epsilon>0$ and for all sufficiently large $n$, there  exists an $(n,\Theta)$-code based on nested QGCs such that:
\begin{align*}
&\frac{1}{n}\sum_{i=1}^{n}d(X_i,\hat{X}_i)< D+\epsilon, \quad \frac{1}{n}\log{\Theta} < R+\epsilon,
\end{align*}
where $\hat{X}^n=e(f(X^n))$.
\end{definition}

\begin{lem}
The PtP channel capacity and the optimal rate-distortion region of
sources are achievable using nested QGCs.
\end{lem}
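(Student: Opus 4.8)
The plan is to realize both Shannon limits with a nested QGC $(\mathcal{C}_I,\mathcal{C}_O)$ in which the inner code $\mathcal{C}_I$ induces the desired non-uniform empirical distribution on the codewords (channel) or makes the reconstruction identifiable inside a bin (source), while the outer code $\mathcal{C}_O$ carries the decoding (channel) or the covering (source); Lemmas~\ref{lem: size of U}, \ref{lem: injective map}, \ref{lem: packing}, and~\ref{lem: covering} together with Remark~\ref{rem: nested QGC rate} all enter, and the only non-routine point is the choice of the index variables.

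\emph{Channel coding.} Fix $P^\star_X$ achieving $C$ and take the input alphabet to be $\ZZ_{p^r}$ (embedding if necessary; when $|\mathcal{X}|$ is not a prime power, enlarge $p^r$ and let $P^\star_X$ be supported on $|\mathcal{X}|$ symbols), and set $P^\star_{XY}=P^\star_X P_{Y|X}$. Take an $(n,k,l)$-nested QGC $(\mathcal{C}_I,\mathcal{C}_O)$ over $\ZZ_{p^r}$ with random generator matrices and translation. The encoder sends a message $\mathbf{v}\in\mathcal{V}$ to a codeword $e(\mathbf{v})\in\mathcal{B}(\mathbf{v})$ that is $\epsilon$-typical for $P^\star_X$; such a codeword exists with probability $\to 1$ by Lemma~\ref{lem: covering} applied to $\mathcal{C}_I$ with the source replaced by a constant and reconstruction $X\sim P^\star_X$ (a routine union bound extends this over all bins), which requires $R_I$ above the right-hand side of \eqref{eq: covering bound}, whose $s=r$ term is $\log_2 p^r-H(X)$. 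Being $P^\star_X$-typical, $e(\mathbf{v})$ lies in $\mathcal{X}^n$ and the channel sees the true joint law. The decoder outputs the bin index of the unique $\hat{\mathbf{x}}\in\mathcal{C}_O$ with $(\hat{\mathbf{x}},\mathbf{y})\in A_\epsilon^{(n)}(X,Y)$; the error probability is bounded by the transmitted codeword being atypical with $\mathbf{y}$ (vanishing by conditional typicality), another codeword of $\mathcal{C}_O$ being jointly typical with $\mathbf{y}$ (vanishing by Lemma~\ref{lem: packing}, which requires $R_O$ below the right-hand side of \eqref{eq: packing bound}, whose $s=0$ term is $\log_2 p^r-H(X|Y)$), and $\hat{\mathbf{x}}$ lying in two bins, which is excluded once the index map of $\mathcal{C}_O$ is injective by Lemma~\ref{lem: injective map}. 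By Lemma~\ref{lem: size of U} and Remark~\ref{rem: nested QGC rate} the rate is $R_O-R_I+o(\epsilon)$, so if the two terms just named are the binding ones then the rate tends to $(\log_2 p^r-H(X|Y))-(\log_2 p^r-H(X))=I(X;Y)$; a supremum over $P^\star_X$ gives $C$.

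\emph{Source coding.} Dually, fix an achievable $(R,D)$, a test channel $P^\star_{\hat X|X}$ with $\EE\, d(X,\hat X)\le D$ and $I(X;\hat X)=R(D)<R$, embed $\hat{\mathcal{X}}$ into $\ZZ_{p^r}$, and take an $(n,k,l)$-nested QGC $(\mathcal{C}_I,\mathcal{C}_O)$ over $\ZZ_{p^r}$. The encoder finds $\hat{\mathbf{x}}\in\mathcal{C}_O$ with $(\mathbf{X}^n,\hat{\mathbf{x}})\in A_\epsilon^{(n)}(X,\hat X)$ --- which exists with probability $\to 1$ by Lemma~\ref{lem: covering} applied to $\mathcal{C}_O$ (requiring $R_O$ above the right-hand side of \eqref{eq: covering bound}, whose $s=r$ term is $\log_2 p^r-H(\hat X|X)$), and which forces $\hat{\mathbf{x}}$ to be $P^\star_{\hat X}$-typical, hence to lie in $\hat{\mathcal{X}}^n$ --- and transmits the index $\mathbf{v}$ with $\mathcal{B}(\mathbf{v})\ni\hat{\mathbf{x}}$; the decoder outputs the unique $P^\star_{\hat X}$-typical codeword of $\mathcal{B}(\mathbf{v})$, which coincides with $\hat{\mathbf{x}}$ with probability $\to 1$ by Lemma~\ref{lem: packing} applied to $\mathcal{C}_I$ with the channel output replaced by a constant and $\hat X$ in the role of $X$ (this caps $R_I$ below the $s=0$ value $\log_2 p^r-H(\hat X)$). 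The scheme then has distortion $\le D+o(\epsilon)$ and rate $R_O-R_I+o(\epsilon)$, which with the stated binding terms tends to $(\log_2 p^r-H(\hat X|X))-(\log_2 p^r-H(\hat X))=H(\hat X)-H(\hat X|X)=R(D)<R$.

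\emph{The main obstacle.} The only step that is not standard is choosing the index variables $(U,Q)$ of $\mathcal{C}_I$ and $\mathcal{C}_O$ and the split $k:l$ so that, in each problem, the terms named above are the binding ones --- i.e.\ every other term of \eqref{eq: packing bound} and \eqref{eq: covering bound} is on the slack side --- while the hypothesis $H(U|[U]_s,Q)<\tfrac1c(r-s)\log_2 p$ of Lemma~\ref{lem: injective map} holds. For the full group code ($U$ uniform, $\mathcal{U}_n=\ZZ_{p^r}^{k_n}$) the multiplicative factors are $\tfrac{H(U|Q)}{H(U|Q,[U]_s)}=\tfrac{r}{r-s}$ and $\tfrac{H(U|Q)}{H([U]_s|Q)}=\tfrac{r}{s}$, which at the non-trivial indices generically make the corresponding term too small in \eqref{eq: packing bound} and too large in \eqref{eq: covering bound} --- this is exactly why group codes fail to reach $C$ or $R(D)$ over $\ZZ_{p^r}$ with $r\ge 2$ --- so the extra freedom of a QGC must be spent to bring these factors back to $1$ at the relevant levels without shrinking $H(U|Q)$. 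The natural device is a multilevel construction aligned with the subgroup chain $H_0\supset H_1\supset\cdots\supset H_r$: let the index set be a Cartesian product of $\epsilon$-typical sets, one per level (equivalently a transversal group code with independently chosen per-level rates), so that $(U,Q)$ decomposes into level components, \eqref{eq: packing bound} and \eqref{eq: covering bound} telescope through the chain rule into a successive-cancellation (resp.\ successive-refinement) decomposition, and level-by-level decoding (resp.\ encoding) with per-level rates matched to the chain-rule terms realizes the scheme; checking the hypotheses of Lemmas~\ref{lem: size of U} and~\ref{lem: injective map} for this choice is then a direct computation. Continuity in $\epsilon$ together with the freedom to enlarge $r$ and optimize $P^\star_X$ (resp.\ $P^\star_{\hat X|X}$) then delivers the whole PtP channel capacity and the entire optimal rate-distortion region.
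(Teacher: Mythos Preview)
Your overall architecture matches the paper's exactly: a nested QGC, the covering lemma on the inner code to shape codewords to $P^\star_X$ (channel) or on the outer code to quantize (source), the packing lemma on the outer code (channel) or inner code (source), and the rate identity $R\approx R_O-R_I$ from Remark~\ref{rem: nested QGC rate}. So the proposal is correct in structure.

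The one place you diverge is in your ``main obstacle'' paragraph, and there you work much harder than necessary. You correctly observe that taking $U$ uniform over $\ZZ_{p^r}$ (the full group code) leaves the $s\neq 0$ terms of \eqref{eq: packing bound} and the $s\neq r$ terms of \eqref{eq: covering bound} potentially binding, and you propose to fix this with a multilevel/transversal construction plus level-by-level successive decoding. That route is plausible but not fully justified in your sketch (the claimed ``telescoping via the chain rule'' needs an actual argument). The paper's resolution is a one-line choice: take $Q$ trivial and $U,V$ independent and \emph{uniform on $\{0,1\}$}. Since $U\in\{0,1\}\subseteq T_s$ for every $s\ge 1$, one has $[U]_s=U$ and hence $H(U\mid Q,[U]_s)=0$, so every $s\ge 1$ constraint in \eqref{eq: packing bound} is vacuous and only $s=0$ survives. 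On the covering side, $H([U]_s\mid Q)=H(U\mid Q)$ for all $s\ge 1$, so the multiplicative factor in \eqref{eq: covering bound} is identically $1$, and $\log_2 p^s-H([\hat X]_s\mid X)$ is maximized at $s=r$ because $(r-s)\log_2 p\ge H(\hat X\mid[\hat X]_s)$. The hypothesis of Lemma~\ref{lem: injective map} is likewise trivial for $s\ge 1$ under this choice. In short, the obstacle you identify is real for the full group code, but it disappears with a binary index variable; no multilevel construction or successive decoding is needed.
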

 
 \begin{proof}[Outline of the proof]
Consider a memoryless channel with input alphabet $\mathcal{X}$ and conditional distribution $P_{Y|X}$. Let the prime power $p^r$ be such that $|\mathcal{X}| \leq p^r$. Fix a PMF $P_X$ on $\mathcal{X}$, and set $l={nR}$, were $R$ will be determined later. Let $(\mathcal{C}_I, \mathcal{C}_{O})$ be an $(n,k,l)$ nested QGC with random variables $(U,V,Q)$. Let $Q$ be a trivial random variable, and $U$ and $V$ be independent with uniform distribution over $\{0,1\}$. 

Suppose the messages are drawn randomly and uniformly from $\{0,1\}^l$. Upon receiving a message $\mathbf{v}$, the encoder first calculates its bin, that is $\mathbf{B}(\mathbf{v})$. Then it finds $\mathbf{x} \in \mathcal{B}(\mathbf{v})$ such that $\mathbf{x}\in A_\epsilon^{(n)}(X)$.  Then $\mathbf{x}$ is sent to the channel. Upon receiving $\mathbf{y}$ from the channel, the decoder finds all $\tilde{\mathbf{c}} \in \mathcal{C}_O$ such that $(\tilde{\mathbf{c}}, \mathbf{y})\in A_\epsilon^{(n)}(X,Y)$. Then, the decoder lists the bin number for any of such $\tilde{\mathbf{c}} $. If the bin number is unique, it is declared as the decoded message. Otherwise, an encoding error will be declared.   Note that the effective rate of transmission is $R$.

% such that $\mathbf{c}+t(j)$ is typical with respect to the PMF $p(x)$. If such $c$ is found, the encoder sends $\mathbf{c}+t(j)$ to the channel; otherwise an encoding error will be declared. Upon receiving $\mathbf{y}$ from the channel, the decoder finds $\tilde{\mathbf{c}} \in \mathcal{C}$ and  $\tilde{j}$ such that $\tilde{\mathbf{c}}+t(\tilde{j})$ is jointly typical with $\mathbf{y}$ with respect to $p(x)P_{Y|X}(y|x)$. A decoding error occurs if no unique $\tilde{j}$ is found.  Note the effective transmission rate is $R_{bin}$.

Let $R_{I}$ be the rate of $\mathcal{C}_I$. Then, using Lemma \ref{lem: covering}, the probability of the error at the encoder approaches zero, if $R_{I} \geq \log p^r-H(X)$. Using Lemma \ref{lem: packing}, we can show that the average probability of error at the decoder approaches zero, if $R_{I}+R \leq \log p^r -H(X|Y)$. As a result the rate $R \leq I(X;Y)$ is achievable. 
 
For the source coding problem, given a distortion level $D$, consider a random variable $\hat{X}$ such that $\EE\{d(X, \hat{X})\}\leq D$. 
Let $\mathbf{x}$ be a typical sequence from the source.  The encoder finds $\mathbf{c}\in \mathcal{C}_O$ such that $\mathbf{c}$ is jointly $\epsilon$-typical with $\mathbf{x}$ with respect to $P_XP_{\hat{X}|X}$. If no such $\mathbf{c}$ are found, an encoding error will be declared. Otherwise, the encoder finds $\textbf{v}$ for which $\mathbf{c}\in \mathcal{B}(\mathbf{v})$. Then, it sends $\textbf{v}$.  Given $\mathbf{v}$, the decoder finds $\tilde{\mathbf{c}}\in \mathcal{B}(\mathbf{v})$ such that  $\tilde{\mathbf{c}}$ is $\epsilon$-typical with respect to $P_{\hat{X}}$. An error occurs, if no unique codeword $\tilde{\mathbf{c}}$ is found. Using Lemma \ref{lem: covering}, it can be shown that the encoding error approaches zero, if $R+R_{in} \geq \log p^r-H(\hat{X}|X)$. Using Lemma \ref{lem: packing}, the decoding error approaches zero, if $R_{in} \leq \log p^r -H(\hat{X})$. As a result the rate $R \geq I(X;\hat{X})$ and distortion $D$ is achievable. 
\end{proof}

\section{Distributed Source Coding} \label{sec: dist}
In this section, we consider a special distributed source coding problem. Suppose $X_1$ and $X_2$ are sources over $\ZZ_{p^r}$ with joint PMF $P_{X_1X_2}$. The $j$th encoder compresses $X_j$ and sends it to a central decoder. The decoder wishes to reconstruct $X_1+ X_2$ losslessly. Figure \ref{fig: dist scr diagram} depicts the diagram of such a setup.

\begin{figure}[hbtp]
\centering
\includegraphics[scale=1.3]{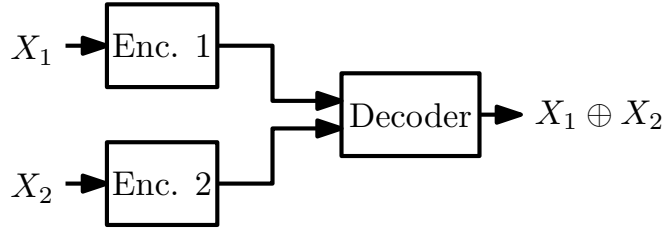}
\caption{An example for the problem of distributed source coding. In this setup, the sources $X_1$ and $X_2$ take values from $\ZZ_{p^r}$. The decoder reconstructs $X_1+ X_2$ losslessly. }
\label{fig: dist scr diagram}
\end{figure}

%\begin{definition}
Consider a pair of sources with joint distribution $P_{XY}$ defined on $\ZZ_{p^r}\times \ZZ_{p^r}$.  The source sequences $(X^n, Y^n)$ are generated randomly and independently with the joint distribution $$P(\mathbf{X^n=x^n, X^n=y^n})= \prod _{i=1}^n P_{XY}(x_i,y_i).$$ % Let $\mathcal{Z}$ be a discrete set and the function $\rho$ be given by $\rho: \mathcal{X}\times \mathcal{Y}\rightarrow \mathcal{Z}$.

%A distortion measure for reconstruction of $\rho (X,Y)$ is defined as $d:\mathcal{X} \times \mathcal{Y} \times \hat{\mathcal{Z}} \rightarrow [0,\infty)$; where $\hat{\mathcal{Z}}$ is the reconstruction alphabet. For sequences $x^n=(x_1,x_2, \cdots, x_n), y^n=(y_1,y_2, \cdots, y_n)$ and $\hat{z}^n=(\hat{z}_1,\hat{z}_2, \cdots, \hat{z}_n)$, the distortion level is calculated as  
%$$\bar{d}(x^n, y^n, \hat{z}^n)=\frac{1}{n} \sum_{i=1}^n d(x_i, y_i, \hat{z}_i)$$
%Such distributed sources is denoted by $(\mathcal{X},\mathcal{Y}, \mathcal{Z}, p_{XY},d)$.
%\end{definition}

%We explore the above distributed source coding problem and dr

%Consider two possibly dependent distributed sources $(\mathcal{X},\mathcal{Y}, p_{XY},d)$ with the corresponding generating random variables $X$ and $Y$.    Suppose that the sequences $x^n$ and $y^n$ of the two sources are available at two distributed encoders. We assume the encoders do not communicate with each other. Each encoder sends its corresponding output to a central decoder. The decoder wishes to reconstruct $x^n+y^n$ with respect to the distortion function $d$. Notice the addition here is the group operation of $\ZZ_{p^r}$ which is applied elementwise.

\begin{definition}
 An $(n,\Theta_1,\Theta_2)$-code consists of two encoding functions 
\begin{align*}
&f_i: \ZZ_{p^r}^n \rightarrow \{1,2,\cdots, \Theta_i\},  \quad i=1,2,
\end{align*}

and a decoding function
\begin{equation*}
g: \{1,2,\cdots, \Theta_1\} \times \{1,2,\cdots, \Theta_2\} \rightarrow \ZZ_{p^r}^n
\end{equation*}
\end{definition} 

\begin{definition}
Given a pair of sources $(X_1, X_2) \sim P_{X_1X_2}$ with values over $\ZZ_{p^r}\times \ZZ_{p^r}$, a pair $(R_1,R_2)$ is said to be achievable if for any $\epsilon>0$ and sufficiently large $n$ , there exists an $(n,\Theta_1,\Theta_2)$-codes such that, 
\begin{align*}
\frac{1}{n}\log_2 M_i < R_i+\epsilon \quad for  \quad i=1,2,
\end{align*}
and
\begin{equation*}
 P\{\mathbf{X_1}^n+ \mathbf{X_2}^n\neq g(f_1(\mathbf{X_1}^n),f_2(\mathbf{X_2}^n))\} \leq \epsilon.
\end{equation*}

\end{definition}

For this problem, we use nested QGCs to propose a new coding scheme. We use two nested QGCs one for each encoder. The inner-codes are identical. % Each outer-code is a "good-source"  code, where as the inner-code is a "good-channel" code. The following theorem characterizes a new achievable rate region for this problem.

\begin{theorem}\label{them: distributed source coding}
Given a pair of sources  $(X_1,X_2)\sim P_{X_1X_2}$ distributed over $\ZZ_{p^r}\times \ZZ_{p^r}$, the following rate-region is achievable
\begin{align}\label{eq: achievable bounds dist}
R_i \geq \log_2p^r- \min_{0\leq s\leq r-1} \frac{H(W_i|Q)}{H(W_1+ W_2|[W_1+ W_2]_sQ)} (\log_2p^{(r-s)}-H(X_1+ X_2|[X_1 + X_2]_s)),
\end{align}
where $i=1,2$, and $W_1, W_2$ take values from  $\ZZ_{p^r}$, and the Markov chain $W_1-Q-W_2$ holds. In addition, $|\mathcal{Q}|\leq r$ is sufficient to achieve the above bounds.
 \end{theorem}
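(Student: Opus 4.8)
The plan is to build, for each encoder $i\in\{1,2\}$, a nested QGC and use identical inner codes so that the modulo-$p^r$ sum structure is preserved at the decoder, in the spirit of the K\"orner–Marton scheme but now with quasi group codes. Fix auxiliary random variables $(W_1,W_2,Q)$ over $\ZZ_{p^r}\times\ZZ_{p^r}\times\mathcal{Q}$ with $W_1\leftrightarrow Q\leftrightarrow W_2$, and fix a small $\epsilon>0$. For $i=1,2$, let $(\mathcal{C}_{I},\mathcal{C}_{O,i})$ be an $(n,k,l_i)$-nested QGC in which the \emph{inner code} $\mathcal{C}_I$ is the \emph{same} for both encoders: same generator matrix $\mathbf{G}$, same translation $\mathbf{b}$, and inner random variables $(U,Q)$ chosen so that $\mathcal{C}_I$ has rate just below $\log_2 p^r$ (so that it can serve as a covering code). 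The outer random variable of $\mathcal{C}_{O,i}$ is $(W_i,Q)$; its bins $\mathcal{B}_i(\mathbf{v})$, $\mathbf{v}\in\bar{\mathcal{C}}_i$, are the cosets from \eqref{eq: bin for nested QGC}. Encoder $i$ on observing $\mathbf{x}_i$ finds a codeword $\mathbf{c}_i\in\mathcal{C}_{O,i}$ that is jointly $\epsilon$-typical with $\mathbf{x}_i$ with respect to $P_{X_i W_i}$ — here we take $W_i$ to be a (possibly deterministic) function-type reconstruction of $X_i$, in fact $W_i$ plays the role of a representation of $X_i$ — and transmits the bin index $\mathbf{v}_i$ of $\mathbf{c}_i$. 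The decoder, receiving $(\mathbf{v}_1,\mathbf{v}_2)$, computes $\mathbf{v}_1\mathbf{\tilde G}_1+\mathbf{v}_2\mathbf{\tilde G}_2+\mathbf{b}_1+\mathbf{b}_2$ and looks for the unique $\mathbf{u}\in\mathcal{U}$ such that $\mathbf{uG}+(\text{that vector})$ is $\epsilon$-typical with respect to $P_{X_1+X_2}$; it outputs that sequence as its estimate of $\mathbf{x}_1+\mathbf{x}_2$.

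The key algebraic point, which is what makes the scheme work, is that because the inner codes are identical, $\mathbf{c}_1+\mathbf{c}_2$ lies in a \emph{single} coset of $\mathcal{C}_I$ determined by $\mathbf{v}_1+\mathbf{v}_2$ (more precisely by $\mathbf{v}_1\mathbf{\tilde G}_1+\mathbf{v}_2\mathbf{\tilde G}_2$), so the decoder can reconstruct $\mathbf{c}_1+\mathbf{c}_2$, and hence $\mathbf{x}_1+\mathbf{x}_2$, from the pair of bin indices alone. I would carry out the analysis in the following order. (1) \emph{Covering at the encoders.} By the Covering Lemma (Lemma \ref{lem: covering}) applied to the outer code $\mathcal{C}_{O,i}$ with $(X,\hat X)=(X_i,W_i)$, the probability that encoder $i$ fails to find a jointly typical $\mathbf{c}_i$ is negligible provided the outer rate $R_{O,i}$ exceeds $\max_{1\le s\le r}\frac{H(U_{O,i}|Q)}{H([U_{O,i}]_s|Q)}(\log_2 p^s-H([W_i]_s|X_i)+\eta(\epsilon))$, where $U_{O,i}$ is the composite outer index variable; since $\mathcal{C}_I$ already has near-maximal rate, this is really a condition on the inner code being a good source code for $X_i$ through $W_i$, and it is satisfied with $\mathcal{C}_I$ of rate close to $\log_2 p^r$. (2) \emph{Decoding.} The decoder's reconstruction is correct unless there is another codeword $\mathbf{c}'\in\mathcal{C}_I+(\text{received coset})$, $\mathbf{c}'\neq\mathbf{c}_1+\mathbf{c}_2$, that is also $\epsilon$-typical with respect to $P_{X_1+X_2}$. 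Apply the Packing Lemma (Lemma \ref{lem: packing}) to the inner code $\mathcal{C}_I$ with $(X,Y)=(X_1+X_2,\text{const})$, i.e.\ with $Y$ trivial: the spurious-codeword probability vanishes provided $R_I<\min_{0\le s\le r-1}\frac{H(U|Q)}{H(U|Q,[U]_s)}(\log_2 p^{r-s}-H(X_1+X_2\mid [X_1+X_2]_s)+\eta(\epsilon))$. (3) \emph{Rate bookkeeping.} By Remark \ref{rem: nested QGC rate}, $R_i \approx R_{O,i}-R_I$ (the bin index rate), and one also needs $\mathbf{c}_1+\mathbf{c}_2$ to lie in the composite QGC $\mathcal{C}_I+\mathcal{C}_I$ with the correct outer variable $W_1+W_2$: this is exactly Lemma \ref{lem:sum of two quasi group code}(1), which requires $\mathbf{G}_1=\mathbf{G}_2$ and the Markov chain $W_1\leftrightarrow Q\leftrightarrow W_2$, and tells us $\mathbf{c}_1+\mathbf{c}_2$ is typical with respect to $P_{W_1+W_2}$ with probability $\to 1$. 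Combining the covering condition from (1) with the packing condition from (2), eliminating $R_I$, and letting $\epsilon\to0$ gives precisely \eqref{eq: achievable bounds dist}, with the factor $\frac{H(W_i|Q)}{H(W_1+W_2\mid[W_1+W_2]_sQ)}$ arising because the bin index rate is measured against the outer variable $W_i$ while the packing constraint lives on the sum variable $W_1+W_2$ (whose entropy appears through $H(U|Q,[U]_s)$ with $U\sim W_1+W_2$).

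I expect the main obstacle to be step (2) together with the rate bookkeeping in step (3): one must be careful that the "received coset" at the decoder is genuinely a translate of the inner code $\mathcal{C}_I$ (this uses identical $\mathbf{G}$), that the joint typicality test there is run against the \emph{sum} distribution $P_{X_1+X_2}$ while the codeword count in that coset is governed by the \emph{sum} auxiliary $W_1+W_2$, and that the Packing Lemma as stated (which is phrased for a random $\tilde{\mathbf{Y}}$ drawn through a channel from a fixed codeword) can be specialized to the degenerate channel to yield the pure covering-radius bound on $\mathcal{C}_I$; this specialization, plus tracking how the entropy ratios $\frac{H(U|Q)}{H(U|Q,[U]_s)}$ for $U\sim W_1+W_2$ translate into the displayed ratio $\frac{H(W_i|Q)}{H(W_1+W_2\mid[W_1+W_2]_sQ)}$ after substituting the rate identities, is where all the real work is. The subsidiary claim $|\mathcal{Q}|\le r$ follows at the end by a Carathéodory-type argument: the achievable region \eqref{eq: achievable bounds dist} is a union over $P_{W_1W_2Q}$ of regions each defined by $r$ linear constraints (indexed by $s\in[0:r-1]$), so the optimum over $Q$ is attained with $|\mathcal{Q}|$ no larger than the number of constraints.
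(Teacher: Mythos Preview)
Your proposal has the nested-QGC architecture inverted, and this is not a cosmetic issue: it prevents the displayed ratio in \eqref{eq: achievable bounds dist} from ever appearing. In the paper's construction, $W_i$ is \emph{not} a source-reconstruction variable paired with $X_i$ via some $P_{X_iW_i}$; it is the random variable that determines the \emph{index set} of the inner code $\mathcal{C}_{I,i}$. The two inner codes share a generator matrix $\mathbf{G}$ and translation $\mathbf{b}$ but have \emph{different} index sets $\mathcal{W}_1=A_\epsilon^{(k)}(W_1)$ and $\mathcal{W}_2=A_\epsilon^{(k)}(W_2)$. The bin codes $\bar{\mathcal{C}}_i$ carry the trivial random variable $V_i$ uniform on $\{0,1\}$. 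The encoder does not look for a codeword typical with $\mathbf{x}_i$ relative to $P_{X_iW_i}$; it checks whether the source sequence $\mathbf{x}_i$ itself lies in $\mathcal{C}_{O,i}$ (Lemma \ref{lem: covering} with $X=\hat X=X_i$), and transmits the $\bar{\mathcal{C}}_i$-component. Because $\mathcal{C}_{I,1}$ and $\mathcal{C}_{I,2}$ share $\mathbf{G}$, the sum $\mathbf{c}_{I,1}+\mathbf{c}_{I,2}$ lands (with high probability, Lemma \ref{lem:sum of two quasi group code}(1)) in the QGC $\mathcal{C}_d$ indexed by $W_1+W_2$; the decoder applies the Packing Lemma to $\mathcal{C}_d$ shifted by $\bar{\mathbf{c}}_1+\bar{\mathbf{c}}_2$.

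The ratio $\tfrac{H(W_i|Q)}{H(W_1+W_2\mid[W_1+W_2]_sQ)}$ then falls out of Fourier--Motzkin elimination of $k/n$: the covering step (after showing that only the $s=r$ constraint survives, using the inequality $H(W_i|Q,[W_i]_s)\le H(W_1+W_2|Q,[W_1+W_2]_s)$) gives $R_i+\tfrac{k}{n}H(W_i|Q)=\log_2 p^r$, while the packing step gives $\tfrac{k}{n}H(W_1+W_2|Q,[W_1+W_2]_s)<\log_2 p^{r-s}-H(X_1+X_2|[X_1+X_2]_s)$. In your scheme, by contrast, the inner code has a single common variable $U$ and the $W_i$ live on the bin side; the packing bound you write involves $\tfrac{H(U|Q)}{H(U|Q,[U]_s)}$, and there is no mechanism by which this becomes $\tfrac{H(W_i|Q)}{H(W_1+W_2|[W_1+W_2]_sQ)}$---your remark that ``$U\sim W_1+W_2$'' is inconsistent with having taken $U$ to be a fixed common inner variable. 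Swap the roles (put $W_i$ on the inner code, make the bin variable uniform on $\{0,1\}$), drop the spurious joint distribution $P_{X_iW_i}$, and the rest of your outline---use Lemma \ref{lem: covering} for the encoder error, Lemma \ref{lem: packing} with trivial $Y$ at the decoder, and a support-lemma argument for $|\mathcal{Q}|\le r$---matches the paper.
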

%\begin{remark}
%The cardinality bound $|\mathcal{Q}|\leq r$  implies that a UQGC with at most $r$ layers is enough to achieve (\ref{eq: achievable bounds dist}). 
%\end{remark} 

\begin{proof}
Fix a positive integer $n$, and define $ l_1 \triangleq c_1 n, l_2\triangleq c_2n$, and $k\triangleq \tilde{c}n$, where $\tilde{c}, c_1$ and $c_2$ are positive constant real numbers. {  Let $\mathcal{C}_{I,1},\mathcal{C}_{I,2}$ and $\mathcal{C}_d$ be three $(n,k)$-QGC's (as in Definition \ref{def: QGC}) with identical matrices and translation. By $\mathbf{G}$ and $\mathbf{b}$ denote the generator matrix and translation, respectively.} % The elements of the matrices and translation corresponding to $\mathcal{C}_{I,1}$ and $\mathcal{C}_{I,2}$ are selected randomly and uniformly from $\ZZ_{p^r}$.  
The random variables associated with $\mathcal{C}_{I, 1}$ and $\mathcal{C}_{I, 2}$ are $(W_1, Q)$ and $(W_2, Q)$, respectively. { The random variable associated with $\mathcal{C}_d$ is $(W_1+W_2, Q)$.} Let $\bar{\mathcal{C}}_i$ be an $(n, l_i)$-QGC with random variables $(V_i, Q)$, where $i=1,2$. The random variable $V_i$ is uniform over $\{0,1\}$, and is independent of  $Q$.  The matrix used for $\bar{\mathcal{C}}_1$ is identical to the one used for $\bar{\mathcal{C}}_2$, and is denoted by $\bar{\mathbf{G}}$. The translation defined for $\bar{\mathcal{C}}_i$ is denoted by $\mathbf{\bar{b}}_i, i=1,2$. Suppose that the elements of   $\mathbf{G}, \bar{\mathbf{G}}, \mathbf{b}$, and $\mathbf{\bar{b}}_i, i=1,2$ are generated randomly and independently from $\ZZ_{p^r}$. Also, conditioned on $Q$ the random variables $W_1, W_2, V_1$, and $V_2$ are mutually independent.  By $R_i$ denote the rate of $\bar{\mathcal{C}}_i$, and let $R_{I, i}$ be the rate of $\mathcal{C}_{I,i}$, where $i=1,2$.
 
\paragraph*{\textbf{Codebook Generation}} 
%Let $\mathcal{C}$ and $\mathcal{C}'$ be two $(n,m, k_1, \dots, k_m)$ UQGC with identical matrices $\mathbf{A}_i$. Let $b, t$ and $V_i, i\in [1:m]$ be the corresponding translation, mapping and random variables of $\mathcal{C}$, respectively. Assign $b', t'$ and $V'_i, i\in [1:m]$ as the translation, mapping and random variables associated with $\mathcal{C}'$. Define $\mathcal{C}_d=\mathcal{C}+\mathcal{C}'$. 

%
We use two nested QGC's, one for each encoder. The codebook for the first encoder is $( \mathcal{C}_{I, 1}, \mathcal{C}_{O,1})$ which is an $(n,k, l_1)$ nested QGC (as in Definition  \ref{def: nested QGC}) that is characterized by $\mathcal{C}_{I,1}$ and $\bar{\mathcal{C}}_1$.  For the second encoder, we use $(\mathcal{C}_{I, 2}, \mathcal{C}_{O,2})$ which is an $(n,k, l_2)$ nested QGC  characterized by $\mathcal{C}_{I,2}$ and $\bar{\mathcal{C}}_2$ . With this notation, the random variables corresponding to $(\mathcal{C}_{I, i}, \mathcal{C}_{O,i})$ are $(W_i,V_i, Q), i=1,2$. {The codebook at the decoder is $\mathcal{C}_{d}$.}

\paragraph*{\textbf{Encoding}}
Suppose $\mathbf{x}_1$ and $\mathbf{x}_2$ are a IID realization of $(X^n_1,X^n_2)$. The first encoder checks if $\mathbf{x}_1$ is $\epsilon$-typical and $\mathbf{x}_1\in \mathcal{C}_{O, 1}$. If not, an encoding error $E_1$ is	declared. In the case of no encoding error, by Definition \ref{def: nested QGC},  $\mathbf{x}_1=\mathbf{c}_{I,1} + \bar{\mathbf{c}}_1$, where $\mathbf{c}_{I,1}\in \mathcal{C}_{I,1}$ and $\bar{\mathbf{c}}_1\in \bar{\mathcal{C}}_1$.  The first encoder sends the index of $\bar{\mathbf{c}}_1$. Note $\bar{\mathbf{c}}_1$ determines the index of the bin which contains $\mathbf{x}_1$.  Similarly, if $\mathbf{x}_2\in A_\epsilon^{(n)}(X_2)$ and $\mathbf{x}_2\in \mathcal{C}_{O,2}$,  the second encoder sends finds $\mathbf{c}_{I,2}\in \mathcal{C}_{I,2}$ and $\bar{\mathbf{c}}_2\in \bar{\mathcal{C}}_2$ such that $\mathbf{x}_2=\mathbf{c}_{I,2} + \bar{\mathbf{c}}_2$. Then it sends the index of   $\bar{\mathbf{c}}_2$. If no such $\mathbf{c}_{I, 2}$ and $\bar{\mathbf{c}}_2$ are found, an error event $E_2$ is declared.
%  
%Given a typical sequence $\mathbf{x}_1\in A_{\epsilon}^n(X_1)$, encoder 1 finds $\mathbf{c}_1\in \mathcal{C}_{O,1}$ such that $\mathbf{x}_1=\mathbf{c}_1$; then it sends $i$. If no such $i$ is found, an error event $E_1$ will be declared. 
%
%Similarly, upon receiving $\mathbf{x}_2 \in A_{\epsilon}^n(X_2)$, the second encoder finds $j\in [1:2^{nR_2}]$ and $\textbf{c}_2\in \mathcal{C}_{I,2}$ such that $\textbf{x}_2=\textbf{c}_2+t_2(j)$ and sends $j$. If no such $j$ is found, an error event $E_2$ will be declared. If more than one indices were found at each encoder, select one randomly. 

\paragraph*{\textbf{Decoding}}
The decoder wishes to reconstruct $\mathbf{x}_1+\mathbf{x}_2$. Assume there is no encoding error.  Upon receiving the bin numbers from the encoders, the decoder calculates $\bar{\mathbf{c}}_1$ and $\bar{\mathbf{c}}_2$. Then, it finds $\tilde{\mathbf{c}}\in  \mathcal{C}_d$ such that $\tilde{\mathbf{c}}+\bar{\mathbf{c}}_1+\bar{\mathbf{c}}_2 \in A_{\epsilon}^{(n)}(X_1+ X_2)$. If $\tilde{\mathbf{c}}$ is unique, then $\tilde{\mathbf{c}}+\bar{\mathbf{c}}_1+\bar{\mathbf{c}}_2$ is declared as a reconstruction of $\mathbf{x}_1+\mathbf{x}_2$. An error event $E_d$ occurs, if no unique $\tilde{\mathbf{c}}$ was found. 

Using standard arguments for large enough $n$, the event that $\mathbf{x}_i$ is not $\epsilon$-typical is small. Next we use Lemma \ref{lem: covering} to bound $P(E_i), i=1,2$.  Note that the event $E_i$ is the same as the event of interest in Lemma \ref{lem: covering}, where $\hat{X}=X=X_i$, and $\mathcal{C}_{n}=\mathcal{C}_{O,i}$. In addition, by Remark  \ref{rem: nested QGC are QGC}, $\mathcal{C}_{O, i}$ is an $(n, k+l_i)$-QGC. Let $R_{O,i}$ denote  the rate of $\mathcal{C}_{O,i}$. By Remark \ref{rem: nested QGC rate}, with probability close to one, $|R_{O,i}-R_i-R_{l,i}|\leq o(\epsilon)$.  Therefore, applying Lemma \ref{lem: covering},  $P(E_i)\rightarrow 0$ if  (\ref{eq: covering bound}) holds for $R_n=R_i+R_{I, i}-o(\epsilon), i=1,2$.  Next we bound $P(E_d|E_1^c \cap E_2^c)$. Given $\bar{\mathbf{c}}_1$ and $\bar{\mathbf{c}}_2$, consider the codebook defined by ${ \mathcal{D}\triangleq \mathcal{C}_{d} +\bar{\mathbf{c}}_1 +\bar{\mathbf{c}}_2}$. %By Lemma \ref{lem:sum of two quasi group code}, $\mathcal{D}$ is an $(n, k)$-QGC with random variables $(W_1+ W_2, Q)$. 
We use Lemma \ref{lem: packing} to bound the probability of $E_d$ for fixed $\bar{\mathbf{c}}_1$ and $\bar{\mathbf{c}}_2$. Note that this event is the same as the event of interest in Lemma \ref{lem: packing}, where $Y$ is a trivial random variable, $ X=X_1+ X_2$, and $\mathcal{C}_n$ is replaced with $\mathcal{D}$. Therefore,  we can show that $P(E_d \cap E_1^c \cap E_2^c) \rightarrow 0$ as $n\rightarrow \infty$, if the bounds in (\ref{eq: packing bound}) are satisfied. Using the above argument, and noting that the effective transmission rate of the $i$th encoder is  $R_i$, we can derive the bounds in (\ref{eq: achievable bounds dist}). The cardinality bound on $\mathcal{Q}$ and the complete proof of the theorem are given in Appendix \ref{sec: proof dist}. 

\end{proof}

%The following remark simplifies $\mathcal{R}_s(m)$ for two special cases.

Every linear code,  group code and transversal group code is a QGC. Therefore, the achievable rate region of any coding scheme which uses these codes is included in the achievable rate region of that coding scheme using QGCs.  We show, through the following example, that the inclusion is strict.
%\begin{remark}\label{rem: R_s(m) special cases}
%When $X_1$ and $X_2$ are distributed over the field $\ZZ_p$,  the bounds in  (\ref{eq: achievable bounds dist}) are equivalent to $R_j \geq H(X_1+ X_2), j=1,2.$ This is the achievable rate region using linear codes.
%\end{remark} 
%
%We show, through an example, that this extension is strict. 

\begin{example}\label{ex: dist src z_4}
Consider a distributed source coding problem in which $X_1$ and $X_2$ are sources over $\ZZ_4$ and lossless reconstruction of $X_1\oplus_4 X_2$ is required at the decoder.  Assume $X_1$ is uniform over $\ZZ_4$. $X_2$ is related to $X_1$ via the equation $X_2=N-X_1$, where $N$ is a random variable which is independent of $X_1$. The distribution of $N$ depends on a parameter denoted by $\delta_N $, where $0\leq \delta_N \leq 1$, and is presented in Table \ref{tab: N}.

\begin{table}[h]
\caption {Distribution of $N$}\label{tab: N}
\begin{center}
\begin{tabular}{|c|c|c|c|c|}
\hline
N & 0 & 1 & 2 & 3\\
\hline
$P_N$ & $0.1\delta_N$ & $0.9\delta_N$ & $0.1(1-\delta_N)$ & $0.9(1-\delta_N)$\\
\hline
\end{tabular}
\end{center}
\end{table}

Using random unstructured codes, the rates $(R_1,R_2)$ such that $R_1+R_2\geq H(X_1,X_2)$ are achievable \cite{Slepian-Wolf}. It is also possible to use linear codes for the reconstruction of $X_1\oplus_4 X_2$. For that, the decoder first reconstructs the modulo-$7$ sum of $X_1$ and $X_2$, then from $X_1\oplus_7 X_2$ the modulo-$4$ sum is retrieved. This is because linear codes are built only over finite fields, and $\ZZ_7$ is the smallest field in which the modulo-$4$ addition can be embedded. Therefore, the rates $R_1=R_2 \geq H(X_1\oplus_7 X_2)$ is achievable using linear codes over the field $\ZZ_7$ \cite{korner-marton}. As is shown in \cite{Aria_group_codes}, group codes in this example outperform linear codes. The largest achievable region using group codes is described by all rate pair $(R_1,R_2)$ such that $R_i \geq \max \{H(Z), 2 H(Z|[Z]_1)\}, ~ i=1,2, $ where $Z=X_1\oplus_4 X_2$. It is shown in \cite{ISIT15_transversal} that using transversal group codes the rates $(R_1,R_2)$ such that $R_i\geq \max \{H(Z), 1/2 H(Z)+ H(Z|[Z]_1)\}$ are achievable. An achievable rate region using nested QGC's can be obtained from Theorem \ref{them: distributed source coding}. Let $Q$ be a trivial random variable and set $P(W_1=0)=P(W_2=0)=0.95$ and  $P(W_1=1)=P(W_2=1)=0.05$. As a result one can verify that the following is achievable: 
$$R_j \geq 2- \min\{ 0.6(2-H(Z)), 5.7 (2-2H(Z|[Z]_1)\}.$$

We compare the achievable rates of these schemes for the case where $\delta_N=0.6$. The result are presented in Table \ref{tab: dist src. achievable rates }.

\begin{table}[th]
\caption {Achievable sum-rate using different coding schemes for Example \ref{ex: dist src z_4}. Note that $Z \triangleq X_1 \oplus_4 X_2$.}\label{tab: dist src. achievable rates }
\begin{center}
\begin{tabular}{|c|c|c|c|}
\hline
Scheme & Achievable Rate & $\delta_N=0.6$\\\hline
Unstructured Codes & $H(X_1,X_2)$ & $3.44$\\ \hline
Linear Codes   & $H(X_1\oplus_7 X_2)$ & $4.12$\\ \hline
Group Codes & $\max \{H(Z), 2 H(Z|[Z]_1)\}$ & $3.88$\\\hline
QGCs & $2- \min\{ 0.6(2-H(Z)), 5.7 (2-2H(Z|[Z]_1)\}$ & $3.34$\\\hline
\end{tabular}
\end{center}
\end{table}
\end{example}

\section{Computation Over MAC}\label{sec: comp_over_mac}

In this section, we consider the problem of computation over MAC. Figure \ref{fig: comp_over_mac} depicts an example of this problem. In this setup $X_1$ and $X_2$ are the channel's inputs, and take values from $\ZZ_{p^r}$.  Two distributed encoders map their messages to $X^n_1$ and $X^n_2$. Upon receiving the channel output the decoder wishes to decode $X^n_1+ X^n_2$ losslessly. The definition of a code for computation over MAC, and an achievable rate are given in Definition \ref{def: code for comp over MAC} and \ref{def: comp over MAC achievable rate}, respectively.  Applications of this problem are found in various multi-user communication setups such as interference and broadcast channels. 

\begin{figure}[ht]
\centering
\includegraphics[scale=1.2]{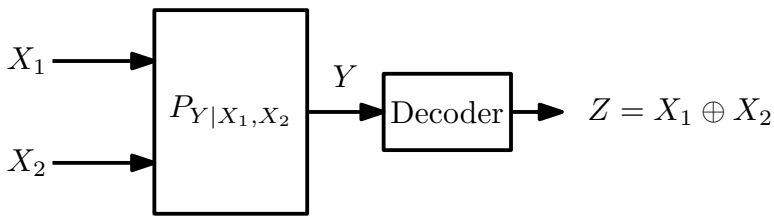} 
\caption{An example for the problem of computation over MAC. The channel input alphabets belong to $\ZZ_{p^r}$. The receiver decodes $X_1+ X_2$ which is the modulo-$p^r$ sum of the inputs of the MAC.}
\label{fig: comp_over_mac}
\end{figure}

\begin{definition}
A two-user MAC is a tuple $(\mathcal{X}_1, \mathcal{X}_2, \mathcal{Y}, P_{Y|X_1X_2})$, where the finite sets $\mathcal{X}_1, \mathcal{X}_2$ are  the inputs alphabets,  $\mathcal{Y}$ is the output alphabet, and  $P_{Y|X_1 X_2}$ is the channel transition probability matrix. Without loss of generality, it is assumed that $\mathcal{X}_1=\mathcal{X}_2=\ZZ_{p^r}$, for a prime-power $p^r$.
\end{definition}

%Consider a discrete memoryless MAC with conditional probability distribution $W$, input alphabets $\mathcal{X}_1$ and $\mathcal{X}_2$ and output alphabet $\mathcal{Y}$. Moreover, assume $\mathcal{X}_1$ and $\mathcal{X}_2$ form a group $H$.

%Consider a two user MAC whose input alphabets at each terminal is $\ZZ_{p^r}$, and its output alphabet is denoted by $\mathcal{Y}$.
\begin{definition} [Codes for computation over MAC]\label{def: code for comp over MAC}
 An $(n, \Theta_1, \Theta_2)$-code for computation over a MAC $(\ZZ_{p^r},\ZZ_{p^r}, \mathcal{Y}, P_{Y|X_1X_2})$ consists of two encoding functions and one decoding function $f_i:[1:\Theta_i]\rightarrow \ZZ_{p^r}^n$, for $i=1,2$, and $g: \mathcal{Y}^n \rightarrow \ZZ_{p^r}^n$, respectively.
\end{definition}

\begin{definition}[Achievable Rate] \label{def: comp over MAC achievable rate}
$(R_1,R_2)$ is said to be achievable, if for any $\epsilon >0$,  there exists for all sufficiently large $n$ an $(n, \Theta_1, \Theta_2)$-code such that  
\begin{align*}
&P\{g(Y^n)\neq f_1(M_1)+f_2(M_2)\}\leq \epsilon, \quad 
R_i-\epsilon \leq \frac{1}{n}\log \Theta_i,
\end{align*}
where $M_1$ and $M_2$ are independent random variables and $P(M_i=m_i)=\frac{1}{\Theta_i}$ for all $ m_i \in [1:\Theta_i], i=1,2$.
\end{definition}

For the above setup, we use QGCs to derive an achievable rate region.
\begin{theorem}\label{thm: comp MACnon uniform}
Given a MAC $(\ZZ_{p^r},\ZZ_{p^r}, \mathcal{Y}, P_{Y|X_1X_2})$, the following rate-region is achievable
\begin{align*}
R_i \leq \min_{0 \leq s \leq r} \frac{H(V_i |Q)}{H(V|[V]_s, Q)} \left( \log_2 p^{r-s}- H(X|Y[X]_s ) - \max_{\substack{1 \leq t \leq r\\j=0,1}} \frac{H(W|Q, [W]_s)}{H([W_j]_t|Q)} \left( \log_2p^t-H([X_j]_t) \right) \right)
\end{align*}
where $i=1,2$, $W=W_1+W_2, V=V_1+V_2, X=X_1+X_2$, and the joint PMF of the above random variables factors as $$P_{QX_1X_2V_1V_2W_1W_2Y} = P_{X_1}P_{X_2} P_Q P_{Y|X_1X_2}\prod_{i=1}^2 P_{V_i|Q} P_{W_i|Q} .$$
\end{theorem}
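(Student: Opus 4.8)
The plan is to give each encoder a two-layer nested QGC built from \emph{shared} generator matrices, to use the covering bound to shape the transmitted sequence to $P_{X_i}$, and to use the packing bound on the ``sum QGC'' furnished by Lemma~\ref{lem:sum of two quasi group code} so that the decoder recovers $\mathbf{x}_1+\mathbf{x}_2=f_1(M_1)+f_2(M_2)$. Fix the joint PMF of the statement; its factorization yields the Markov chains $W_1\leftrightarrow Q\leftrightarrow W_2$, $V_1\leftrightarrow Q\leftrightarrow V_2$ and $V_i\leftrightarrow Q\leftrightarrow W_i$, which are precisely the hypotheses under which Lemma~\ref{lem:sum of two quasi group code} applies to each layer.

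\textbf{Construction and operation.} For $i\in\{1,2\}$ let $\mathcal{C}_{W,i}$ be an $(n,k)$-QGC with random variables $(W_i,Q)$ (the inner, ``shaping'' layer) and let $\bar{\mathcal{C}}_i$ be an $(n,l)$-QGC with random variables $(V_i,Q)$ (the message layer); encoder $i$ uses the nested QGC $(\mathcal{C}_{W,i},\mathcal{C}_{O,i})$ with $\mathcal{C}_{O,i}=\mathcal{C}_{W,i}+\bar{\mathcal{C}}_i$ and bins $\mathcal{B}_i(\cdot)$ as in \eqref{eq: bin for nested QGC}. Crucially, both encoders share the \emph{same} generator matrix $\mathbf{G}$ for the shaping layer and the \emph{same} $\tilde{\mathbf{G}}$ for the message layer (only the four dithers differ), all entries i.i.d.\ uniform on $\ZZ_{p^r}$; take $k=\tilde c\,n$ with $\tilde c$ fixed below and $l$ large enough that $\tfrac ln H(V_i|Q)\ge R_i$ for $i=1,2$, encoder $i$ using $2^{nR_i}$ of the $|\mathcal{V}_i|$ available message indices. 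Encoder $i$ maps $m_i$ to an index $\mathbf{v}_i\in\mathcal{V}_i$ and transmits a uniformly chosen $\mathbf{x}_i\in\mathcal{B}_i(\mathbf{v}_i)\cap A^{(n)}_\epsilon(X_i)$ (declaring $E_{\mathrm{enc},i}$ if this set is empty). Since the generators are shared, $\mathbf{x}_1+\mathbf{x}_2=(\mathbf{w}_1+\mathbf{w}_2)\mathbf{G}+(\mathbf{v}_1+\mathbf{v}_2)\tilde{\mathbf{G}}+\mathbf{b}_1+\mathbf{b}_2$; arguing as in the proof of Theorem~\ref{them: distributed source coding} that the chosen indices are essentially uniform, Lemma~\ref{lem:sum of two quasi group code} together with Lemma~\ref{lem: sum of typical sets } shows that with probability tending to one $\mathbf{x}_1+\mathbf{x}_2$ lies in a known QGC $\mathcal{D}$ whose shaping layer has random variable $W=W_1+W_2$, whose message layer has random variable $V=V_1+V_2$, and whose index variable is $(U_{\mathcal{D}},(Q,I))$ with $P(I=\text{shaping})=\tfrac{k}{k+l}$ (by Remark~\ref{rem: nested QGC are QGC}, $\mathcal{C}_{O,i}$ and $\mathcal{D}$ are genuine QGCs). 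The decoder declares the unique $\tilde{\mathbf{c}}\in\mathcal{D}$ with $(\tilde{\mathbf{c}},Y^n)\in A^{(n)}_\epsilon(X_1+X_2,Y)$ as its estimate of $\mathbf{x}_1+\mathbf{x}_2$ (error event $E_{\mathrm{dec}}$).

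\textbf{Error analysis and rate region.} The bin $\mathcal{B}_i(\mathbf{v}_i)$ is a random translate of $\mathcal{C}_{W,i}$ with uniformly distributed offset $\mathbf{v}_i\tilde{\mathbf{G}}+\mathbf{b}_i$; applying Lemma~\ref{lem: covering} with that offset as the uniform source and the joint law in which the sum of the source and reconstruction symbols is $P_{X_i}$-distributed (so that $H([\hat X]_t\mid S)=H([X_i]_t)$ by shift invariance) gives $P(E_{\mathrm{enc},i})\to0$ provided $\tfrac kn H(W_i|Q)>\max_{1\le t\le r}\tfrac{H(W_i|Q)}{H([W_i]_t|Q)}(\log_2 p^{t}-H([X_i]_t))$, i.e.\ provided $\tilde c\ge\tilde c^\star\triangleq\max_{1\le t\le r,\,j}\tfrac{\log_2 p^{t}-H([X_j]_t)}{H([W_j]_t|Q)}$; take $\tilde c=\tilde c^\star$. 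Applying Lemma~\ref{lem: packing} to $\mathcal{D}$ with $X=X_1+X_2$ and $Y$ the channel output (the conditional-independence hypothesis holds since, given $(\mathbf{x}_1,\mathbf{x}_2)$, $Y^n$ is a function only of the channel noise) bounds $P(E_{\mathrm{dec}}\cap E_{\mathrm{enc},1}^c\cap E_{\mathrm{enc},2}^c)\to0$ provided $R_{\mathcal{D}}=\tfrac kn H(W|Q)+\tfrac ln H(V|Q)$ obeys \eqref{eq: packing bound} with $(U,Q)$ replaced by $(U_{\mathcal{D}},(Q,I))$. By Lemma~\ref{lem:sum of two quasi group code} and the distributivity of $[\cdot]_s$, $H(U_{\mathcal{D}}|Q,I)$ and $H(U_{\mathcal{D}}|Q,I,[U_{\mathcal{D}}]_s)$ split into $\tfrac{k}{k+l}$-weighted $W$-pieces and $\tfrac{l}{k+l}$-weighted $V$-pieces, the common factor cancels, and the packing inequality collapses to $\tfrac kn H(W|Q,[W]_s)+\tfrac ln H(V|Q,[V]_s)<\log_2 p^{r-s}-H(X|Y[X]_s)+\eta(\epsilon)$ for all $0\le s\le r-1$. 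Inserting $\tfrac kn=\tilde c^\star$ and $\tfrac ln=R_i/H(V_i|Q)$ and solving for $R_i$ gives, for $i=1,2$,
\begin{align*}
R_i \leq \min_{0 \leq s \leq r} \frac{H(V_i |Q)}{H(V|[V]_s, Q)} \left( \log_2 p^{r-s}- H(X|Y[X]_s ) - \max_{\substack{1 \leq t \leq r\\j=0,1}} \frac{H(W|Q, [W]_s)}{H([W_j]_t|Q)} \left( \log_2p^t-H([X_j]_t) \right) \right),
\end{align*}
the $s=r$ term being inactive; a union bound over the atypicality events, $E_{\mathrm{enc},1},E_{\mathrm{enc},2},E_{\mathrm{dec}}$, and an expurgation to a deterministic code complete the argument.

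The step I expect to be the main obstacle is the rate accounting just sketched: one must verify that the index variable of the \emph{summed} nested QGC $\mathcal{D}$ supplied by Lemma~\ref{lem:sum of two quasi group code} decomposes cleanly into shaping and message pieces with the right weights $\tfrac{k}{k+l},\tfrac{l}{k+l}$ so that the ratio in the packing bound simplifies; that the shared-generator construction genuinely places $\mathbf{x}_1+\mathbf{x}_2$ inside a single QGC whose algebraic structure is matched to $X_1+X_2$ (which requires the searched-for shaping indices $\mathbf{w}_i$ to be close enough to uniform that $\mathbf{w}_1+\mathbf{w}_2$ stays typical); and that the covering step really yields the term $\log_2 p^{t}-H([X_j]_t)$ with the \emph{unconditional} entropy of $[X_j]_t$. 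Once these are checked, the typicality estimates and the invocations of Lemmas~\ref{lem:sum of two quasi group code}, \ref{lem: packing} and \ref{lem: covering} are routine.
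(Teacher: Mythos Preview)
Your proposal is correct and follows essentially the same approach as the paper: identical-generator nested QGCs for each encoder, covering (Lemma~\ref{lem: covering}) on the inner code to shape to $P_{X_i}$, and packing (Lemma~\ref{lem: packing}) on the sum code $\mathcal{D}$ supplied by Lemma~\ref{lem:sum of two quasi group code}, leading to the intermediate constraints $\tfrac{k}{n}H([W_j]_t|Q)\ge \log_2 p^t-H([X_j]_t)$ and $\tfrac{k}{n}H(W|Q,[W]_s)+\tfrac{l}{n}H(V|Q,[V]_s)\le \log_2 p^{r-s}-H(X|Y[X]_s)$, which the paper then combines by Fourier--Motzkin (equivalent to your choice $\tilde c=\tilde c^\star$). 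The only item you omit that the paper treats explicitly is an intermediate event $E_c$ ensuring the two chosen codewords $(\mathbf{x}_1,\mathbf{x}_2)$ are jointly typical with respect to $P_{X_1}P_{X_2}$ before invoking the packing bound; this is handled by a direct second-moment estimate and does not alter the rate region.
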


\begin{remark}
The cardinality bound $|\mathcal{Q}|\leq r^2$ is sufficient to achieve the rate region in the theorem.
\end{remark}

\begin{proof}[Outline of the proof]
Fix positive integer $n$, and define $ l \triangleq c n$, and $k\triangleq \tilde{c} n$, where $\tilde{c}$ and $c$ are positive constant real numbers. Let $\mathcal{C}_{I,1}$ and $\mathcal{C}_{I,2}$ be two $(n,k)$-QGC's with identical matrices and translations. %By $\mathbf{G}$ and $\mathbf{b}$ denote such matrix and translation, respectively.  
The random variables defined for $\mathcal{C}_{I, 1}$ and $\mathcal{C}_{I, 2}$ are $(W_1, Q)$ and $(W_2, Q)$, respectively. Let $\bar{\mathcal{C}}_i$ be an $(n, l)$-QGC with random variables $(V_i, Q)$, where $i=1,2$.  The matrix used for $\bar{\mathcal{C}}_1$ is identical to the one used for $\bar{\mathcal{C}}_2$. The translation used by $\mathcal{C}_i$ is denoted by $\bar{\mathbf{b}}_i$.  % Such matrix is denoted by $\bar{\mathbf{G}}$. The translation defined for $\bar{\mathcal{C}}_i$ is denoted by $\mathbf{\bar{b}}_i, i=1,2$. 
Suppose that the elements of the matrices and the translations are generated randomly and independently from $\ZZ_{p^r}$. Also, conditioned on $Q$ the random variables $W_1, W_2, V_1$, and $V_2$ are mutually independent.  By $R_i$ denote the rate of $\bar{\mathcal{C}}_i$, and let $R_{I, i}$ be the rate of $\mathcal{C}_{I,i}$, where $i=1,2$.

\textbf{Codebook Generation:}
We use two nested QGC's, one for each encoder. The codebook used for the $i$th encoder is $\mathcal{C}_{O,i}$ which is an $(n,k, l)$ nested QGC  characterized by $\mathcal{C}_{I,i}$ and $\bar{\mathcal{C}}_i$.  With this notation, the random variables corresponding to $\mathcal{C}_{O,i}$ are $(W_i,V_i, Q), i=1,2$. For the decoder, we use $\mathcal{C}_{O, 1}+\mathcal{C}_{O, 2}$ as a codebook.

\textbf{Encoding:} Index the codewords of $ \bar{ \mathcal{C}}_{i}, i=1,2$.  Upon receiving a message index $\theta_i$, the $i$th encoder finds the codeword $\mathbf{c}_i \in  \bar{ \mathcal{C}}_{i}$ with that index. Then it finds $\mathbf{c}_{I,i}\in \mathcal{C}_{I,i}$ such that $\mathbf{c}_i+\mathbf{c}_{I, i}$ is $\epsilon$-typical with respect to $P_{X_i}$.  If such codeword was found, the encoder $i$ sends $\mathbf{x}_i=\mathbf{c}_i+\mathbf{c}_{I, i}, i=1,2$. Otherwise, an error event $E_i, i=1,2$ is declared. 

\textbf{Decoding:}
The channel takes $\mathbf{x}_1$ and $\mathbf{x}_2$ and produces $\mathbf{y}$. Upon receiving $\mathbf{y}$ from the channel, the decoder wishes to decode $\mathbf{x}=\mathbf{x}_1+\mathbf{x}_2$. It finds $\tilde{\mathbf{x}} \in \mathcal{C}_{O, 1}+\mathcal{C}_{O, 2}$ such that $\tilde{\mathbf{x}}$ and $\mathbf{y}$ are jointly $\tilde{\epsilon}$-typical with respect to the distribution $P_{X_1+X_2, Y}$. An error event $E_d$ is declared, if no unique $\tilde{\mathbf{x}}$ was found.

Note that given the message the bin number $\mathbf{c}_i$ is determined. Then the encoder finds an $\epsilon$-typical codeword in the corresponding bin, i.e., $\mathcal{C}_{I,i}+\mathbf{c}_i$. Therefore, the inner-code $\mathcal{C}_{I, i}, i=1,2$ needs to be a ``good covering" code. We use Lemma \ref{lem: covering} to bound $P(E_i), i=1,2$.  Note that the event $E_i$ is the same as the event of interest in this lemma, where $X$ is trivial, $\hat{X}=X_i$, and $\mathcal{C}_{n}=\mathcal{C}_{I,i}+\mathbf{c}_i$. The rate of such code equals $R_{I, i}$. Therefore, $P(E_i)\rightarrow 0$ as $n\rightarrow \infty$, if  (\ref{eq: covering bound}) holds for $R_n= R_{I, i}, i=1,2$. Next, we find the conditions that $P(E_d)$ approaches zero as $n\rightarrow \infty$.  Note that using Lemma \ref{lem:sum of two quasi group code} the codebook defined by $\mathcal{C}_{O, 1}+ \mathcal{C}_{O, 2}$ is an $(n, k+l)$-QGC. We apply Lemma \ref{lem: packing} for $E_d$ and this codebook. In this lemma $X=X_1+X_2$, and $R_n$ is the rate of $\mathcal{C}_{O, 1}+ \mathcal{C}_{O, 2}$. Note that the effective rate of transmission for each encoder is $R_i, i=1,2$. Finally, we derive the bounds in the theorem using these covering and packing bounds, and the relation between the rate of $\mathcal{C}_{O, 1}+ \mathcal{C}_{O, 2}$ and $R_i, R_{I, i}, i=1,2$. The complete proof is provided in Appendix \ref{sec: proof of comp mac nonuniform}. 
\end{proof}

\begin{corollary}\label{cor: com_over_mac}
A special case of the theorem is when $X_1$ and $X_2$ are distributed uniformly over $\ZZ_{p^r}$. In this case, the following is achievable
\begin{align}\label{eq: comp_over MAC ahiev}
R_i \leq \min_{0 \leq s \leq r} \frac{H(V_i|Q)}{H(V_1+V_2|[V_1+V_2]_s, Q)} I(X_1+X_2;Y|[X_1+X_2]_s), \quad i=1,2,
\end{align}
\end{corollary}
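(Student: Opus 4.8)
The plan is to obtain Corollary~\ref{cor: com_over_mac} as a direct specialization of Theorem~\ref{thm: comp MACnon uniform} to the case in which $X_1$ and $X_2$ are independent and uniform on $\ZZ_{p^r}$, verifying that the rate expression collapses to \eqref{eq: comp_over MAC ahiev}. First I would record two elementary facts about $\ZZ_{p^r}$. (i) If $X_j$ is uniform on $\ZZ_{p^r}$, then $[X_j]_t$ is uniform on $T_t$: in the unique decomposition $a=t_0+h$ with $t_0\in T_t$, $h\in H_t$, there are exactly $p^{r-t}$ elements $a\in\ZZ_{p^r}$ with $[a]_t=t_0$, so $P([X_j]_t=t_0)=p^{r-t}/p^r=p^{-t}$; hence $H([X_j]_t)=\log_2 p^t$ for all $t$ and $j=1,2$. (ii) The sum of a uniform variable and an independent variable on a finite group is uniform (because $P(U+V=g)=\sum_h P(U=g-h)P(V=h)=\tfrac{1}{|G|}$); thus $X\triangleq X_1+X_2$ is uniform on $\ZZ_{p^r}$, so $[X]_s$ is uniform on $T_s$ with $H([X]_s)=\log_2 p^s$ for $s\in[0:r]$, and $H(X)=\log_2 p^r$.

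Next I would substitute these facts into the bound of Theorem~\ref{thm: comp MACnon uniform}. By (i), for every $t\in[1:r]$ and every $j$ we have $\log_2 p^t-H([X_j]_t)=0$, so the inner maximum
\[
\max_{\substack{1\le t\le r\\ j=0,1}} \frac{H(W|Q,[W]_s)}{H([W_j]_t|Q)}\bigl(\log_2 p^t-H([X_j]_t)\bigr)
\]
equals $0$ for every $s$, independently of the auxiliary variables $W_1,W_2$ (which therefore play no role and may be taken trivial, leaving the factorization $P_{X_1}P_{X_2}P_Q P_{Y|X_1X_2}\prod_{i=1}^2 P_{V_i|Q}$). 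The bound then reduces to
\[
R_i\le \min_{0\le s\le r}\frac{H(V_i|Q)}{H(V_1+V_2|[V_1+V_2]_s,Q)}\bigl(\log_2 p^{r-s}-H(X|Y[X]_s)\bigr),\qquad i=1,2,
\]
with $X=X_1+X_2$. Using (ii) and the fact that $[X]_s$ is a deterministic function of $X$, I would rewrite $\log_2 p^{r-s}=H(X)-H([X]_s)=H(X|[X]_s)$, whence
\[
\log_2 p^{r-s}-H(X|Y[X]_s)=H(X|[X]_s)-H(X|Y,[X]_s)=I(X;Y|[X]_s).
\]
Plugging this identity in yields exactly \eqref{eq: comp_over MAC ahiev}, which proves the corollary.

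Since this is a pure specialization, there is no real obstacle; the only points requiring a little care are confirming that the term carrying $W_1,W_2$ vanishes for \emph{every} index $s$ of the outer minimization (it does, because the vanishing factor $\log_2 p^t-H([X_j]_t)$ does not depend on $s$) and the short entropy bookkeeping that turns $\log_2 p^{r-s}-H(X|Y[X]_s)$ into the conditional mutual information $I(X_1+X_2;Y|[X_1+X_2]_s)$. Both are immediate consequences of the two facts about uniform distributions on $\ZZ_{p^r}$ recorded above, so the corollary follows.
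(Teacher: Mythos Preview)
Your proposal is correct and is exactly the natural derivation the paper has in mind: the corollary is stated without proof as an immediate specialization of Theorem~\ref{thm: comp MACnon uniform}, and your two observations---that $H([X_j]_t)=\log_2 p^t$ kills the inner maximum, and that $\log_2 p^{r-s}-H(X|Y[X]_s)=I(X;Y|[X]_s)$ when $X=X_1+X_2$ is uniform---are precisely the computations that collapse the theorem's bound to \eqref{eq: comp_over MAC ahiev}. One cosmetic point: rather than saying the $W_j$ ``may be taken trivial'' (which would make $H([W_j]_t|Q)=0$ in the denominator), it is cleaner to note that any nontrivial choice of $W_j$ leaves the ratio finite while the factor $\log_2 p^t-H([X_j]_t)$ is identically zero, so the subtracted term vanishes regardless.
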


We show, through the following example, that  QGC outperforms the previously known schemes.

%\begin{remark}
%Assume that the underlying group is $\ZZ_p$, i.e., the case where $r=1$. Then (\ref{eq: comp_over MAC ahiev}) is simplified to  $R_i \leq  I(X_1+X_2;Y), i=1,2, $ where $X_1$ and $X_2$ are independent and uniform over $\ZZ_{p}$.  It is known that this region is achievable by linear codes.
%\end{remark}
%
%In what follows, we show through an example that using QGC's the achievable region of group codes and transversal group codes are extended strictly. Moreover, we show that QGC's improves upon unstructured codes.

\begin{example}\label{ex: comp_z4}
Consider the MAC described by $Y=X_1+ X_2 + N,$ where $X_1$ and $X_2$ are the channel inputs with alphabet $\ZZ_4$. $N$ is independent of $X_1$ and $X_2$ with the distribution given in Table \ref{tab: N}, where $0\leq \delta_N \leq 1$.

Using standard unstructured codes the rate pair $(R_1, R_2)$ satisfying $R_1+R_2\leq I(X_1X_2;Y)$ are achievable. { Note that the modulo-$4$ addition can be embedded in a larger field such as $\ZZ_7$.  For that linear codes over $\ZZ_7$ can be used. In this case, the following rates are achievable: 
\begin{align*}
R_1=R_2=\max_{P_{X_1}P_{X_2} : X_1, X_2 \in \ZZ_4 } \min\{H(X_1), H(X_2)\} -H(X_1\oplus_7 X_2 | Y),
\end{align*}
where the maximization is taken over all probability distribution $P_{X_1}P_{X_2}$ on $\ZZ_7 \times \ZZ_7$ such that $P(X_i \in \ZZ_4)=1,, i=1,2$. This is because, $\ZZ_4$ is the input alphabet of the channel.  

}
It is shown in \cite{Aria_group_codes} that the largest achievable region using group codes is  $$R_i \leq  \min \{  I(Z;Y), 2  I(Z;Y|[Z]_1)\},$$ where $Z=X_1+X_2$ and $X_1$ and $X_2$ are uniform over $\ZZ_4$.  %It is shown in \cite{transversal} that transversal group codes achieve $$ R_i \leq \min \{ I(Z;Y), 0.5I(Z;Y)+ I(Z; Y| [Z]_1)\}.$$ 
Using Corollary \ref{cor: com_over_mac}, QGC's achieve $R_i \leq \min \{ 0.6 I(Z;Y), 5.7 I(Z; Y| [Z]_1)\}.$ This can be verified by checking \eqref{eq: comp_over MAC ahiev} when $Q$ is a trivial random variable, $P(V_1=0)=P(V_2=0)=0.95$ and  $P(V_1=1)=P(V_2=1)=0.05$.
We compare the achievable rates of these schemes for the case where $\delta_N=0.6$. The result are presented in Table \ref{tab: comp MAC achievable rates }.

\begin{table}[th]
\caption {Achievable rates using different coding schemes for Example \ref{ex: comp_z4}. Note that $Z \triangleq X_1 +X_2$.}\label{tab: comp MAC achievable rates }
\begin{center}
\begin{tabular}{|c|c|c|c|}
\hline
Scheme & Achievable Rate $(R_1=R_2)$ & $\delta_N=0.6$\\\hline
Unstructured Codes & $ I(X_1X_2;Y)/2$ & $0.28$\\ \hline
Linear codes & $\min\{H(X_1), H(X_2)\} -H(X_1\oplus_7 X_2 | Y)$ & $0.079$\\ \hline
Group Codes & $\min \{  I(Z;Y), 2  I(Z;Y|[Z]_1)\}$ & $0.06$\\\hline
QGCs & $\min \{ 0.6 I(Z;Y), 5.7 I(Z; Y| [Z]_1)\}$ & $0.33$\\\hline
\end{tabular}
\end{center}
\end{table}

\end{example}

\section{MAC with States}\label{sec: MAC with States }
\subsection{Model}
Consider a two-user discrete memoryless  MAC with input alphabets $\mathcal{X}_1,  \mathcal{X}_2$, and output alphabet $\mathcal{Y}$. The transition probabilities between the input and the output of the channel depends on a random vector $(S_1, S_2)$ which is called state. Figure \ref{fig: MAC with states} demonstrates such setup.  Each state $S_i$ takes values from a set $\mathcal{S}_i$, where $i=1,2$. The sequence of the states is generated randomly according to the probability distribution $\prod_{i=1}^nP_{S_1S_2}$. The entire sequence of the state $S_i$ is known at the $i$th transmitter, $i=1,2$, non-causally. The conditional distribution of $Y$ given the inputs and the state is $P_{Y|X_1X_2S_1S_2}$.  %Let $y^n$ be the output of the channel after $n$ uses. If $x^n_i$ is the input sequence, and $s_i^n$ is the state sequence, then the following condition is satisfied:
%\begin{align*}
%P(y_n|{y}^{n-1}, \underline{{x}}^{n-1}, \underline{{s}}^{n-1})=P_{Y|X_1X_2S_1S_2}(y_n|\underline{x}_n, \underline{s}_n ).
%\end{align*}
Each input $X_i$  is associated with a state dependent cost function $c_i:\mathcal{X}_i \times \mathcal{S}_i \rightarrow [0, +\infty)$\footnote{We use a cost function for this problem because, in many cases without a cost function the problem has a trivial solution.}. The cost associated with the sequences ${x}_i^n$ and $s_i^n$ is given by 
\begin{align*}
\bar{c}_i({x}_i^n, {s}_i^n)=\frac{1}{n}\sum_{j=1}^n c_i(x_{ij}, s_{ij}).
\end{align*}

\begin{figure}[hbtp]
\centering
\includegraphics[scale=1]{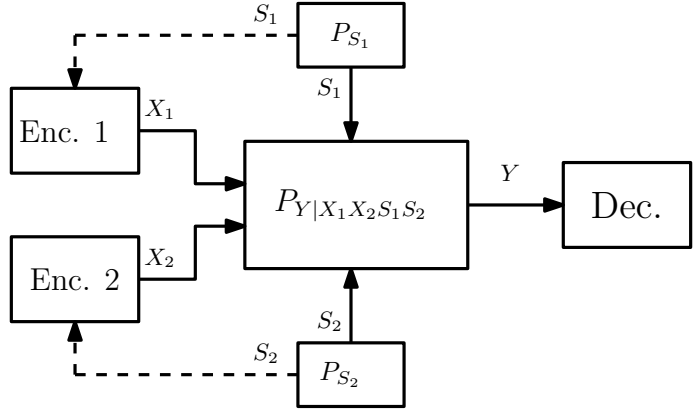}
\caption{A two-user MAC with distributed states. The states $(S_1, S_2)$ are generated randomly according to $P_{S_1S_2}$. The entire sequence of each state $S_i$ is available non-casually at the $i$th transmitter, where $ i=1,2$.}
\label{fig: MAC with states}
\end{figure}

\begin{definition}
An $(n, \Theta_1, \Theta_2)$-code for reliable communication over a given two-user MAC with states is defined by two encoding functions
\begin{align*}
f_i: \{1, 2, \dots, \Theta_i\} \times \mathcal{S}^n_i \rightarrow \mathcal{Y}^n, \quad i=1,2,
\end{align*}
and a decoding function 
\begin{align*}
g: \mathcal{Y}^n\rightarrow \{1, 2, \dots, \Theta_1\} \times \{1, 2, \dots, \Theta_2\} .
\end{align*}
\end{definition}

\begin{definition}\label{def: MAC with state achievable rate}
For a given MAC with state, the rate-cost tuple$(R_1,R_2, \tau_1, \tau_2)$ is said to be achievable, if for any $\epsilon >0$, and for all large enough $n$ there exist an $(n, \Theta_1, \Theta_2)$-code such that  
\begin{align*}
&P\{g(Y^n)\neq (M_1,M_2)\}\leq \epsilon, \quad 
\frac{1}{n}\log \Theta_i \geq R_i-\epsilon,\quad
\EE\{\bar{c}_i (f_i(M_i), {S}_i^n)\} \leq \tau_i+\epsilon,
\end{align*}
for $i=1,2$, where a) $M_1,M_2$ are independent random variables with distribution $P(M_i=m_i)=\frac{1}{\Theta_i}$ for all $ m_i \in [1:\Theta_i]$, b) $(M_1, M_2)$ is independent of the states $(S_1,S_2)$. Given $\tau_1, \tau_2$, the capacity region $\mathcal{C}_{\tau_1, \tau_2}$ is defined as the set of all rates $(R_1,R_2)$ such that the rate-cost $(R_1,R_2, \tau_1, \tau_2)$ is achievable.
\end{definition}

\subsection{Achievable Rates}
We  propose a structured coding scheme that builds upon QGC. Then we present the single-letter characterization of the achievable region of this coding scheme. 
Using this binning method, a rate region is given in the following theorem. 
\begin{theorem}\label{thm: QGC MAC with state achievable}
For a given MAC $(\mathcal{X}_1,\mathcal{X}_2, \mathcal{Y}, P_{Y|X_1X_2})$ with independent states $(S_1,S_2)$ and cost functions $c_1,c_2$ the following rates are achievable using nested-QGC 
\begin{align*}
R_1+R_2 &\leq r\log_2p - H(Z_1+ Z_2|YQ) - \max_{\substack{i=1,2\\ 1\leq t \leq r}}\Big\{\frac{H(V_1+ V_2|Q)}{H([V_i]_t|Q)} \Big(\log_2 p^t-H([Z_i]_t| Q S_i)\Big)\Big\},
\end{align*}
where the joint distribution of the above random variables factors as $$P_{S_1S_2}P_Q P_{Y|X_1X_2} \prod_{i=1,2} P_{V_i|Q} P_{Z_i|QS_i} P_{X_i|QZ_iS_i}.$$
\end{theorem}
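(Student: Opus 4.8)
The plan is to combine a Gelfand–Pinsker-style binning argument with the nested-QGC machinery, using the covering lemma (Lemma \ref{lem: covering}) to handle state-precoding at each encoder and the packing lemma (Lemma \ref{lem: packing}) to handle joint decoding of the sum at the receiver. First I would fix $n$, set $l\triangleq cn$ and $k\triangleq\tilde c n$, and for each $i=1,2$ build a nested QGC $(\mathcal{C}_{I,i},\mathcal{C}_{O,i})$ over $\ZZ_{p^r}$ (with $p^r\geq\max(|\mathcal{X}_1|,|\mathcal{X}_2|)$): the inner codes $\mathcal{C}_{I,i}$ share a common matrix $\mathbf{G}$ and have random variables $(V_i,Q)$ with the $V_i$ conditionally independent given $Q$, while the bin-indexing codes $\bar{\mathcal{C}}_i$ carry the message, with random variables associated to $Z_i$ (the auxiliary codeword that is precoded against $S_i$). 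The message $\theta_i$ selects a bin $\mathbf{c}_i\in\bar{\mathcal{C}}_i$; within the shifted bin $\mathcal{C}_{I,i}+\mathbf{c}_i$ the encoder searches for a codeword $\mathbf{z}_i$ jointly typical with the observed state sequence $\mathbf{s}_i$ w.r.t.\ $P_{Z_i|QS_i}$, then generates the channel input symbol-wise through $P_{X_i|QZ_iS_i}$. The decoder, which uses $\mathcal{C}_{O,1}+\mathcal{C}_{O,2}$ (an $(n,k+l)$-QGC by Lemma \ref{lem:sum of two quasi group code}), looks for the unique sum-codeword $\tilde{\mathbf{z}}=\mathbf{z}_1+\mathbf{z}_2$ jointly typical with $\mathbf{y}$ w.r.t.\ $P_{Z_1+Z_2,Y}$.

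Next I would bound the error events. The encoding error $E_i$ (no precoding codeword found in the bin $\mathcal{C}_{I,i}+\mathbf{c}_i$ jointly typical with $\mathbf{s}_i$) is exactly the covering event of Lemma \ref{lem: covering}, applied with the ``source'' playing the role of $S_i$ and reconstruction alphabet $\ZZ_{p^r}$; since the rate of $\mathcal{C}_{I,i}+\mathbf{c}_i$ is $R_{I,i}$, this forces $R_{I,i}>\max_{1\le t\le r}\frac{H(V_i|Q)}{H([V_i]_t|Q)}\bigl(\log_2p^t-H([Z_i]_t|QS_i)\bigr)$ (using that the inner-code random variable is $V_i$ and the covered variable is $Z_i$ given $S_i$). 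For the joint-decoding error $E_d$, conditioned on the correct sum-codeword, $\tilde{\mathbf{Y}}$ is generated through the effective channel $P_{Y|Z_1+Z_2}$ induced by marginalizing $P_{S_1S_2}\prod P_{X_i|QZ_iS_i}P_{Y|X_1X_2}$; this is precisely the packing event of Lemma \ref{lem: packing} with $X=Z_1+Z_2$, and $\mathcal{C}_n=\mathcal{C}_{O,1}+\mathcal{C}_{O,2}$, whose rate $R_O$ satisfies $R_O\approx R_{I}+\rho$ where $\rho$ aggregates the message rates $R_1+R_2$ (Remark \ref{rem: nested QGC rate}). Imposing the packing bound \eqref{eq: packing bound} on $R_O$, substituting $W_i=V_i$ and $W=V_1+V_2$ where the inner-code variable appears, and eliminating $R_{I,i}$ via the covering constraints gives, after the dominant term $s=0$ simplification (i.e., taking $\min_{0\le s\le r-1}$ and keeping the binding $s=0$ piece since the auxiliary is the full $Z_1+Z_2$), the sum-rate bound $R_1+R_2\le r\log_2 p - H(Z_1+Z_2|YQ) - \max_{i,t}\frac{H(V_1+V_2|Q)}{H([V_i]_t|Q)}(\log_2 p^t - H([Z_i]_t|QS_i))$.

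Finally I would verify the cost constraints: since $\mathbf{z}_i$ is chosen jointly typical with $\mathbf{s}_i$ w.r.t.\ $P_{Z_i|QS_i}$ and $X_i$ is then drawn symbol-wise from $P_{X_i|QZ_iS_i}$, the empirical joint type of $(\mathbf{x}_i,\mathbf{s}_i)$ concentrates around $P_{X_iS_i}$, so $\EE\{\bar c_i(f_i(M_i),S_i^n)\}\le \tau_i+\epsilon$ whenever $\EE\{c_i(X_i,S_i)\}\le\tau_i$ under the chosen joint distribution; I would also confirm the Markov/independence structure $P_{S_1S_2}P_Q P_{Y|X_1X_2}\prod_i P_{V_i|Q}P_{Z_i|QS_i}P_{X_i|QZ_iS_i}$ is exactly what the lemmas require (the $V_i$ conditionally independent given $Q$, and $\mathbf{b},\bar{\mathbf{b}}_i,\mathbf{G},\bar{\mathbf{G}}$ uniform and independent to invoke Lemma \ref{lem:sum of two quasi group code}). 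The main obstacle I anticipate is the bookkeeping in the second step: correctly tracking how the nested-code rate identity $R_O\approx R_I+\rho$ interacts with the \emph{ratio} factors $\frac{H(U|Q)}{H(U|Q,[U]_s)}$ and $\frac{H(U|Q)}{H([U]_s|Q)}$ appearing in \eqref{eq: packing bound} and \eqref{eq: covering bound} — in particular justifying why the packing minimum over $s$ collapses to the single term shown and why the covering-side variable is $V_i$ while the packed auxiliary is $Z_1+Z_2$, which requires care about which QGC's index random variable governs each bound. The routine typicality and concentration arguments (including the Markov lemma for the joint type of $(\mathbf{z}_i,\mathbf{s}_i)$ and the induced channel) I would defer to the appendix.
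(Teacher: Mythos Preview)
Your proposal has the right high-level shape but misses two structural choices that are essential for the argument to go through, and your anticipated ``bookkeeping obstacle'' is actually masking these.

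First, your decoding architecture cannot recover the individual messages. You let the decoder search $\mathcal{C}_{O,1}+\mathcal{C}_{O,2}$ for the unique sum $\tilde{\mathbf z}=\mathbf z_1+\mathbf z_2$, and you claim this sum is an $(n,k+l)$-QGC. That size is only possible if \emph{all} generator matrices are shared; but then $\mathbf z_1+\mathbf z_2$ is indexed by the \emph{sum} of the two message indices, so from $\tilde{\mathbf z}$ you can at best extract $\theta_1+\theta_2$, not $(\theta_1,\theta_2)$. The paper avoids this by giving the two message-carrying codes \emph{independent} generator matrices $\mathbf G_1,\mathbf G_2$ while only the covering codes share a matrix $\bar{\mathbf G}$; the decoder then searches the $(n,2k+l)$-QGC $\mathcal{C}_{I,1}+\mathcal{C}_{I,2}+\bar{\mathcal D}$ for a unique triple $(\tilde{\mathbf w}_1,\tilde{\mathbf w}_2,\tilde{\mathbf v})$ and declares $(\tilde{\mathbf w}_1,\tilde{\mathbf w}_2)$ as the messages.

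Second, your justification for why the packing minimum over $s$ collapses to $s=0$ is wrong. You write that this happens ``since the auxiliary is the full $Z_1+Z_2$,'' but the ratio $H(U|Q)/H(U|Q,[U]_s)$ in \eqref{eq: packing bound} depends on the \emph{index} random variable $U$ of the codebook, not on the codeword variable $Z_1+Z_2$. In the paper the collapse is engineered by choosing the message-index random variables $W_1,W_2$ to be \emph{uniform on $\{0,1\}$}: then $[W_i]_s=W_i$ for every $s\geq 1$, so $H(W_i|Q,[W_i]_s)=0$, and the packing constraints for $s\geq 1$ place no restriction on $R_1+R_2$ (only on $l/n$). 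Without this choice the $s\geq 1$ constraints would involve the message rates and you would not obtain the stated single-term bound. Relatedly, your remark that $\bar{\mathcal C}_i$ has ``random variables associated to $Z_i$'' conflates index and codeword variables; $Z_i$ is the codeword distribution you cover toward, not an index law you get to pick.
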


\begin{proof}
Let $\mathcal{C}_{I,j}$ be an $(n,k)$-QGC with matrix $\mathbf{G}_j$, translation $\mathbf{b}_j$, and random variables $(W_j, Q)$, where $W_j$ is uniform over $\{0,1\}$, and $j=1,2$. Denote $\mathcal{W}_1$ and $\mathcal{W}_2$ as the index sets associated with $\mathcal{C}_{I,1}$ and $\mathcal{C}_{I,1}$, as in (\ref{eq: QGC codebook}). { Let $\bar{\mathcal{C}}_1, \bar{\mathcal{C}}_2$ and $\bar{\mathcal{D}}$ be three $(n,l)$ QGC with identical matrices $\mathbf{\bar{G}}$ and identical translations $\mathbf{\bar{b}}$. Suppose $(V_j, Q)$ are the random variables associated with $\bar{\mathcal{C}}_j$, where $j=1,2$. Furthermore, let $(V_1+ V_2, Q)$ is the random variable associated with $\bar{\mathcal{D}}$.}  Suppose that the elements of all the matrices and the translations are selected randomly and uniformly from $\ZZ_{p^r}$. Rate of $\bar{\mathcal{C}}_i$ is denoted by $\rho_i$, rate of $\bar{\mathcal{D}}$ is denoted by $\rho$, and that of $\mathcal{C}_{I, i}$ is $R_i, i=1,2$. For each, sequence $\mathbf{z}_i$ and $\mathbf{s}_i$, generate a sequence $\mathbf{x}_i$ randomly with IID distribution according to $P^n_{X_i|Z_iS_i}, i=1,2$. Denote such sequence by $x_i(\mathbf{s}_i, \mathbf{z}_i)$.

\textbf{Codebook Construction:}
For each encoder we use a nested QGC. For the first encoder, we use the $(n,k,l)$nested QGC generated by $\mathcal{C}_{I,1}$ and $\bar{\mathcal{C}}_1$. For the second encoder, we use the $(n,k,l)$nested QGC characterized by $\mathcal{C}_{I,2}$ and $\bar{\mathcal{C}}_2$. {  The codebook used in the decoder is $\mathcal{C}_{I, 1} + \mathcal{C}_{I, 2} + \mathcal{\bar{D}}$. By Lemma \ref{lem:sum of two quasi group code} , this codebook is an $(n, 2k+l)$-QGC.  In addition, the rate of such code is $R_1+R_2+\rho$  }

\textbf{Encoding:}
 For $i=1,2$, the $i$th encoder is given a message $\mathbf{\theta
}_i$, and an state sequence $\mathbf{s}_i$. The encoder first calculates the bin associated with $\mathbf{\theta}_i$. Then it finds a  codeword $\mathbf{z}_i$ in that bin such $(\mathbf{z}_i, \mathbf{s}_i)$ are jointly $\epsilon$-typical with respect to $P_{Z_iS_i}$.  If no such sequence was found, the error event $E_i$ will be declared. The encoder calculates $\mathbf{x}_i(\mathbf{s}_i, \mathbf{z}_i)$, and sends it through the channel. Define the event $E_c$ as the event in which $(\mathbf{Z}_1, \mathbf{Z}_2, \mathbf{s}_1, \mathbf{s}_2)$ are not jointly $\epsilon'$- typical with respect to the joint distribution $P_{Z_1Z_2S_1S_2}$. 
 
\textbf{Decoding:} 
 The decoder receives $y^n$ from the channel. Then it finds $\mathbf{\tilde{w}}_1 \in \mathcal{W}_1, \mathbf{\tilde{w}}_2 \in \mathcal{W}_2$, and ${  \mathbf{\tilde{v}}\in   A_\epsilon^{(n)}(V_1+ V_2)}$ such that the corresponding codeword defined as $$\mathbf{\tilde{z}}=\mathbf{\tilde{w}}_1\mathbf{G}_1+\mathbf{\tilde{w}}_2\mathbf{G}_2+\mathbf{\tilde{v}}\mathbf{\bar{G}}+\mathbf{b}_1+\mathbf{b}_2+\bar{\mathbf{b}}$$ is jointly $\tilde{\epsilon}$-typical with  $ \mathbf{Y}$ with respect to $P_{Z_1+ Z_2, Y}$. If $\mathbf{\tilde{w}}_1, \mathbf{\tilde{w}}_2$ are unique, then they are considered as the decoded messages. Otherwise an error event $E_d$ will be declared.
 
 \textbf{Error Analysis:}
 We use Lemma \ref{lem: covering} for $E_1$ and $E_2$. For that in the covering bound given in (\ref{eq: covering bound}) set $R=\rho_i, U=V_i, Q=\bar{Q}, \hat{X}=X_i$, and $X=S_i$, where $i=1,2$. As a result, $P(E_1)$ and $P(E_2) $ approaches zero as $n\rightarrow \infty$, if the covering bound holds:
\begin{align*}
  \rho_i > \max_{1\leq t \leq r} \frac{H(V_i|\bar{Q})}{H([V_i]_t|\bar{Q})} (\log_2 p^t-H([Z]_t|S_i)).
\end{align*}
Note that by Remark \ref{rem: size of a random QGC}, $\rho_i \leq \frac{l}{n} H(V_i|\bar{Q})+ \delta(\epsilon)$. Thus, the above bound gives the following bound
\begin{align}\label{eq: covering bound for encoder i}
 \frac{l}{n}  H([V_i]_t|\bar{Q}) > \log_2 p^t-H([Z]_t|S_i), ~ 1\leq t \leq r, ~ i=1,2.
\end{align}

{  
\paragraph{ \bf Analysis of $E_c \cap E^c_1 \cap E^c_2 $}
 Define the set 
\begin{align*}
\mathcal{E}_{\mathbf{s}_1, \mathbf{s}_2} \triangleq \Big\{   (\mathbf{z}_1, \mathbf{z}_2)\in \ZZ^n_{p^r} \times \ZZ_{p^r}^n :  ( \mathbf{z}_i,  \mathbf{s}_i)   \in A_{\epsilon}^{(n)}(Z_iS_i),  (\mathbf{z}_1, \mathbf{z}_2, \mathbf{s}_1, \mathbf{s}_2) \notin A_\epsilon^{(n)}(Z_1Z_2S_1S_2) , i=1,2   \Big \}.
\end{align*}
Therefore, probability of $E_c \cap E^c_1 \cap E^c_2$ can be written as  
\begin{align*}
P(E_c \cap E^c_1 \cap E^c_2 ) =  \sum_{(\mathbf{s}_1, \mathbf{s}_2) \in A_\epsilon^{(n)}(S_1, S_2)}P^n_{S_1,S_2}(\mathbf{s}_1, \mathbf{s}_2) \sum_{(\mathbf{z}_1, \mathbf{z}_2)\in \mathcal{E}_{\mathbf{s}_1, \mathbf{s}_2}} P( e_1(\Theta_1, \mathbf{s}_1)=\mathbf{x}_1,e_2(\Theta_2, \mathbf{s}_2)=\mathbf{x}_2 ),
\end{align*}
where $e_i$ is the output of the $i$th encoder, and $\Theta_i$ is the random message to be transmitted by encoder $i$, where $i=1,2$. To bound $P(E_c \cap E^c_1 \cap E^c_2)$, we use a similar argument as in the proof of Theorem \ref{thm: comp MACnon uniform}. We can show that, $\EE\{P(E_c \cap E^c_1 \cap E^c_2)\} \rightarrow 0$ as $n \rightarrow \infty$.

 \paragraph{ \bf Analysis of $E_d \cap (E_c \cup E_1 \cup E_2)^c $}

Next, we use Lemma \ref{lem: packing} to provide an upper-bound on $P(E_d \cap (E_c \cup E_1 \cup E_2)^c)$. Conditioned on $E_1^c\cap E_2^c$, the event $E_d$ is the same as the event of interest in Lemma \ref{lem: packing}. Set $\mathcal{C}_n=\mathcal{C}_{I, 1} + \mathcal{C}_{I, 2} + \mathcal{\bar{D}}$, and $R=R_1+R_2+\rho$. It can be shown that $P(E_d \cap (E_c \cup E_1 \cup E_2)^c)$ approaches zero, if the packing bound in (\ref{eq: packing bound}) holds. Since $W_i$ is uniform over $\{0,1\}$, then $H(W_i|Q, [W_i]_t)=0$ for all $t>0$. Therefore, the packing bound is simplified to 
\begin{align}\label{eq MAC st: packing}
R_1+R_2+\rho \leq \log_2p^r-H(Z_1+ Z_2|Y).
\end{align}
Note that  $\rho \leq \frac{l}{n} H(V_1+ V_2|{Q}) $. Therefore, if the bound  
\begin{align}\label{eq MAC st: Lower bound on R_1+R_2}
R_1+R_2 \leq \log_2p^r-H(Z_1+ Z_2|Y)-\frac{l}{n} H(V_1+ V_2|{Q}),
\end{align}
holds on $R_1+R_2$, then  \eqref{eq MAC st: packing} holds too.  Using \eqref{eq: covering bound for encoder i}, we establish a lower-bound on  $\frac{l}{n} H(V_1+ V_2|{Q})$. We have 
\begin{align}\label{eq MAC st: bound on H(V_1+V_2)}
 \frac{l}{n}  H(V_1+ V_2|{Q}) >  \frac{H(V_1+ V_2|{Q})}{ H([V_i]_t|\bar{Q}) } \left(\log_2 p^t-H([Z]_t|S_i),\right) ~ 1\leq t \leq r, ~ i=1,2.
\end{align}

Then combining \eqref{eq MAC st: Lower bound on R_1+R_2} and \eqref{eq MAC st: bound on H(V_1+V_2)} gives the following:
\begin{align*}
R_1+R_2 \leq \log_2p^r-H(Z_1+ Z_2|Y)- \frac{H(V_1+ V_2|{Q})}{ H([V_i]_t|\bar{Q}) } \left(\log_2 p^t-H([Z]_t|S_i) \right).
\end{align*}
Since these bounds hold for $i=1,2$, and $1\leq t \leq r$, we get the bound in the theorem. 
}
\end{proof}

\begin{corollary}
The rate region given in Theorem \ref{thm: QGC MAC with state achievable} contains the achievable rate region using group codes and linear codes. For that let $V_i, i=1,2$ be distributed uniformly over $\ZZ_{p^r}$. Therefore, we get the bound 
\begin{align*}
R_1+R_2 &\leq \min_{\substack{i=1,2\\ 1\leq t \leq r}}\{H([Z_i]_t| Q S_i)\} - H(Z_1+ Z_2|YQ) .
\end{align*}
\end{corollary}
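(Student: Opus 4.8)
The plan is to obtain the corollary as a direct specialization of Theorem~\ref{thm: QGC MAC with state achievable}: take the auxiliary variables $V_1,V_2$ in that theorem to be independent, uniform over $\ZZ_{p^r}$, and independent of $Q$ --- equivalently, let the outer codes $\bar{\mathcal C}_i$ be ordinary (shifted) group codes rather than proper QGCs. Everything then reduces to evaluating the information quantities in the rate bound of Theorem~\ref{thm: QGC MAC with state achievable} for this choice, and the required factorization $P_{S_1S_2}P_QP_{Y|X_1X_2}\prod_i P_{V_i|Q}P_{Z_i|QS_i}P_{X_i|QZ_iS_i}$ is respected, since $V_i$ is only asked to be conditionally independent of the remaining variables given $Q$.

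Two elementary computations do all the work. First, because $V_1$ is uniform over $\ZZ_{p^r}$ and independent of $V_2$, the modulo-$p^r$ sum $V_1+V_2$ is again uniform over $\ZZ_{p^r}$, so $H(V_1+V_2|Q)=H(V_i|Q)=\log_2 p^r$. Second, by the transversal decomposition $a=[a]_t+h$ with $h\in H_t$ and $[a]_t\in T_t=\{0,\dots,p^t-1\}$, the variable $[V_i]_t$ is the reduction of $V_i$ modulo $p^t$; since $p^t\mid p^r$, a uniform variable on $\ZZ_{p^r}$ is pushed forward to a uniform variable on the $p^t$-element set $T_t$, hence $H([V_i]_t|Q)=\log_2 p^t$. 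Therefore the coefficient $H(V_1+V_2|Q)/H([V_i]_t|Q)$ in the penalty term of Theorem~\ref{thm: QGC MAC with state achievable} equals $r/t$, and the $(i,t)$ summand becomes $\tfrac{r}{t}\bigl(\log_2 p^t-H([Z_i]_t|QS_i)\bigr)=\log_2 p^r-\tfrac{r}{t}H([Z_i]_t|QS_i)$.

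Maximizing over $i\in\{1,2\}$ and $1\le t\le r$, the penalty term is $\log_2 p^r-\min_{i,t}\tfrac{r}{t}H([Z_i]_t|QS_i)$; substituting into $R_1+R_2\le r\log_2 p-H(Z_1+Z_2|YQ)-(\text{penalty})$ cancels the two copies of $r\log_2 p$ and yields $R_1+R_2\le \min_{i,t}\tfrac{r}{t}H([Z_i]_t|QS_i)-H(Z_1+Z_2|YQ)$, which is exactly the rate region achievable by group codes over $\ZZ_{p^r}$ (and, via the sub-case $r=1$ or by embedding $\ZZ_{p^r}$ into a prime field, contains the linear-code region). Finally, since $r/t\ge 1$ on $1\le t\le r$ we have $\min_{i,t}\tfrac{r}{t}H([Z_i]_t|QS_i)\ge \min_{i,t}H([Z_i]_t|QS_i)$, so the rate region written in the corollary is contained in the one just derived and is therefore achievable by nested QGCs; this is the claimed containment.

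The argument is pure bookkeeping, so there is no deep obstacle. The two places that need care are (i) the identity $H([V_i]_t|Q)=\log_2 p^t$, which uses the concrete structure of the decomposition $\ZZ_{p^r}=T_t+H_t$ rather than any generic entropy estimate, and (ii) tracking the direction of the inequality when the harmless factor $r/t$ is discarded, so that the specialized bound is correctly seen to sit \emph{inside} the achievable region of Theorem~\ref{thm: QGC MAC with state achievable} rather than outside it.
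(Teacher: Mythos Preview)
Your proposal is correct and follows exactly the approach the paper indicates: specialize Theorem~\ref{thm: QGC MAC with state achievable} by taking $V_1,V_2$ uniform over $\ZZ_{p^r}$ and compute. In fact you are more careful than the paper, which simply asserts the displayed bound; you correctly obtain the sharper expression $\min_{i,t}\tfrac{r}{t}H([Z_i]_t|QS_i)-H(Z_1+Z_2|YQ)$ and then observe that dropping the factor $r/t\ge 1$ only shrinks the region, so the weaker bound stated in the corollary is a fortiori achievable.
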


Jafar \cite{Jafar-MAC-state} used the Gel'fand-Pinsker approach for the point-to-point channel coding with states, and proposed a coding scheme using unstructured random codes. Using this scheme a  single-letter and  computable rate region is characterized.

%\begin{proposition}[{\cite{Jafar-MAC-state}}]\label{prep: Gelfand-pinker achievable}
%For a MAC $(\mathcal{X}_1,\mathcal{X}_2, \mathcal{Y}, P_{Y|X_1X_2})$ with independent states $(S_1,S_2)$ and cost functions $c_1,c_2$, the \textit{closure} and \textit{convex hull} of all rate-pairs $(R_1,R_2)$ satisfying the following conditions are achievable.

\begin{definition}\label{def:  mac with states gelfand pinsker achievable rate}
For a MAC $(\mathcal{X}_1,\mathcal{X}_2, \mathcal{Y}, P_{Y|X_1X_2})$ with  states $(S_1,S_2)$ and cost functions $c_1,c_2$, define $\mathscr{R}_{GP}$ as 
\begin{align}%\label{eq: gelfand-pinsker bound 1}
%R_1 &\leq I(U_1; Y|Q,U_2)-I(U_1;S_1|Q)\\
%R_2 &\leq I(U_2; Y|Q,U_1)-I(U_2;S_2|Q)\\
\max \Big \{ I(U_1 U_2; Y|Q)-I(U_1;S_1|Q)-I(U_2;S_2|Q)\Big\},\label{eq: sum-rate}
\end{align}
where the maximization is taken over all joint probability distributions $P_{S_1S_2QU_1U_2X_1X_2Y}$ satisfying  $\EE\{c_i(X_i,S_i)\}\leq \tau_i$ for $i=1,2$, and factoring as   $$ P_Q P_{S_1S_2}P_{Y|X_1X_2} \prod_{i=1,2} P_{U_iX_i|S_iQ}.$$
The collection of all such PMFs $P_{S_1S_2QU_1U_2X_1X_2Y}$ is denoted by $\mathscr{P}_{GP}$.
\end{definition}

To the best of our knowledge, $\mathscr{R}_{GP}$ is the current largest achievable rate region using unstructured codes for the problem of MAC with states \cite{Jafar-MAC-state}. %We proposed a coding strategy using nested QGC for to achieve the rate region presented in Theorem \ref{thm: QGC MAC with state achievable}. %We build upon this coding scheme and unstructured coding schemes, and propose a new coding strategy. Using this approach, a new achievable rate region is characterized which contains $\mathscr{R}_{GP}$.

%\begin{theorem}\label{thm: QGC+unstructured ahcievable}
%For a given MAC $(\mathcal{X}_1,\mathcal{X}_2, \mathcal{Y}, P_{Y|X_1X_2})$ with independent states $(S_1,S_2)$ and cost functions $c_1,c_2$,  the following rate region is achievable 
%\begin{align*}
%R_1 &\leq I(U_1;Y|U_2Q)-I(U_1;S_1|Q) +\Gamma_{QGC}\\
%R_1 &\leq I(U_2;Y|U_1Q)-I(U_2;S_2|Q) +\Gamma_{QGC}\\
%R_1+R_2 &\leq I(U_1U_2;Y|Q)-I(U_1U_2;S_1S_2|Q)+\Gamma_{QGC},
%\end{align*}
%where 
%\begin{align*}
%\Gamma_{QGC}& \triangleq r\log_2p - H(V_1+ V_2|YU_1U_2Q)- \max_{\substack{i=1,2\\ 1\leq t \leq r}}\Big\{ \frac{H(W_1+ W_2|Q)}{H([W_i]_t|Q)} \Big(\log_2 p^t-H([V_i]_t|U_i Q S_i)\Big)\Big\},
%\end{align*}
%and 1) the cost constraints $\EE\{c_i(X_i,S_i)\}\leq \tau_i$ are satisfied, 2) the Markov chain $$(S_1,U_1,V_1,W_1,X_1)\leftrightarrow Q \leftrightarrow (S_2,U_2,V_2, W_2,X_2)$$ holds, 3)given $Q, X_1,X_2$ the random variable $Y$ is independent of all other random variables, and 3) conditioned on $Q$, the random variables $W_1,W_2$ are independent of other random variables.
%\end{theorem}
%
%\begin{proof}
%The proof is given in Appendix \ref{sec: proof of thm QGC+unstructured achievable}.
%\end{proof}
%\begin{remark}
%The rate region presented in Theorem \ref{thm: QGC+unstructured ahcievable} contains $\mathscr{R}_{GP}$ which is introduced in Definition \ref{def:  mac with states gelfand pinsker achievable rate}.
%\end{remark}

\subsection{An Example}
We present a MAC with state setup for which $\mathscr{R}_{GP}$ is strictly contained in the region characterized in Theorem \ref{thm: QGC MAC with state achievable}.
\begin{example}\label{ex: MAC states}
Consider a noiseless MAC given in the following $$Y=X_1\oplus_4 S_1 \oplus_4 X_2 \oplus_4 S_2,$$
where $X_1, X_2$ are the inputs, $Y$ is the output, and $S_1,S_2$ are the states. All the random variables take values from $\ZZ_4$. The states $S_1$ and $S_2$ are mutually independent, and are distributed uniformly over $\ZZ_4$. The cost function at the first encoder is defined as 
\begin{align*}
c_1(x)\triangleq \left\{
                \begin{array}{ll}
                  1 & if x\in \{1,3\} \\
                  0 & otherwise,
                  \end{array}
              \right.
\end{align*}
whereas, for the second encoder the cost function is 
\begin{align*}
c_2(x)\triangleq \left\{
                \begin{array}{ll}
                  1 & if x\in \{2,3\} \\
                  0 & otherwise.
                  \end{array}
              \right.
\end{align*}
We are interested in satisfying the  cost constraints $\EE\{c_1(X_1)\}=\EE\{c_2(X_2)\}=0$. This implies that, with probability one, $X_1\in \{0, 2\}$, and  $X_2\in \{0, 1\}$.
\end{example}

We proceed using two lemmas. %First, we derive an outer-bound for $\mathscr{R}_{GP}$ . Then we show that the outer-bound is strictly contained in the achievable rate region presented in Theorem \ref{thm: QGC+unstructured ahcievable}.

\begin{lem}\label{lem: suboptimality of Gelfand-Pinsker}
For the setup in Example \ref{ex: MAC states},  an outer-bound for $\mathscr{R}_{GP}$  is the set of all rate pairs $(R_1, R_2)$ such that $R_1+R_2< 1$.
\end{lem}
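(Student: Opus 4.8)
The plan is to show that for every joint distribution feasible in the definition of $\mathscr{R}_{GP}$ the sum-rate functional
$$\Sigma := I(U_1U_2;Y\mid Q)-I(U_1;S_1\mid Q)-I(U_2;S_2\mid Q)$$
is bounded by $1$ (in fact strictly); since $\mathscr{R}_{GP}$ is the maximum of $\Sigma$ over that class, this is exactly the asserted outer bound. Recall that in Example~\ref{ex: MAC states} we have $Y=X_1+S_1+X_2+S_2$ over $\ZZ_4$, the states $S_1,S_2$ are independent and uniform, and the zero-cost constraints force $X_1\in\{0,2\}$ and $X_2\in\{0,1\}$ almost surely. First I would reduce to a trivial time-sharing variable: because $(S_1,S_2)$ is independent of $Q$ and the almost-sure support constraints on $X_1,X_2$ persist under conditioning on any value $Q=q$, the quantity $\Sigma$ is a finite convex combination over $q$ of the corresponding conditional sum-rate expressions, so it suffices to bound the $Q$-free version (and, for the strict statement, to invoke compactness of the feasible set under the usual cardinality bounds on the auxiliaries).

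The heart of the argument is the estimate $H(Y\mid U_1U_2)\ge 1-I(U_1;S_1)$. I would prove it as follows. Since $X_1\in\{0,2\}$ is even, reducing the channel law modulo $2$ gives $Y\bmod 2=(S_1+X_2+S_2)\bmod 2=S_1^{\ell}\oplus B$, where $S_1^{\ell}:=S_1\bmod 2$ is uniform on $\{0,1\}$ and $B:=(X_2+S_2)\bmod 2$. The factorization $P_{S_1S_2}\prod_i P_{U_iX_i\mid S_iQ}$ together with $S_1\perp S_2$ makes the ``encoder-$1$ world'' $(S_1,U_1,X_1)$ independent of the ``encoder-$2$ world'' $(S_2,U_2,X_2)$, so $(S_1^{\ell},U_1)\perp(B,U_2)$, and hence
$$H(Y\mid U_1U_2)\ge H(S_1^{\ell}\oplus B\mid U_1U_2)\ge H(S_1^{\ell}\oplus B\mid U_1U_2B)=H(S_1^{\ell}\mid U_1)=1-I(S_1^{\ell};U_1)\ge 1-I(S_1;U_1).$$
Combining this with $H(Y)\le\log_2 4=2$ yields
$$\Sigma=H(Y)-H(Y\mid U_1U_2)-I(U_1;S_1)-I(U_2;S_2)\le 2-\big(1-I(U_1;S_1)\big)-I(U_1;S_1)-I(U_2;S_2)=1-I(U_2;S_2)\le 1,$$
so every $(R_1,R_2)\in\mathscr{R}_{GP}$ obeys $R_1+R_2\le 1$. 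The intuition is that encoder $1$, confined to the even symbols, cannot cancel the low bit of its own state, and that bit is unknown to encoder $2$, so one uniform bit of the two-bit output is irretrievably lost under unstructured (Gel'fand–Pinsker) coding.

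To upgrade ``$\le 1$'' to the strict inequality in the statement, I would rule out simultaneous equality in all the steps above. Equality forces $I(U_2;S_2)=0$ (so $U_2\perp S_2$), $H(Y)=2$, and $H(Y\mid U_1U_2)=H(Y\bmod 2\mid U_1U_2)=1-I(S_1;U_1)$. Writing $Y$ explicitly through its carry, $Y=2\big(X_1^{h}\oplus S_1^{h}\oplus S_2^{h}\oplus c\big)+\big(S_1^{\ell}\oplus X_2\oplus S_2^{\ell}\big)$ with $c=\11\{S_1^{\ell}+X_2+S_2^{\ell}\ge 2\}$, and using that $S_2^{h}\perp(U_1,S_1,U_2)$ when $U_2\perp S_2$, one checks that $\lfloor Y/2\rfloor$ retains strictly positive conditional entropy given $(Y\bmod 2,U_1,U_2)$ unless encoder $2$ makes $X_2$ a deterministic function of its state (hence sends zero rate); in that remaining case $X_2+S_2$ acts as a nonuniform additive distortion of entropy $>1$ on $Y$, which again forces $\Sigma<1$. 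Together with $\Sigma\le 1-I(U_2;S_2)$ and the compactness reduction above, this gives $\mathscr{R}_{GP}<1$. I expect this equality-case bookkeeping to be the main obstacle; the plain bound $\Sigma\le 1$ already suffices to separate $\mathscr{R}_{GP}$ from the region achieved by nested QGCs in Theorem~\ref{thm: QGC MAC with state achievable}.
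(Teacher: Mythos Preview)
Your route to the non-strict bound $\Sigma\le 1$ is correct and genuinely different from the paper's. The paper starts from the same inequality $\Sigma\le H(S_1\mid U_1)+H(S_2\mid U_2)-H(Y\mid U_1U_2)-2$, then freezes $(u_1,u_2)$ and runs a four-case analysis on the cardinality of the image of $s_2\mapsto f_2(u_2,s_2)+s_2$, ultimately invoking a separate lemma bounding $H(S)-H(X+S+N)$ for several explicit noise distributions. Your observation that reducing the channel law modulo $2$ kills $X_1$ (since $X_1\in\{0,2\}$) and leaves $S_1^\ell\oplus B$ with $B$ independent of the encoder-$1$ world is a much cleaner single-shot argument: it yields $H(Y\mid U_1U_2)\ge 1-I(U_1;S_1)$ and hence $\Sigma\le 1-I(U_2;S_2)\le 1$ in two lines, bypassing the entire case split and the auxiliary lemma. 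This is a real improvement in simplicity, and the intuition you articulate (encoder $1$ cannot touch the low bit of its own state) is exactly the right one.

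The gap is the strictness. First, your closing remark that ``the plain bound $\Sigma\le 1$ already suffices to separate $\mathscr{R}_{GP}$ from the region achieved by nested QGCs'' is incorrect: the QGC scheme in the paper achieves the sum-rate exactly equal to $1$, so $\mathscr{R}_{GP}\le 1$ establishes nothing by itself --- you really do need $\mathscr{R}_{GP}<1$. Second, your equality-case sketch is incomplete. Tracing your own chain, equality forces simultaneously $I(U_2;S_2)=0$, $H(Y)=2$, $H(\lfloor Y/2\rfloor\mid Y\bmod 2,U_1,U_2)=0$, $I(B;S_1^\ell\oplus B\mid U_1U_2)=0$, and $I(S_1;U_1)=I(S_1^\ell;U_1)$. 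The second and third of these interact with the carry bit in a way your sketch does not resolve; in particular, the alternative you single out (``$X_2$ a deterministic function of $S_2$, hence $X_2+S_2$ a nonuniform additive distortion of entropy $>1$'') is not automatic --- for instance $X_2=\11\{S_2\in\{0,2\}\}$ makes $X_2+S_2$ uniform on $\{1,3\}$ with entropy exactly $1$. The paper's case analysis is heavier precisely because it is doing the work of ruling out all such boundary configurations; if you want to keep your mod-$2$ shortcut, you will still need a careful (though likely short) argument that the five equality conditions above are jointly infeasible, and then invoke compactness as you propose.
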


\begin{proof}
See Appendix \ref{sec: proof of Lemma MAC with states}.
\end{proof}
Using numerical analysis, we can provide a tighter bound on the sum-rate which is $R_1+R_2 \leq 0.32$. However, the bound in Lemma \ref{lem: suboptimality of Gelfand-Pinsker} is sufficient  for the purpose of this paper.

\begin{lem}
For the MAC with states problem in Example  \ref{ex: MAC states}, the rate pairs $(R_1,R_2)$ satisfying $R_1+R_2= 1$ is achievable. 
\end{lem}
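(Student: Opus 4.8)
The plan is to instantiate the nested-QGC coding scheme underlying Theorem \ref{thm: QGC MAC with state achievable} (tailored to the example, rather than merely quoting the theorem's formula) for the noiseless modulo-$4$ channel of Example \ref{ex: MAC states}, and to exhibit a feasible choice of auxiliary random variables for which the achievable sum-rate equals $1$. Since the theorem and its proof constrain only $R_1+R_2$, producing such a choice places every pair on the line $R_1+R_2=1$ in the achievable region.

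The structural ingredient I would exploit is the decomposition $\ZZ_4=H_1\oplus T_1=\{0,2\}\oplus\{0,1\}$ that the cost constraints force: $\EE\{c_1(X_1)\}=0$ confines $X_1$ to the subgroup $H_1$ and $\EE\{c_2(X_2)\}=0$ confines $X_2$ to the transversal $T_1$, while $X_1+X_2$ still sweeps all of $\ZZ_4$. Accordingly I would let encoder $1$ carry its message on a QGC valued in $H_1^n$ and encoder $2$ on a QGC valued in $T_1^n$, each doing Gelfand-Pinsker binning against its own (non-causally observed) state via a nested QGC as in Section \ref{sec: binning for QGCs}; the decoder then jointly-typicality-decodes the sum of the two transmitted codewords against $\mathbf{Y}^n$ --- the decoder codebook, the sum of the two outer QGCs, being again a QGC by Lemma \ref{lem:sum of two quasi group code} --- and extracts both messages from it. The single-letter quantities governing the scheme are the covering exponents $\log_2p^t-H([Z_i]_t\mid QS_i)$ charged to the outer bins (Lemma \ref{lem: covering}) and the residual uncertainty $H(Z_1+Z_2\mid YQ)$ appearing in the packing step (Lemma \ref{lem: packing}); these must be traded against one another so that their combined cost against the budget $r\log_2p=2$ leaves exactly $1$.

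The hard part is carrying out that trade-off explicitly. Because $X_1$ is even, encoder $2$ alone writes the least-significant bit of $Y$, where the summand $[S_1]_1$ enters as noise it cannot cancel; conversely, encoder $1$ can devote its state knowledge to the higher-order component of the sum codeword while encoder $2$ shapes the carry triggered by the low-order addition. These carries couple the two encoders' least-significant codeword symbols in a way that no per-symbol identity disentangles, so the decoder's joint-typicality step must be justified at the block level through the packing estimate; one then has to verify --- this is the only substantial computation --- that the packing bound for the decoder's sum-QGC and the two covering bounds for the encoders can be met simultaneously precisely when $R_1+R_2=1$, for a concrete choice of $(Q,V_1,V_2,Z_1,Z_2,X_1,X_2)$ consistent with the factorization of Theorem \ref{thm: QGC MAC with state achievable} and with $X_1\in\{0,2\}$, $X_2\in\{0,1\}$ almost surely. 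Taken together with the outer bound $R_1+R_2<1$ of Lemma \ref{lem: suboptimality of Gelfand-Pinsker}, this also shows that the improvement of QGCs over Gelfand-Pinsker coding --- and hence over all unstructured, linear- and group-code strategies here --- is strict. The full verification is deferred to the appendix.
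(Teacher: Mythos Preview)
Your high-level plan---instantiate the nested-QGC scheme of Theorem \ref{thm: QGC MAC with state achievable} and exhibit auxiliaries pushing the sum-rate to $1$---matches the paper's. But the proposal stops where the proof must begin, and the intuition you offer in lieu of the computation actually points away from the key idea.

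The decisive step, which your discussion of carries misses, is to let the nested-QGC codeword at encoder $i$ be the auxiliary $Z_i$ taking values in all of $\ZZ_4$ (not to confine the codebook to $H_1$ or $T_1$) and to transmit $X_i=Z_i\ominus S_i$. Then $Y=(Z_1-S_1)+S_1+(Z_2-S_2)+S_2=Z_1+Z_2$ holds \emph{symbol by symbol}, so $H(Z_1+Z_2\mid Y)=0$ and the packing bound has nothing to fight. The cost constraints are met through the conditional law, not through the codebook alphabet: take $p(z_1\mid s_1)$ uniform on $\{-s_1,-s_1+2\}$ and $p(z_2\mid s_2)$ uniform on $\{-s_2,-s_2+1\}$, which forces $X_1\in\{0,2\}$ and $X_2\in\{0,1\}$ almost surely. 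With $Q$ trivial and these distributions in hand, the relevant entropies ($H(Z_i\mid S_i)=1$, $H(Z_1+Z_2)=3/2$, $H(Z_1+Z_2\mid Y)=0$) drop out immediately and the covering/packing balance of Theorem \ref{thm: QGC MAC with state achievable} yields $R_1+R_2=1$; in the paper this is a few lines of arithmetic, not a deferred appendix. Your assertion that ``no per-symbol identity disentangles'' the coupling is therefore the genuine gap: the identity $Y=Z_1+Z_2$ does exactly that, and once you see it the verification you postpone becomes immediate.
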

\begin{proof}
We use the proposed scheme presented in the proof of Theorem \ref{thm: QGC MAC with state achievable}. Similar to the proof of the Theorem, two $(n, k, l)$ nested QGC are used, one for each encoder.  Set $W_1$ and $W_2$, the random variables associated with the QGC, to be distributed uniformly over $\{0,1\}$. Suppose $\mathbf{v}_1, \mathbf{v}_2$ are the output of the nested-QGC at encoder 1 and encoder 2, respectively. Encoder 1 sends $\mathbf{x}_1=\mathbf{v}_1\ominus \mathbf{s}_1$, where $\mathbf{s}_1$ is the realization of the state $S_1$. Similarly, the second encoder sends $\mathbf{x}_2=\mathbf{v}_2\ominus \mathbf{s}_2$, where $\mathbf{s}_2$ is the realization of the state $S_2$. The conditional distribution of $v_1$ given $s_1$ is 
\begin{align*}
p(v_1|s_1)\triangleq \left\{
                \begin{array}{ll}
                  1/2 & \text{if} ~ v_1=-s_1, \text{or} ~ v_1=-s_1+ 2 \\
                  0 & \text{otherwise},
                  \end{array}
              \right.
\end{align*}
The distribution of $V_2$ conditioned of $S_2$ is
\begin{align*}
p(v_2|s_2)\triangleq \left\{
                \begin{array}{ll}
                  1/2 & \text{if} ~ v_2=-s_2, \text{or} ~ v_2=-s_2+ 1 \\
                  0 & \text{otherwise},
                  \end{array}
              \right.
\end{align*}
As a result, $X_1\in \{0,2\}, X_2\in \{0,1\}$. Hence, the cost constraints are satisfied. In this situation,  $H([V_i]_1)=H(V_i)=1,$ for $i=1,2$, and $H(V_1+ V_2)=\frac{3}{2}$.  Therefore, assuming $Q$ is trivial, the sum-rate given in the Theorem is simplified to 
\begin{align*}
R_1+R_2 &\leq   \frac{3}{2}\min\{H(V_1|S_1), H(V_2|S_2)\}\\& - H(V_1+ V_2|Y)-\frac{1}{2} =1,
\end{align*}
where the last equality holds, because $H(V_i|S_i)=1$, and $H(V_1+ V_2|Y)=H(X_1+ S_1+  X_2+ S_2|Y)=0$. As a result the sum-rate $R_1+R_2=1$ is achievable.
\end{proof}

%\section{QGC over General Abelian Groups}
%\begin{itemize}
%
%\item	General case:\\
%	General Finite abelian groups \\
%	State fundamental theorem of finitely generated abelian group codes for the most general cases. This addresses the reason why we study quasi group codes over $\ZZ_{p^r}$. 
%\item	General function F(X,Y)\\
%	 rate region for computation of a function using embedding.
%\item	Sufficient and necessary condition for which the achievable rate region for computation of F(X,Y) coincides with the Slepian-Wolf rate region.
%
%\end{itemize}

\section{Conclusion}\label{sec: conclusion}
A new class of structured codes called Quasi Group Codes was introduced, and basic properties and performance limits of such codes were investigate. The asymptotic performance limits of QGCs was characterized using single-letter information quantities. The PtP channel capacity and optimal rate-distortion function are achievable using QGCs. coding strategies based on QGCs were studied for  three multi-terminal problems:   the K\"orner-Marton problem for modulo prime-power sums, computation over MAC, and MAC with States. For each problems, a coding scheme based on (nested) QGCs was introduced, and  a single-letter achievable rate-region was derived. The results show that the coding scheme improves upon coding strategies based on unstructured codes, linear codes and group codes. 
\appendices
%\appendix
\section{ }
\subsection{Proof of Lemma \ref{lem: size of U}}\label{APP: proof of lemma size of U}
\begin{proof}
Using \eqref{eq: U} we get $	\mathcal{U}_n = \bigotimes_{q\in \mathcal{Q}} A_{\epsilon}^{(k_{q,n})}(U_q) $, where $k_{q,n}=P_Q(q) k_n$, and the distribution of $U_q$ is the same as the conditional distribution of $U$ given $Q=q$. Using well-known results on the size of $\epsilon$-typical sets we can provide a bound on $|A_{\epsilon}^{(k_{q,n})}(U_q)|$. More precisely, there exists $N_{q}$ such that for all $k_{q,n}> c N_{q}$, we have  $  |\frac{1}{k_{q,n}} \log_2 |A_{\epsilon}^{(k_{q,n})}(U_q)|-H(U_q) |\leq 2\epsilon'_q$, where using the same argument as in \cite{Csiszar} $$\epsilon'_q = -\frac{\epsilon}{p^r} \sum_{a\in \ZZ_{p^r}, P(U_q=a)>0}\log_2 P(U_q=a).$$ Therefore,
\begin{align*}
\frac{1}{k_n}\log_2 |\mathcal{U}_n| &= \frac{1}{k_n}\sum_{q\in \mathcal{Q}} \log_2 |A_{\epsilon}^{(k_{q,n})}(U_q)|\\
&\leq \sum_{q\in \mathcal{Q}} \frac{k_{q,n}}{k_n} (H(U_q)+2\epsilon'_q)\\
& \stackrel{(a)}{=}H(U|Q)+\sum_{q\in \mathcal{Q}} P_Q(q) 2 \epsilon'_q\leq H(U|Q)+2\epsilon', 
\end{align*}
where $\epsilon'=2 \max_{q\in Q} \epsilon'_q$. Note $(a)$ holds as $P_Q(q)=k_{q,n}/k_n$. Using a similar argument we can show that $\frac{1}{k_n}\log_2 | \mathcal{U}_n| \geq H(U|Q)-\epsilon'$. Finally, by setting $N= \max_q N_q$, and combining the bounds on $\frac{1}{k_n}\log_2 |\mathcal{U}_n|$ the proof is completed.
\end{proof}

\subsection{Proof of Lemma \ref{lem: injective map}}\label{APP: proof of lemma injective map}
\begin{proof}
As $\mathbf{G}_n$ is a random matrix, then $\Phi_n$ is a randomly selected map. Fix an arbitrary $\mathbf{u}\in \mathcal{U}_n$. We have
\begin{align}\label{eq: p injective}
P\{\exists \mathbf{u}'\in \mathcal{U}_n: \mathbf{u'}\neq \mathbf{u}, ~ \Phi_n(\mathbf{u'})= \Phi_n(\mathbf{u})  \} & \leq \sum_{\substack{ \mathbf{u}' \in \mathcal{U}_n\\ \mathbf{u}'\neq \mathbf{u}}} P\{\Phi_n(\mathbf{u'})= \Phi_n(\mathbf{u})\},
\end{align}
where the inequality follows from the union bound. Let $H_s=p^s\ZZ_{p^r}$ be a subgroup of $\ZZ_{p^r}$, where $s\in [0:r-1]$. If $a\in \ZZ_{p^r} -\{0\}$, then there exits a maximum $s\in [0:r-1]$ such that $a\in H_s$. That is $a \in H_s$ and  $ a \notin  H_{t}$ for all $t>s$. As a result, for any $\mathbf{u'}\in \mathcal{U}_n$ there are $r$ cases for the maximum $s$ such that $u-u' \in H^{k_n}_s$. Considering these cases, we obtain
 \begin{align}\label{eq: p injective bound 1}
\sum_{\substack{ \mathbf{u}' \in \mathcal{U}_n\\ \mathbf{u}'\neq \mathbf{u}}} P\{\Phi_n(\mathbf{u'})= \Phi_n(\mathbf{u})\} &=\sum_{s=0}^{r-1} \sum_{\substack{ \mathbf{u}' \in \mathcal{U}_n \\ \mathbf{u}'-\mathbf{u} \in H^{k_n}_s \backslash H^{k_n}_{s+1} }} P\{\Phi_n(\mathbf{u'})= \Phi_n(\mathbf{u})\}
\end{align}
Since $\Phi_n$ is a linear map, we have $P\{\Phi_n(\mathbf{u}')=\Phi_n(\mathbf{u})\}=P\{\Phi_n(\mathbf{u}'-\mathbf{u})=0\}$. Next, we use Lemma \ref{lem: P(phi)} (see Appendix \ref{sec: useful lemmas}). Since $\mathbf{u'}-\mathbf{u}\in H^{k_n}_s \backslash H^{k_n}_{s+1}$, then $P\{\Phi_n(\mathbf{u}'-\mathbf{u})=0\}=p^{-n(r-s)}$. Therefore, using \eqref{eq: p injective} and \eqref{eq: p injective bound 1} we get
\begin{align}\label{eq: p injective bound 2}
P\{\exists \mathbf{u}'\neq \mathbf{u} : \Phi_n(\mathbf{u'})= \Phi_n(\mathbf{u})  \} & \leq  \sum_{s=0}^{r-1}\sum_{\substack{ \mathbf{u}' \in \mathcal{U}_n\\ \mathbf{u}'-\mathbf{u} \in H^{k_n}_s }} p^{-n(r-s)}
\end{align}
Next, we replace the summation over $\mathbf{u}'$ with the size of the set  $\mathcal{U}_n \bigcap (\mathbf{u} + H^{k_n}_s )$. Since $\mathcal{U}_n$ is a Cartesian product of typical sets, we use Lemma \ref{lem: typical set intersection subgroup} (see Appendix \ref{sec: useful lemmas}) to obtain the following bound  $$|\mathcal{U}_n \bigcap (\mathbf{u} + H^{k_n}_s )| \leq \prod_q 2^{k_{q,n}\left(H(U_q|[U_q]_s)+\epsilon'_q\right) },$$ where $k_{q,n}=P_Q(q)k_n$. Therefore the right-hand side of \eqref{eq: p injective bound 2} is bounded by  
\begin{align}\label{eq: p injective bound 3}
\sum_{s=0}^{r-1}\sum_{\substack{ \mathbf{u}' \in \mathcal{U}_n\\ \mathbf{u}'-\mathbf{u} \in H^{k_n}_s }} p^{-n(r-s)} & \leq \sum_{s=0}^{r-1} 2^{k_n\left(H(U|Q [U]_s)+ \epsilon'\right)}  p^{-n(r-s)}
\end{align}

By assumption of the lemma, suppose $\mathbf{U}\in \mathcal{U}_n$ is chosen randomly and uniformly. Then using \eqref{eq: p injective} and \eqref{eq: p injective bound 3} we obtain 
\begin{align}\label{eq: p injective bound 4}
\sum_{\mathbf{u}\in \mathcal{U}_n} \frac{1}{|\mathcal{U}_n|}P\{\exists \mathbf{u}'\in \mathcal{U}_n: \mathbf{u'}\neq \mathbf{U}, ~ \Phi_n(\mathbf{u'})= \Phi_n(\mathbf{u})  \}\leq \sum_{s=0}^{r-1} 2^{k_n\left(H(U|Q [U]_s)+ \epsilon'\right)}  p^{-n(r-s)}
\end{align}

Since by assumption,  $H(U|[U]_s,Q) < \frac{1}{c} (r-s) \log_2p, \forall s\in [0 : r-1]$, the right-hand side of \eqref{eq: p injective bound 4} approaches zero as $n\rightarrow \infty$. This implies that for any randomly selected $\mathbf{U} \in \mathcal{U}_n$, the size of the inverse image $|\Phi_n^{-1}(\Phi(\mathbf{U}))|=1$ with probability at least $(1-\delta)$.
\end{proof}

\section{Proof of Lemma \ref{lem: packing}} \label{sec: proof of the packing lemma}

\begin{proof}
Let $\mathcal{C}_n$ be the random $(n,k_n)$-QGC as in Lemma \ref{lem: packing}. For shorthand, for any $\mathbf{u} \in \mathcal{U}_n$, denote $\Phi_n(\mathbf{u})= \mathbf{u}\mathbf{G}_n$, where $\mathbf{G}_n$ is the random matrix corresponding to $\mathcal{C}_n$. Fix $\mathbf{u}_0 \in \mathcal{U}_n$. Without loss of generality assume $\mathbf{c}(\theta)=\Phi_n(\mathbf{u}_0)+B$, where $B$ is the translation associated with $\mathcal{C}_n$. Define the event $\mathcal{E}_n(\mathbf{u}):=\{(\Phi_n(\mathbf{u})+B,\tilde{\mathbf{Y}})\in A_\epsilon^{(n)}(X,Y)\}$,  and let $\mathcal{E}_n$ be the event of interest as given in the lemma. Then $\mathcal{E}_n$ is the union of $\mathcal{E}_n(\mathbf{u})$ for all $\mathbf{u}\in \mathcal{U}_n\backslash \{\mathbf{u}_0\}$. By the union bound, the probability of $\mathcal{E}_n$ is bounded as 
\begin{align}\label{eq: union bound on P(E)}
P(\mathcal{E}_n)\leq \sum_{\substack{\mathbf{u}\in \mathcal{U}_n\\ \mathbf{u}\neq \mathbf{u}_0}} P(\mathcal{E}_n(\mathbf{u}))
\end{align}

For any $\mathbf{u}\in \mathcal{U}_n$, the probability of  $\mathcal{E}_n(\mathbf{u})$, can be calculated as,
\begin{align}\label{eq: inner_probability}
P(\mathcal{E}_n(\mathbf{u})) &= \sum_{\mathbf{x}_0 \in \ZZ_{p^r}^n}  \sum_{\mathbf{y} \in  \mathcal{Y}^n }P(\Phi_n(\mathbf{u}_0)+B=\mathbf{x}_0,\tilde{\mathbf{Y}}=\mathbf{y}, \mathcal{E}_n(\mathbf{u}))\\\label{eq: inner_probability_eq2}
&= \sum_{\mathbf{x}_0 \in \ZZ_{p^r}^n}  \sum_{\mathbf{y} \in  A_{\epsilon}^{(n)}(Y) }  \sum_{\mathbf{x}: (\mathbf{x},\mathbf{y}) \in A_\epsilon^{(n)}(X,Y)} P(\Phi_n(\mathbf{u}_0)+B=\mathbf{x}_0,\tilde{\mathbf{Y}}=\mathbf{y}, \Phi_n(\mathbf{u})+B=\mathbf{x} ) 
%&=  \sum_{\mathbf{x}_0 \in \ZZ_{p^r}^n}   \sum_{\mathbf{x}: (\mathbf{x},\mathbf{y}) \in A_\epsilon^{(n)}(X,Y)} P(\Phi_n(\mathbf{u}_0)+B=\mathbf{x}_0,\tilde{\mathbf{Y}}=\mathbf{y}, \Phi_n(\mathbf{u})+B=\mathbf{x} )
%&=\sum_{\mathbf{x}_0\in \ZZ_{p^r}^n} \sum_{\mathbf{y}\in A_\epsilon^{(n)}(Y)} \sum_{\mathbf{x}\in A_\epsilon^{(n)}(X|\mathbf{y})} p^{-nr}P(\mathbf{y}|\mathbf{x}_0) P( \Phi_n(\mathbf{u}-\mathbf{u}_0)=\mathbf{x}-\mathbf{x}_0 )
%&=\sum_{\mathbf{y}\in A_\epsilon^{(n)}(Y)} \sum_{\mathbf{x}\in A_\epsilon^{(n)}(X|\mathbf{y})} P(\mathbf{y}) P( \Phi_n(\mathbf{u}-\mathbf{u}_0)=\mathbf{x} ).
\end{align} 
By assumption, conditioned on $\Phi_n(\mathbf{u}_0)+B$, the random variable $\tilde{\mathbf{Y}}$ is independent of $\Phi_n(\mathbf{u})+B$.  Therefore, the summand in (\ref{eq: inner_probability_eq2}) is simplified to 
\begin{equation}
P(\Phi_n(\mathbf{u}_0)+B=\mathbf{x}_0, \Phi_n(\mathbf{u})+B=\mathbf{x}) P^n_{Y|X}(\mathbf{y}|\mathbf{x}_0).
\end{equation}
Since $B$ is uniform  over $\ZZ_{p^r}^n$, and is independent of other random variables, 
\begin{equation}
P(\Phi_n(\mathbf{u}_0)+B=\mathbf{x}_0, \Phi_n(\mathbf{u})+B=\mathbf{x})=p^{-nr}P(\Phi_n(\mathbf{u}-\mathbf{u}_0)=\mathbf{x}-\mathbf{x}_0).
\end{equation}

Using Lemma \ref{lem: P(phi)}, if $\mathbf{u}-\mathbf{u}_0 \in H^{k_n}_s\backslash H^{k_n}_{s+1}$, then $P(\Phi_n(\mathbf{u}-\mathbf{u}_0)=\mathbf{x}-\mathbf{x}_0) =p^{-n(r-s)} \mathbbm{1}\{\mathbf{x}-\mathbf{x}_0 \in H^{k_n}_s\} $. Therefore, using \eqref{eq: inner_probability_eq2}, and for $\mathbf{u}-\mathbf{u}_0 \in H^{k_n}_s\backslash H^{k_n}_{s+1}$ we obtain

 \begin{align*}
P(\mathcal{E}_n(\mathbf{u}))&=  \sum_{\mathbf{x}_0 \in \ZZ_{p^r}^n}  \sum_{\mathbf{y} \in  A_{\epsilon}^{(n)}(Y) } \sum_{\substack{\mathbf{x}: \\ (\mathbf{x}, \mathbf{y}) \in A_\epsilon^{(n)}(X,Y) \\ \mathbf{x}-\mathbf{x}_0\in H_s^n}} p^{-nr}P^n_{Y|X}(\mathbf{y}|\mathbf{x}_0)p^{-n(r-s)}
\end{align*} 
{ Denote $\mathcal{A}\triangleq \{\mathbf{x}:  (\mathbf{x}, \mathbf{y}) \in A_\epsilon^{(n)}(X,Y),~ \mathbf{x}-\mathbf{x}_0 \in  H_s^n\}$. Note that if $([\mathbf{x}_0]_s, \mathbf{y}) \notin A_\epsilon^{(n)}([X]_s Y)$, then $\mathcal{A}=\emptyset$. Therefore,
 \begin{align}\label{eq: inner_probability_cont}
P(\mathcal{E}_n(\mathbf{u}))&=  \sum_{\substack{ (\mathbf{x}_0, \mathbf{y}):\\ ([\mathbf{x}_0]_s, \mathbf{y}) \in A_\epsilon^{(n)}([X]_sY)}} \sum_{\substack{\mathbf{x}\in \mathcal{A}}} p^{-nr}P^n_{Y|X}(\mathbf{y}|\mathbf{x}_0)p^{-n(r-s)}
\end{align} 

 Next, we replace the summation over $\mathbf{x}$ with the size of the set $\mathcal{A}$.  We bound the size of $\mathcal{A}$ using Lemma  \ref{lem: typical set intersection subgroup}. Therefore, an upper-bound on \eqref{eq: inner_probability_cont} is 
\begin{align}\nonumber          %\label{eq: inner_probability_cont_eq2}
P(\mathcal{E}_n(\mathbf{u})) &\leq \left(  \sum_{\substack{ (\mathbf{x}_0, \mathbf{y}):\\ ([\mathbf{x}_0]_s, \mathbf{y}) \in A_\epsilon^{(n)}([X]_sY)}} p^{-nr}P^n_{Y|X}(\mathbf{y}|\mathbf{x}_0) \right) p^{-n(r-s)} 2^{n\left( H(X|Y [X]_{s})+\delta(4\epsilon) \right)}\\
&\leq  \left(\sum_{\mathbf{x}_0 \in \ZZ_{p^r}^n}  \sum_{\mathbf{y} \in  \mathcal{Y}^n } p^{-nr}P^n_{Y|X}(\mathbf{y}|\mathbf{x}_0)\right) p^{-n(r-s)} 2^{n\left( H(X|Y [X]_{s})+\delta(4\epsilon) \right)}\\\label{eq: bound on E(u)}
& \leq p^{-n(r-s)} 2^{n \left( H(X|Y [X]_{s})+\delta(4\epsilon) \right) }.
\end{align} 
}
Note that if $\mathbf{a} \in \ZZ_{p^r}^k, \mathbf{a}\neq \mathbf{0}$ then there exists $s\in [0:r-1]$ such that $\mathbf{a}\in H_s^k \backslash H_{s+1}^k$. Therefore, there are $r$ different cases for each value of $s$.  Using (\ref{eq: bound on E(u)}), and considering these cases, we obtain 

\begin{align*}
P(\mathcal{E}_n) & \leq \sum_{s=0}^{r-1}\sum_{\substack{\mathbf{u}\in \mathcal{U}_n\\ \mathbf{u}-\mathbf{u}_0 \in  H^{k_n}_s \backslash H^{k_n}_{s+1}  }} P(\mathcal{E}_n(\mathbf{u})) \leq \sum_{s=0}^{r-1}\sum_{\substack{\mathbf{u}\in \mathcal{U}_n\\ \mathbf{u}-\mathbf{u}_0 \in  H^{k_n}_s \backslash H^{k_n}_{s+1}  }}  2^{n(H(X|Y[X]_s)+\delta(4\epsilon))}p^{-n(r-s)}\\
&\leq  \sum_{s=0}^{r-1}|\mathcal{U}_n\bigcap (\mathbf{u}_0+H_s^k)| 2^{n(H(X|Y[X]_s)+\delta(4\epsilon))}p^{-n(r-s)}
\end{align*}  

Note that  $\mathcal{U}_n$ is the Cartesian product of $\epsilon$-typical sets $A_{\epsilon}^{(p(q)k_n)}(U_q)$, $q\in \mathcal{Q}$. For each component $q$ of $\mathcal{U}_n$, we can apply Lemma \ref{lem: typical set intersection subgroup}. Therefore,
$$|\mathcal{U}_n\cap (\mathbf{u}_0+H_s^k)|\leq 2^{\sum_q p(q)k_n (H(U_q|[U_q]_s)+\delta(2 \epsilon))}=2^{k_n (H(U|Q[U]_s)+\delta(2\epsilon))}.$$
Finally, 
\begin{align*}
P(\mathcal{E}_n) & \leq \sum_{s=0}^{r-1}2^{n\big( \frac{k_n}{n}(H(U|Q[U]_s) +H(X|Y[X]_s) + \frac{k_n}{n}\delta(2\epsilon)+\delta(4\epsilon)\big)}p^{-n(r-s)}
\end{align*}

As a result $\lim_{n\rightarrow \infty}P(\mathcal{E}_n)= 0$, if the inequality $$c H(U|Q[U]_s) \leq \log_2p^{r-s}-H(X|Y[X]_s) -2(2+c)\delta(\epsilon),$$ holds for all $0\leq s\leq r-1$.  
Multiply each side of this inequality by $\frac{H(U|Q)}{H(U|Q[U]_s)}$. This gives the following bound
$$cH(U|Q) \leq \frac{H(U|Q)}{H(U|Q[U]_s)}(\log_2p^{r-s}-H(X|Y[X]_s)-2(2+c)\delta(\epsilon))$$  
By definition $R_n=\frac{1}{n}\log_2|\mathcal{C}_n|\leq c H(U|Q)+\epsilon'$. Therefore, 
$$R_n \leq \frac{H(U|Q)}{H(U|Q[U]_s)}(\log_2p^{r-s}-H(X|Y[X]_s)-2(2+c)\delta(\epsilon)),$$ and the proof is completed.
%Using Lemma \ref{lem: typical set intersection subgroup}, $\log_2 |\mathcal{U}_n\cap H_s|=\sum_{i=1}^k k_iH(U_i|[U_i]_s)$. Hence the bound in Lemma \ref{lem: packing} is derived. 
\end{proof}

\section{Proof of Lemma \ref{lem: covering}}\label{sec: proof of the covering lemma}
\begin{proof}
We use the same notation as in the proof of Lemma \ref{lem: packing}. For any typical sequence $\mathbf{x}$ define
\begin{align*}
\lambda_n(\mathbf{x})= \sum_{\mathbf{\hat{x}}\in A_\epsilon^{(n)}(\hat{X}|\mathbf{x})}  \sum_{\mathbf{u} \in \mathcal{U}_n}  \mathbbm{1}\{\Phi_n(\mathbf{u})+B=\hat{x}\}.
\end{align*}
Note $\lambda_n(\mathbf{x})$	counts the number of codewords that are conditionally typical with $\mathbf{x}$ with respect to $p(\hat{\mathbf{x}}| \mathbf{x})$. We show that $\lim_{n\rightarrow \infty}P(\lambda_n(\mathbf{x})=0)=0$ for any $\epsilon$-typical sequence $\mathbf{x}$. This implies that $\lim_{n\rightarrow \infty}P(\lambda_n(\mathbf{X}^n)=0)=0$,  where $\mathbf{X}^n\sim \prod_{i=1}^n p(x)$. This proves the statements of the Lemma. Hence, it suffices to show that $\lim_{n\rightarrow \infty }P(\lambda_n(\mathbf{x})=0)=0$. We have,
\begin{align*}
P\{\lambda_n(\mathbf{x})=0\} \leq  P\Big\{\lambda_n(\mathbf{x})\leq \frac{1}{2} E(\lambda_n(x))\Big\}  \leq P\Big\{|\lambda_n(x)-E(\lambda_n(x)) |\geq \frac{1}{2} E(\lambda_n(x))\Big\}
\end{align*}
Hence, by Chebyshev's inequality, $P\{\lambda_n(\mathbf{x})=0\}  \leq \frac{4 Var(\lambda_n(x))}{E(\lambda_n(x))^2}$.
Note that 
\begin{align} \label{eq: expectation_delta1}
E(\lambda_n(x))=\sum_{\mathbf{\hat{x}}\in A_\epsilon^{(n)}(\hat{X}|\mathbf{x})}  \sum_{\mathbf{u} \in \mathcal{U}_n}     P\{\Phi(\mathbf{u})+B=\hat{\mathbf{x}}\}
\end{align}
Since $B$ is uniform over $\ZZ_{p^r}^n$, we get
\begin{align} \label{eq: expectation_delta2}
E(\lambda_n(x))=  |A_\epsilon^{(n)}(X|\mathbf{\hat{x}})| | \mathcal{U}_n| p^{-rn}.
\end{align}
Note   $2^{k_n(H(U|Q)- 2\epsilon')}\leq |\mathcal{U}_n| \leq  2^{k_n(H(U|Q)+2\epsilon')}$, where $$\epsilon' =  -  \frac{\epsilon}{p^r} \sum_{q \in \mathcal{Q}}P_Q(q)  \sum_{ \substack{  a \in \ZZ_{p^r}: P_{U|Q}(a|q)>0 }}\log P_{U|Q}(a|q).$$  Therefore,
\begin{align} \label{eq: expectation_delta}
                        2^{n(H(\hat{X}|X)- 2\tilde{\epsilon})} 2^{k_n(H(U|Q)-2\epsilon')} p^{-rn}   \leq   E(\lambda_n(x))\leq  2^{n(H(\hat{X}|X)+2\tilde{\epsilon})} 2^{k_n(H(U|Q)+2\epsilon')} p^{-rn},
\end{align}
To calculate the variance, we start with
\begin{align*}
E(\lambda_n(x)^2)&= \sum_{\mathbf{\hat{x},\hat{x}'}\in A_\epsilon^{(n)}(\hat{X}|\mathbf{x})} \sum_{\mathbf{u}, {\mathbf{u}'} \in \mathcal{U}_n}    P\{\Phi(\mathbf{u})+B=\mathbf{\hat{x}}, \Phi({\mathbf{u}'})+B=\mathbf{\hat{x}'}\}.
\end{align*}
%If $j\neq j'$, then $t(j)$ is independent of $t(j')$. Therefore $$ P\{\Phi(\mathbf{u})+t(j)=\mathbf{\hat{x}}, \Phi({\mathbf{u}'})+t(j')=\mathbf{\hat{x}'}\}=p^{-2nr}.$$ As a result, 
%\begin{align*}
%E(\lambda_n(x)^2)&\leq   E^2(\lambda_n(x)) +  \sum_{\mathbf{\hat{x},\hat{x}'}\in A_\epsilon^{(n)}(\hat{X}|\mathbf{x})} \sum_{\mathbf{u}, {\mathbf{u}'} \in \mathcal{U}_n}  \sum_{j=1}^l   P\{\Phi(\mathbf{u})+t(j)=\mathbf{\hat{x}}, \Phi({\mathbf{u}'})+t(j)=\mathbf{\hat{x}'}\}.
%\end{align*}

Since $B$ is independent of other random variables, the most inner term in the above summations is simplified to $p^{-nr}P\{\Phi(\mathbf{u}-\mathbf{u'})=\mathbf{\hat{x}}-\mathbf{\hat{x}'}\}$. Using Lemma \ref{lem: P(phi)}, if $\mathbf{u}-\mathbf{u'} \in H_s^{k_n}\backslash H_{s+1}^{k_n}$, then
\begin{align*}
P\{\Phi(\mathbf{u}-\mathbf{u'})=\mathbf{\hat{x}}-\mathbf{\hat{x}'}\}=p^{-n(r-s)}\mathbbm{1}\{\mathbf{\hat{x}}-\mathbf{\hat{x}'} \in H_s^n\}
\end{align*}
Considering all the cases for the values of $s$, we get
\begin{align*}
E(\lambda_n(x)^2)&= \sum_{s=0}^{r} \sum_{\substack{\mathbf{u}, {\mathbf{u}'} \in \mathcal{U}_n\\ \mathbf{u}-{\mathbf{u}'}\in H^{k_n}_{s}\backslash H_{s+1}^{k_n}}}\sum_{\substack{\mathbf{\hat{x},\hat{x}'}\in A_\epsilon^{(n)}(\hat{X}|\mathbf{x})\\ \mathbf{\hat{x}}-\mathbf{\hat{x}'} \in H_s^n}}   p^{-nr}p^{-n(r-s)}
\end{align*}
 Since the innermost terms in the above summations do not depend on the individual values of $\mathbf{x}, \hat{\mathbf{x}}, \mathbf{u}, {\mathbf{u}'}$, the corresponding summations can be replaced by the size of the associated sets. Moreover, we provide an upperbound on the summation over $\mathbf{u}, {\mathbf{u}'}$ by replacing $H_s^{k_n}\backslash H_{s+1}^{k_n}$ with $H_s^{k_n}$. Using Lemma \ref{lem: typical set intersection subgroup} for $\mathbf{x}, \hat{\mathbf{x}}$, we get
\begin{align*}
E(\lambda_n(x)^2)&\leq  \sum_{s=0}^{r} \sum_{\mathbf{u} \in \mathcal{U}_n} \sum_{\substack{{\mathbf{u}'} \in \mathcal{U}_n\\ \mathbf{u}-{\mathbf{u}'}\in H^{k_n}_{s}}}2^{n(H(\hat{X}|X)+\tilde{\epsilon}+H(\hat{X}|X[\hat{X}]_s)+\delta(4\epsilon))} p^{-nr}p^{-n(r-s)}
\end{align*}

%Notice that the case $s=0$ is simplified to $2^{-nR_{bin}}E(\lambda_n(x))^2$. Hence, 
%
%\begin{align*}
%E(\lambda_n(x)^2)& \leq  (1+2^{-nR_{bin}}) E(\lambda_n(x))^2 + \sum_{s=0}^{r} \sum_{\mathbf{u} \in \mathcal{U}_n} \sum_{\substack{{\mathbf{u}'} \in \mathcal{U}_n\\ \mathbf{u}-{\mathbf{u}'}\in H^{k_n}_{s}}}2^{n(H(\hat{X}|X)+H(\hat{X}|X[\hat{X}]_s))} 2^{nR_{bin}} p^{-nr}p^{-n(r-s)}
%\end{align*}

For any $\mathbf{u}\in \mathcal{U}_n$, by applying Lemma \ref{lem: typical set intersection subgroup} we get $|\mathcal{U}_n \bigcap (\mathbf{u}+H_s^{k_n})| \leq 2^{{k_n}(H(U|Q[U]_s)+\delta(4\epsilon))}.$
 As a result, 
\begin{align*}
E(\lambda_n(x)^2)&\leq  \sum_{s=0}^{r}   2^{{k_n}(H(U|Q[U]_s)+\delta(4\epsilon))} 2^{{k_n}(H(U|Q)+\epsilon')} 2^{n(H(\hat{X}|X)+\tilde{\epsilon}+H(\hat{X}|X[\hat{X}]_s)+\delta(4\epsilon))} p^{-nr}p^{-n(r-s)}.
\end{align*}
Note that the case $s=0$ gives $E^2(\lambda_n(x))$. Therefore,

\begin{align} \label{eq: variance_delta}
Var(\lambda_n(x)^2)& \leq  p^{-nr} \sum_{s=1}^{r}   2^{{k_n}(H(U|Q)+H(U|Q[U]_s)) } 2^{n(H(\hat{X}|X)+H(\hat{X}|X[\hat{X}]_s))} 2^{n(1+c)(\epsilon+\delta(4\epsilon))}   p^{-n(r-s)}
\end{align}

Finally, using \eqref{eq: expectation_delta}, \eqref{eq: variance_delta} and  the Chebyshev's inequality as argued before, we get

\begin{align*}
P\{\lambda_n(\mathbf{x})=0\}&\leq 4 \sum_{s=1}^r 2^{{k_n}(-H(U|Q)+H(U|Q[U]_s)) }2^{n(-H(\hat{X}|X)+H(\hat{X}|X[\hat{X}]_s))}2^{n(1+c)(\epsilon+\delta(4\epsilon))}  p^{nr}p^{-n(r-s)} \\
&= 4 ~ 2^{n(1+c)(\epsilon+\delta(4\epsilon))} \sum_{s=1}^r 2^{-k_nH([U]_s|Q)} 2^{-nH([\hat{X}]_s|X)}  p^{ns}.
\end{align*}
The second equality follows, because $H(V|W)-H(V|[V]_sW)=H([V]_s|W)$ holds for any random variables $V$ and $W$. Therefore, $P\{\lambda_n(\mathbf{x})\}$ approaches zero, as $n\rightarrow \infty$, if 
\begin{align*}%\label{eq: covering_dist}
c H([U]_{s}|Q) \geq  \log_2 p^s -H([\hat{X}]_s|X)+(1+c)(\epsilon+\delta(4\epsilon)), \quad  \mbox{for} ~~1\leq s\leq r.
\end{align*}
By the definition of rate and the above inequalities the proof is completed.

\end{proof}

%---------------------------------------------------------------------------------------------------------

%----------------------------------------------------------------------------------------------------------------
\section{ Proof of Theorem \ref{them: distributed source coding}} \label{sec: proof dist}
We need to find conditions for which the probability of the error events $E_1, E_2$ and $E_d$ approach zero. By $\mathcal{W}_i$ denote the index set of $\mathcal{C}_{I, i}$, and let $\mathcal{V}_i$ be the index set of $\bar{\mathcal{C}}_{i}, i=1,2$.  
% For any $\mathbf{a}\in \ZZ_{p^r}^k$ define the map $\phi(\mathbf{a})=\mathbf{a} \mathbf{G}+\mathbf{b}$. By $\Phi(\cdot)$ denote the map $\phi$ whose matrix and translation are selected randomly and uniformly. 
%
\subsection{Analysis of $E_1, E_2$}
Fix $\mathbf{G}$, $\mathbf{\bar{G}}, \mathbf{b}$ and $\mathbf{\bar{b}}_i$. For any sequence $\mathbf{x}_i \in \ZZ_{p^r}^n$ define $$\lambda_i(\mathbf{x}_i)=\sum_{\mathbf{w}_i \in \mathcal{W}_i }\sum_{\mathbf{v}_i \in \mathcal{V}_i}  \mathbbm{1}\{\mathbf{x}_i=\mathbf{w}_i\mathbf{G}+\mathbf{v}_i\mathbf{\bar{G}}+\mathbf{b}+\mathbf{\bar{b}}_i\},$$ where $i=1,2$. Therefore, $E_i$ occurs if $\lambda_i(x_i)=0$, where $(\mathbf{x}_1, \mathbf{x}_2)$ is a realization of the sources. For more convenience, we consider a superset of the event  $E_i$. We say $E'_i$ occurs, if $\lambda_i(\mathbf{x}_i)< \frac{1}{2}E(\lambda_i(x_i))$. We  show that $P(E_i)\rightarrow 0$ as $n\rightarrow \infty$. Note that $\mathcal{C}_{O, i}$ is the $(n,k,l_i)$-nested QGC characterized by $\mathcal{C}_{I, i}$ and $\bar{\mathcal{C}}_i$. By Lemma \ref{lem:sum of two quasi group code}, $(\mathcal{C}_{I, i}, \mathcal{C}_{O, i})$ is also an $(n, k+l_i)$-QGC. In addition, similar to the random variables in this lemma, the random variables defined for $\mathcal{C}_{O, i}$ are $(U_i, (Q,J_i))$, where given $J_i=1$ we have $U_i=W_i$, and given $J_i=2$ we get $U_i=V_i$. In addition, $P(J_i=0)=\frac{k}{l_i+k}$, and $P(J_i=1)=\frac{l_i}{l_i+k}$.  We apply Lemma \ref{lem: covering} to bound the probability of $E_i$. In this lemma set $\hat{X}=X=X_i$ with probability one, $\mathcal{C}_{n}=\mathcal{C}_{O, i}$, and $R_n=R_{O, i}, i=1,2$.  Therefore, $P(E'_i)\rightarrow 0$ as $n\rightarrow \infty$, If 
\begin{align*}
R_{O,i} \geq \max_{1 \leq s \leq r} \frac{ H(U_i|Q,J_i)}{ H([U_i]_s|Q,J_i)}  ( \log_2 p^s +o(\epsilon)). 
\end{align*}

Using Remark \ref{rem: size of a random QGC}, and the above bound we get $\frac{k+l_i}{n} H([U_i]_s|Q,J_i) \geq  \log_2 p^s +o(\epsilon)$ for $s\in [1:r]$. Therefore, by the definition of $U_i$ and $J_i$, we get
\begin{align*}
\frac{k}{n} H([W_i]_s|Q)+\frac{l_i}{n}H(V_i|Q) \geq \log_2 p^s +o(\epsilon), ~ 1 \leq s \leq r.
\end{align*}
Note that in this bound we use the equality $H([V_i]_s)=H(V_i)$. This equality holds because $V_i $ takes values from $\{0,1\}$. Again using Remark \ref{rem: size of a random QGC}, we get $|R_i - \frac{l_i}{n}H(V_i|Q)|\leq o(\epsilon)$.  Hence, if the following holds
\begin{align}\label{eq: covering_dist}
\frac{k}{n} H([W_i]_s|Q)+R_i \geq \log_2 p^s +o(\epsilon), ~ 1 \leq s \leq r, ~ i=1,2,
\end{align}
then $P(E'_i)\rightarrow 0$ as $n\rightarrow \infty$.
%
%By Remark \ref{rem: nested QGC rate} for large enough $n$, with probability close to one, $R_{O,i} \geq R_{I, i}+ R_i -o(\epsilon).$ Therefore, it suffices to have 
%\begin{align}\label{eq: covering_dist}
%R_{I,i}+R_i \geq \max_{1 \leq s \leq r} \frac{H(W_i|Q)}{H([W_i]_s|Q)}  ( \log_2 p^s +o(\epsilon)). 
%\end{align}

\subsection{Analysis  of $E_d$}
Suppose there is no error in the encoding stage. Upon receiving the bin numbers, the decoder calculates $\mathbf{\bar{c}}_1$ and $\mathbf{\bar{c}}_2$. The decoding error $E_d$ occurs, if there exist more than one  $\tilde{\mathbf{c}} \in \mathcal{C}_{I, 1}+\mathcal{C}_{I, 2}$ such that  $\tilde{\mathbf{c}}+\mathbf{\bar{c}}_1+\mathbf{\bar{c}}_2$  is $\epsilon$-typical with respect to $P_{X_1+X_2}$. %Define the codebook $\mathcal{D}\triangleq  \mathcal{C}_{I, 1}+\mathcal{C}_{I, 2}+\mathbf{\bar{c}}_1+\mathbf{\bar{c}}_2$. 

Since there is no error at the encoding stage, $\mathbf{x}_i \in \mathcal{C}_{O, i}, i=1,2$. By Definition \ref{def: nested QGC}, every codeword in $\mathcal{C}_{O, i}$ is characterized by a pair $(\mathbf{v}_i, \mathbf{w}_i)$, where $\mathbf{v}_i\in \mathcal{V}_i, \mathbf{w}_i \in \mathcal{W}_i, i=1,2$. Given $\mathbf{x}_i$, if more than one pair was found at the $i$th encoder, select one randomly and uniformly.  By $P(\mathbf{v}_i, \mathbf{w}_i | \mathbf{x}_i)$ denote the probability that $(\mathbf{v}_i, \mathbf{w}_i)$ is selected at the $i$th encoder.  Then, $P(\mathbf{v}_i, \mathbf{w}_i|\mathbf{x_i})=\frac{1}{\lambda_i(\mathbf{x}_i)}\mathbbm{1}\{\mathbf{w}_{ i}\mathbf{G}+\mathbf{v}_i\mathbf{\bar{G}} + \mathbf{b}+\mathbf{\bar{b}}_i=\mathbf{x}_i	\}.$ Fix $\mathbf{G}$, $\mathbf{\tilde{G}}_i, \mathbf{b}$ and $\mathbf{\bar{b}}_i, i=1,2$. Suppose $\mathbf{x}_1$ and $\mathbf{x}_2$ are the realizations of the sources $X_1$ and $X_2$, respectively. Moreover, suppose $(\mathbf{x}_1, \mathbf{x}_2)\in A_\epsilon^{(n)}(X_1,X_2)$. Therefore, the probability of $(E_d \cap E_1^c \cap E_2^c)$ equals
\begin{align*}
P(E_d \cap & E_1^c \cap E_2^c|\mathbf{x}_1, \mathbf{x}_2)=\\
&\mathbbm{1}\Big\{\lambda_i(\mathbf{x_i})\geq E(\lambda_i(\mathbf{x}_i)), i=1,2  \Big\}  \left[ \prod_{j=1}^{2}   \sum_{\mathbf{v}_j \in \mathcal{V}_j} \sum_{\mathbf{w}_{ j} \in \mathcal{	W}_j} P(\mathbf{v}_j, \mathbf{w}_{ j}| \mathbf{x}_j) \right] P(E_d| \mathbf{x}_i, \mathbf{v}_i, \mathbf{w}_i, i=1,2)
\end{align*}

 In what follows, we bound $P(E_d| \mathbf{x}_i, \mathbf{v}_i, \mathbf{w}_i, i=1,2), P(\mathbf{v}_1, \mathbf{w}_{ 1}| \mathbf{x}_1)$, and $P(\mathbf{v}_2, \mathbf{w}_{ 2}| \mathbf{x}_2)$. Conditioned on $\mathbf{x}_1, \mathbf{x}_2, \bar{\mathbf{c}}_1$ and $\bar{\mathbf{c}}_2$, the probability of $E_d$ equals 
\begin{align*}
P(E_d| \mathbf{x}_1, \mathbf{x}_2, \bar{\mathbf{c}}_1,\bar{\mathbf{c}}_2)=\mathbbm{1}\{\exists \tilde{z}\in A_\epsilon^{(n)}(X_1+X_2): \tilde{z}\neq \mathbf{x}_1+\mathbf{x}_2, \tilde{z}\in \mathcal{C}_{I, 1}+\mathcal{C}_{I, 2}+\bar{\mathbf{c}}_1+\bar{\mathbf{c}}_2\}
\end{align*}
 Let  $\mathcal{W}=\mathcal{W}_1+\mathcal{W}_2$, and define $Z \triangleq X_1+X_2$. Recall, $\bar{\mathbf{c}}_i=\mathbf{v}_i \bar{\mathbf{G}}+\bar{\mathbf{b}}_i, i=1,2$. Using the union bound, we have
\begin{align}\nonumber
P(E_d| \mathbf{x}_i, \mathbf{v}_i, \mathbf{w}_i, i=1,2)&\leq   \sum_{\substack{ \tilde{\mathbf{w}} \in \mathcal{W}}}\sum_{\substack{ \tilde{\mathbf{z}}\in A_{\epsilon}^{(n)}(Z)\\ \mathbf{\tilde{z}}\neq \mathbf{x}_1+\mathbf{x}_2}} \mathbbm{1}\{\tilde{\mathbf{w}}\mathbf{G}+(\mathbf{v}_1+\mathbf{v}_2)\mathbf{\bar{G}}+2\mathbf{b}+\mathbf{\bar{b}}_1+\mathbf{\bar{b}}_2=\tilde{\mathbf{z}}\}\\\label{eq: bound on P(E|X_1X_2...)}
 &\leq   \sum_{\substack{ \tilde{\mathbf{w}} \in \mathcal{W}\\ \tilde{\mathbf{w}} \neq \mathbf{w}_{ 1}+\mathbf{w}_{ 2}}}\sum_{ \tilde{\mathbf{z}}\in A_{\epsilon}^{(n)}(Z)} \mathbbm{1}\{\tilde{\mathbf{w}}\mathbf{G}+(\mathbf{v}_1+\mathbf{v}_2)\mathbf{\bar{G}}+2\mathbf{b}+\mathbf{\bar{b}}_1+\mathbf{\bar{b}}_2=\tilde{\mathbf{z}}\}
\end{align}
 
%By definition, $\lambda_i(\mathbf{x}_i)$ is the number of the pair $(\mathbf{v}_i, \mathbf{w}_i)$ that were found at the $i$th encoder.  Therefore, for $i=1,2$, we have $$P(\mathbf{v}_i, \mathbf{w}_i|\mathbf{x_i})=\frac{1}{\lambda_i(\mathbf{x}_i)}\mathbbm{1}\{\phi(\mathbf{w}_{ i})+\mathbf{v}_i\mathbf{\bar{G}} + \mathbf{\bar{b}}_i=\mathbf{x}_i	\}.$$ 
The second inequality follows, because in general $\mathbf{\tilde{w}}\neq \mathbf{w}_1+\mathbf{w}_2$ does not  imply $\tilde{\mathbf{z}}\neq \mathbf{x}_1+\mathbf{x}_2$. This is due to the fact that $\mathbf{G}$ is not injective necessarily.   Since there is no encoding error, $\lambda_i(\mathbf{x}_i) \geq \frac{1}{2}E(\lambda_i(\mathbf{x}_i))$. As a result, 
\begin{align}\label{eq: bound on P(i, u|x_1)}
P(\mathbf{v}_i, \mathbf{w}_i|\mathbf{x_i})\leq \frac{2}{E(\lambda_i(\mathbf{x}_i))}\mathbbm{1}\{\mathbf{w}_{ i}\mathbf{G}+\mathbf{v}_i\mathbf{\bar{G}} + \mathbf{b}+\mathbf{\bar{b}}_i=\mathbf{x}_i	\}
\end{align}
%By definition, probability of $E$ equals to
%
%\begin{align*}
%P(E)&=\sum_{(\mathbf{x}_1, \mathbf{x}_2) \in A_{\epsilon}^{(n)}(X_1X_2)} p(\mathbf{x}_1, \mathbf{x}_2) \sum_{i=1}^{2^{nR_1}} \sum_{\mathbf{w}_{ 1} \in \mathcal{U}} P(i, \mathbf{w}_{ 1} |\mathbf{x}_1) \sum_{j=1}^{2^{nR_2}} \sum_{\mathbf{w}_{ 2} \in \mathcal{V}} P(j, \mathbf{w}_{ 2} |\mathbf{x}_2)P(E|\mathbf{x}_1,\mathbf{x}_2,i, j)
%%&= \sum_{(\mathbf{x}_1, \mathbf{x}_2) \in A_{\epsilon}^{(n)}(X_1X_2)} p(\mathbf{x}_1, \mathbf{x}_2) \sum_{i=1}^{2^{nR_1}} \sum_{\mathbf{w}_{ 1} \in \mathcal{U}}\frac{1}{{\delta}(\mathbf{x}_1)}\mathbbm{1}\{\phi(\mathbf{w}_{ 1})+t(i)=\mathbf{x}_1\}\\
%%&\sum_{j=1}^{2^{nR_2}} \sum_{\mathbf{w}_{ 2} \in \mathcal{V}}\frac{1}{{\eta}(\mathbf{x}_2)}\mathbbm{1}\{\phi(\mathbf{w}_{ 2})+\tau(j)=\mathbf{x}_2\}P(E| \mathbf{x}_1, \mathbf{x}_2, i, j)
%%%\mathbbm{1}\{\exists \tilde{z}\in A_\epsilon^{(n)}(X_1+X_2): \tilde{z}\neq \mathbf{x}_1+\mathbf{x}_2, \tilde{z}\in \mathcal{C}_{I, 1}+\mathcal{C}_{I, 2}+t(i)+\tau(j)\}
%\end{align*}
 Using the bounds given in (\ref{eq: bound on P(E|X_1X_2...)}) and (\ref{eq: bound on P(i, u|x_1)}), we get

%Let  $\mathcal{W}$ be the union of $\mathbf{w}_{ 1}+\mathbf{w}_{ 2}$, for all $\mathbf{w}_{ 1} \in \mathcal{U}$ and $\mathbf{w}_{ 2} \in \mathcal{V}$. We denote $\mathcal{W}=\mathcal{U}+\mathcal{V}$. Also, define $Z=X_1+X_2$ with probability one. , $\mathcal{W}=\mathcal{U}+\mathcal{V}$ and $Z=X+Y$.
% Since there is no encoding error, $\delta(\mathbf{x}_1) \geq \frac{1}{2}E(\delta(\mathbf{x}_1))$ and $\eta(\mathbf{x}_2) \geq \frac{1}{2}E(\eta(\mathbf{x}_2))$. Using the union bound, we upper-bound the last indicator function in the above summation. Thus, we have
\begin{align*}
P(E_d \cap E_1^c \cap E_2^c|\mathbf{x}_1, \mathbf{x}_2)&\leq  \left[ \prod_{j=1}^{2}\sum_{ \substack{  \mathbf{v}_j \in \mathcal{V}_j\\ \mathbf{w}_{ j} \in \mathcal{W}_j}}  \frac{2}{E(\lambda_j(\mathbf{x}_j))}\mathbbm{1}\{\mathbf{w}_{ j}\mathbf{G}+\mathbf{v}_j\mathbf{\bar{G}} + \mathbf{b}+\mathbf{\bar{b}}_j=\mathbf{x}_j	\} \right] \\
&\sum_{\substack{ \tilde{\mathbf{w}} \in \mathcal{W}\\ \tilde{\mathbf{w}} \neq \mathbf{w}_{ 1}+\mathbf{w}_{ 2}}}\sum_{ \tilde{z}\in A_{\epsilon}^{(n)}(Z)} \mathbbm{1}\{\tilde{\mathbf{w}}\mathbf{G}+(\mathbf{v}_1+\mathbf{v}_2)\mathbf{\bar{G}}+2\mathbf{b}+\mathbf{\bar{b}}_1+\mathbf{\bar{b}}_2=\tilde{\mathbf{z}}\}
\end{align*}

%Note $ p^n(\mathbf{x}_1, \mathbf{x}_2)\leq  2^{-n\left(H(X_1,X_2)+o(\epsilon)\right)}$. 
%Using standard arguments for $\epsilon$-typical sets, the probability that $(\mathbf{X}_1^n, \mathbf{X}_2^n)\notin A_\epsilon^{(n)}(X_1,X_2)$ is upper-bounded by $\frac{c}{n\epsilon^2}$, where $c=\frac{p^{6r}}{4}$.  
Next, we average $P(E_d \cap E_1^c \cap E_2^c|\mathbf{x}_1, \mathbf{x}_2)$ over all possible choices of $\mathbf{G}, \mathbf{\bar{G}}, \mathbf{b}, \mathbf{\bar{b}}_1$, and $\mathbf{\bar{b}}_2$. We obtain
\begin{align*}
\EE\{P(E_d \cap E_1^c & \cap E_2^c|\mathbf{x}_1, \mathbf{x}_2)\}\leq \sum_{ \substack{  \mathbf{v}_1 \in \mathcal{V}_1\\ \mathbf{w}_{ 1} \in \mathcal{W}_1}}\frac{2}{E(\lambda_1(\mathbf{x}_1))}
\sum_{ \substack{  \mathbf{v}_2 \in \mathcal{V}_2\\ \mathbf{w}_{ 2} \in \mathcal{W}_2}}\frac{2}{E(\lambda_2(\mathbf{x}_2))} \sum_{\substack{ \tilde{\mathbf{w}} \in \mathcal{W}\\ \tilde{\mathbf{w}} \neq \mathbf{w}_1+\mathbf{w}_2}}\sum_{ \tilde{z}\in A_{\epsilon}^{(n)}(Z)}\\
& P\{\tilde{\mathbf{w}}\mathbf{G}+(\mathbf{v}_1+\mathbf{v}_2)\mathbf{\bar{G}}+2\mathbf{B}+\mathbf{\bar{B}}_1+\mathbf{\bar{B}}_2=\tilde{\mathbf{z}}, \mathbf{w}_{ i}\mathbf{G}+\mathbf{v}_i\mathbf{\bar{G}} + \mathbf{B}+\mathbf{\bar{B}}_i=\mathbf{x}_i, i=1,2\}
\end{align*}
Note $\mathbf{\bar{B}}_1$ and $\mathbf{\bar{B}}_2$ are independent random variables with uniformly distributed over $\ZZ_{p^r}^n$. Therefore, the innermost term in the above summations equals 
\begin{align}\label{eq dist: probability}
{p^{-2nr}}P\{(\tilde{\mathbf{w}}-\mathbf{w}_{ 1}-\mathbf{w}_{ 2})\mathbf{G}=\tilde{\mathbf{z}}-\mathbf{x}_1-\mathbf{x}_2\}.
\end{align}
We apply Lemma \ref{lem: P(phi)}, to calculate the above probability. If $\tilde{\mathbf{w}}-\mathbf{w}_{ 1}-\mathbf{w}_{ 2} \in H_s^k \backslash H_{s+1}^k$, then \eqref{eq dist: probability} equals to
\begin{align}\label{eq dist: probability 2}
p^{-2nr}p^{-n(r-s)} \mathbbm{1}\{ \tilde{\mathbf{z}}-\mathbf{x}_1-\mathbf{x}_2 \in H_s^k  \}.
\end{align}

As a result, we have
\begin{align*}
\EE\{P(E_d \cap E_1^c \cap E_2^c|\mathbf{x}_1, \mathbf{x}_2)\}&\leq  \sum_{ \substack{  \mathbf{v}_1 \in \mathcal{V}_1\\ \mathbf{w}_{ 1} \in \mathcal{W}_1}}\frac{2}{E(\lambda_1(\mathbf{x}_1))}
\sum_{ \substack{  \mathbf{v}_2 \in \mathcal{V}_2\\ \mathbf{w}_{ 2} \in \mathcal{W}_2}}\frac{2}{E(\lambda_2(\mathbf{x}_2))}\\
&\sum_{s =0}^{r-1} \sum_{\substack{ \tilde{\mathbf{w}} \in \mathcal{W}\\ \tilde{\mathbf{w}} - \mathbf{w}_{ 1}-\mathbf{w}_{ 2}\in H^{k}_{s} \backslash H_{s+1}^k }}\sum_{ \substack{ \tilde{\mathbf{z}}\in A_{\epsilon}^{(n)}(Z)\\ \tilde{\mathbf{z}}-\mathbf{x}_1-\mathbf{x}_2 \in H^n_{s}}} p^{-2nr}p^{-n(r-s)}
\end{align*}
Since the most inner terms in the above summations depend only on $s$, we can replace the  summations over $\mathbf{\tilde{w}}$ and $\mathbf{\tilde{z}}$ with the size of the associated sets. We apply Lemma \ref{lem: typical set intersection subgroup} to bound the size of these sets. Also, we can replace the summations over $\mathbf{v}_i$ and $\mathbf{w}_i, i=1,2$ with the size of the related sets. Define $W \triangleq W_1+W_2$, we get,
\begin{align*}
\EE\{P(E_d \cap E_1^c \cap E_2^c|\mathbf{x}_1, \mathbf{x}_2)\}&\leq  |\mathcal{W}_1||\mathcal{V}_1|\frac{2}{E({\lambda_1}(\mathbf{x}_1))}   |\mathcal{W}_2||\mathcal{V}_2|\frac{2}{E(\lambda_2(\mathbf{x}_2))}\\ &\sum_{s=0}^{r-1} 2^{n(H(Z|[Z]_{s})+o(\epsilon))}2^{k (H(W|Q [W]_s)+o(\epsilon))} p^{-2nr}p^{-n(r-s)}.
\end{align*}
Note that from \eqref{eq: expectation_delta2} in the proof of Lemma \ref{lem: covering}, $E({\lambda_i}(\mathbf{x}_i))=|\mathcal{W}_i||\mathcal{V}_i|p^{-nr}, i=1,2$. Therefore, we have 
\begin{align*}
\EE\{P(E_d \cap E_1^c \cap E_2^c|\mathbf{x}_1, \mathbf{x}_2)\}\leq 4 \sum_{s =0}^{r-1}  2^{n(H(Z|[Z]_{s})+o(\epsilon))} 2^{k (H(W|Q, [W]_s)+o(\epsilon))} p^{-n(r-s)}.
\end{align*}
Note that the above bound does not depend on $\epsilon$-typical sequences $\mathbf{x}_1$ and $\mathbf{x}_2$. Using standard arguments for $\epsilon$-typical sets, the probability that $(\mathbf{X}_1^n, \mathbf{X}_2^n)\notin A_\epsilon^{(n)}(X_1,X_2)$ is upper-bounded by $\frac{c}{n\epsilon^2}$, where $c=\frac{p^{6r}}{4}$.  Hence, we have
\begin{align*}
\EE\{P(E_d \cap E_1^c \cap E_2^c)\} \leq \frac{c}{n\epsilon^2}+4(1-\frac{c}{n\epsilon^2}) \sum_{s =0}^{r-1}  2^{n(H(Z|[Z]_{s})+o(\epsilon))} 2^{k (H(W|Q, [W]_s)+o(\epsilon))} p^{-n(r-s)}.
\end{align*}
Therefore, $\EE\{P(E_d \cap E_1^c \cap E_2^c)\}$ tends to zero as $n\rightarrow \infty$, if for any $s \in [0:r-1]$,
\begin{align}\label{eq: packing_dist}
\frac{k}{n} H(W| Q, [W]_{s}) < \log_2 p^{(r-s)}-H(Z|[Z]_{s})-o(\epsilon).
\end{align}
Next, we use \eqref{eq: packing_dist} to show that the bounds in (\ref{eq: covering_dist}) are redundant except the following:
\begin{align}\label{eq: covering_dist_simplified}
R_i+\frac{k}{n} H(W_i|Q) =  \log_2 p^r.
\end{align}

For that, we compare \eqref{eq: covering_dist_simplified} with the bounds in \eqref{eq: covering_dist} for different values of $s$. Noting that $H(W_i|Q)=H([W_i]_s|Q)+H(W_i|Q [W_i]_s)$, it is sufficient to show that  $\frac{k}{n} H(W_i|Q, [W_i]_s) \leq \log_2 p^{r-s}.$ To show this inequality,  we first prove that 
\begin{align}\label{eq dist: inequality of H(w1+w2)}
H(W_i|Q, [W_i]_s) \leq H(W_1+W_2|Q, [W_1+W_2]_s), ~ i=1,2, ~ 0 \leq s \leq r .
\end{align}
 Then, using \eqref{eq: packing_dist}, we get  $\frac{k}{n} H(W_i|Q, [W_i]_s) \leq \log_2 p^{r-s}$. In what follows, we prove \eqref{eq dist: inequality of H(w1+w2)}. We have
\begin{align*}
H(W_1+W_2|Q, & [W_1+W_2]_s)=H(W_1+W_2|Q, [[W_1]_s+[W_2]_s]_s)\\
&\geq H(W_1+W_2|Q, [W_1]_s,[W_2]_s) \\
&= H(W_1, W_2|Q, [W_1]_s,[W_2]_s)- H(W_1|Q, [W_1]_s,[W_2]_s, W_1+W_2)\\
&\stackrel{(a)}{=} H(W_2|Q, [W_2]_s)+H(W_1|Q, [W_1]_s)- H(W_1|Q, [W_1]_s,[W_2]_s, W_1+W_2)\\
&\stackrel{(b)}{=} H(W_2|Q, [W_2]_s)+I(W_1; W_1+W_2 |Q, [W_1]_s, [W_2])\\
&\geq H(W_2|Q, [W_2]_s),
\end{align*}
where $(a)$ and $(b)$ hold because of the Markov chain $W_1 \leftrightarrow Q\leftrightarrow W_2$. Similarly, we can show that $H(W_1+W_2|Q, [W_1+W_2]_s) \geq H(W_1|Q, [W_1]_s).$

Finally, using (\ref{eq: covering_dist_simplified}) and (\ref{eq: packing_dist}) the following holds
\begin{align}\label{eq: dist_achievable}
R_i\geq  \log_2 p^r-  \min_{0 \leq s \leq r-1}  \frac{H(W_i|Q)}{H(W_1+W_2| Q, [W_1+W_2]_{s})} (\log_2 p^{(r-s)}-H(Z|[Z]_{s})),
\end{align}
where  we minimize the above bound over all PMFs of the form $P_{QW_1V_1W_2V_2}= P_Q \prod_i \left(P_{V_i|Q}P_{W_i|Q}\right)$, such that $p(q)$ is a rational number for all $q\in \mathcal{Q}$. Since rational numbers are dense in $\RR$, one can consider arbitrary PMF $p(q)$.  Lastly, in the next lemma, we show that the cardinality bound  $|\mathcal{Q}|\leq r$ is sufficient to optimize \eqref{eq: dist_achievable}. 

\begin{lem}\label{lem: cardinality of Q}
The cardinality of $\mathcal{Q}$ is bounded by $|\mathcal{Q}|\leq r$. 
\end{lem}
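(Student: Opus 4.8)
The plan is to read off the exact shape of the region from the proof of Theorem~\ref{them: distributed source coding}, reformulate it after \emph{homogenizing in $k/n$}, and then apply a Carath\'eodory‑type argument; it is the homogenization that brings the bound down to exactly $r$. Combining \eqref{eq: covering_dist_simplified} and \eqref{eq: packing_dist} shows that, for a fixed Markov triple $(W_1,Q,W_2)$ and with $\kappa:=k/n$, the pair $(R_1,R_2)$ lies in \eqref{eq: achievable bounds dist} exactly when $R_i\ge \log_2 p^r-\kappa\,H(W_i|Q)$ for $i=1,2$ while $\kappa\,H(W_1{+}W_2\mid Q,[W_1{+}W_2]_s)<\alpha_s$ for all $s\in[0:r-1]$, where $\alpha_s:=\log_2 p^{r-s}-H(X_1{+}X_2\mid[X_1{+}X_2]_s)$ does not depend on $(W_1,Q,W_2)$. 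Writing $P_Q(q)=\lambda_q$, choosing independent $P_{W_1|Q=q},P_{W_2|Q=q}$ for each $q$, and putting $w_q:=\kappa\lambda_q\ge 0$, $V^{(q)}:=W_1^{(q)}+W_2^{(q)}$ and $h_s(\cdot):=H(\cdot\mid[\cdot]_s)$ (so $h_0=H$), this says (up to closure, absorbing the strict inequality): $(R_1,R_2)$ is achievable iff $\log_2 p^r-R_i\le \sum_q w_q\,h_0(W_i^{(q)})$ for some finite family of pure strategies and nonnegative weights $(w_q)$ with $\sum_q w_q\,h_s(V^{(q)})\le\alpha_s$ for $s=0,\dots,r-1$.

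Hence the object to be reproduced with $|\mathcal{Q}|\le r$ is the downward closure $\mathcal{G}^{\downarrow}$ in $\RR^2_{\ge0}$ of the ``rate‑loss'' set
\begin{align*}
\mathcal{G}:=\Big\{\Big(\textstyle\sum_q w_q h_0(W_1^{(q)}),\ \sum_q w_q h_0(W_2^{(q)})\Big):\ w_q\ge 0,\ \textstyle\sum_q w_q h_s(V^{(q)})\le\alpha_s\ \ \forall s\in[0:r-1]\Big\}.
\end{align*}
Now $\mathcal{G}$ is convex (concatenate the weight families of two points after scaling each by $\tfrac12$, which preserves all $r$ constraints) and bounded, since for independent $W_1,W_2$ one has $h_0(V)=H(W_1{+}W_2)\ge\max\{H(W_1),H(W_2)\}$, whence $\sum_q w_q h_0(W_i^{(q)})\le\sum_q w_q h_0(V^{(q)})\le\alpha_0$. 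Therefore every point of its upper‑right (Pareto) boundary is exposed by a nonnegative functional $(\beta_1,\beta_2)\ge0$, and the corresponding boundary value is the optimum of the linear program
\begin{align*}
\max_{w_q\ge 0,\ \{(P^{(q)}_{W_1},P^{(q)}_{W_2})\}}\ \sum_q w_q\big(\beta_1 h_0(W_1^{(q)})+\beta_2 h_0(W_2^{(q)})\big)\quad\text{subject to}\quad \sum_q w_q\,h_s(V^{(q)})\le\alpha_s,\ \ s=0,1,\dots,r-1 .
\end{align*}
This is a linear program with exactly $r$ inequality constraints whose ``columns'' $\big(h_0(V),\dots,h_{r-1}(V),\ \beta_1 h_0(W_1)+\beta_2 h_0(W_2)\big)$ form a compact set (the continuous image of all pairs of pmfs on $\ZZ_{p^r}$), so the optimum is finite and attained.

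By the elementary fact that a linear program with $r$ constraints has an optimal solution supported on at most $r$ variables (equivalently, Carath\'eodory applied to the cone generated by the columns together with the objective row, deflating a redundant direction whenever more than $r$ columns carry positive weight), there is an optimal $(w_q)$ with at most $r$ positive entries. Retaining only those $q$ and setting $\kappa=\sum_q w_q$, $\lambda_q=w_q/\kappa$ gives a valid choice with $|\mathcal{Q}|\le r$ still satisfying $W_1-Q-W_2$ (each surviving conditional law of $(W_1,W_2)$ remains a product) and attaining the same boundary point; letting $(\beta_1,\beta_2)$ range over $\RR^2_{\ge0}$ recovers all of $\mathcal{G}^{\downarrow}$, hence all of \eqref{eq: achievable bounds dist}. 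The crux is the homogenization: because the bound is a ratio of conditional entropies wrapped in a minimum over $s$, it is not an affine functional of $P_Q$, and a naive support‑lemma count---preserving $H(W_1|Q),H(W_2|Q)$ and the $r$ denominators, or carrying the normalization $\sum_q\lambda_q=1$---only yields $r+1$ or $r+2$; absorbing $k/n$ into the weights removes the normalization constraint and is exactly what makes $r$ possible. Dispatching the strict inequality in \eqref{eq: packing_dist} and the attainment of the LP optimum are routine compactness/closure matters.
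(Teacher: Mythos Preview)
Your argument is correct and lands on the same bound as the paper, and the underlying mechanism is the same: pass to supporting hyperplanes of the convex rate--loss region and then invoke a Carath\'eodory--type count on the $r$ constraints indexed by $s\in[0:r-1]$. The presentation, however, is genuinely different. The paper fixes a direction $\alpha\in[0,1]$, rewrites the boundary condition as the $r$ inequalities \eqref{eq: support hyperplan dist}, notes that for fixed $(\bar{R}_1,\bar{R}_2,\alpha)$ each left-hand side is an average over $q$ of a single continuous functional of $p(w_1|q)p(w_2|q)$, and then applies the (Fenchel--Eggleston) support lemma to preserve those $r$ averages with $|\mathcal{Q}|\le r$; the supporting-hyperplane value in direction $\alpha$ is thereby preserved, and ranging over $\alpha$ recovers the region. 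You instead absorb $k/n$ into unnormalized weights $w_q=\kappa\,\lambda_q$, which turns the problem for each direction $(\beta_1,\beta_2)$ into a semi-infinite linear program with exactly $r$ inequality constraints, and then use the basic-feasible-solution fact that an optimum needs at most $r$ active columns.

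What each buys: the paper's route is shorter because it plugs directly into the standard support-lemma machinery, but it leaves the reader to check that preserving the $r$ functionals really reproduces the same supporting hyperplane (one must re-extract a valid $\kappa'$ from \eqref{eq: support hyperplan dist}, which works but is implicit). Your homogenization makes that step transparent---dropping the normalization $\sum_q\lambda_q=1$ is exactly why the count is $r$ rather than $r+1$---and your boundedness observation $H(W_1{+}W_2)\ge\max_i H(W_i)$ cleanly handles finiteness and attainment of the LP. Both routes are sound; yours is more self-contained, the paper's more conventional.
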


\begin{proof}
Note that (\ref{eq: packing_dist}) and  (\ref{eq: covering_dist_simplified}) give an alternative characterization of the achievable region. Using these equations, observe that this region is convex in $\RR^2$. As a result, we can characterize the achievable region by its supporting hyperplanes. Let $\bar{R}_i:= \log_2 p^r -R_i, i=1,2$. Using (\ref{eq: dist_achievable}) for any $0\leq \alpha \leq 1 $ the corresponding supporting hyperplane is characterized by 
\begin{align}\nonumber
\big(\alpha \bar{R}_1 +(1-\alpha)\bar{R}_2 \big)& H(W| Q, [W]_{s}) \\ \label{eq: support hyperplan dist}
&- \Big(\alpha H(W_1|Q) + (1-\alpha)H(W_2|Q)\Big) \Big(\log_2 p^{(r-s)}-H(Z|[Z]_{s})\Big)\leq 0, 
\end{align}
where $s\in [0,r-1]$. We use the support lemma for the above inequalities to bound $|\mathcal{Q}|$. To this end, we first show that the left-hand side of these inequalities are continuous functions of conditional  PMF's of $W_1$ and $W_2$ given $Q$. Let $\mathscr{P}_r$ denote the set of all product PMF's on $\ZZ_{p^r}\times \ZZ_{p^r}$. Note $\mathscr{P}_r$  is a compact set. Fix $q\in \mathcal{Q}$. Denote $ f(p(w_1|q)p(w_2|q))=\alpha H(W_1|Q=q) + (1-\alpha)H(W_2|Q=q)$ and $g_s(p(w_1|q)p(w_2|q))=  H(W_1+W_2| Q=q, [W_1+W_2]_{s})$, where $s\in [0:r-1]$. We show that  $f(\cdot), g_s(\cdot)$ are real valued continuous functions of $ \mathscr{P}_r$. Since the entropy function is continuous then so is $f$. We can write  $g_s(p(w_1|q)p(w_2|q))= H(W_1+W_2|Q=q)-H([W_1+W_2]_s|Q=q)$. Note that $[\cdot]_s$ is a continuous function from $ \mathscr{P}_r$ to $\mathscr{P}_r$. This implies that  $H([\cdot]_s)$ is also continuous. So $g_s$ is continuous. As a result, the left-hand side of the bounds in (\ref{eq: support hyperplan dist}) are real valued continuous functions of $\mathscr{P}_r$. Therefore, we can apply the support lemma \cite{ElGamal-book}. Since there are $r$ bounds for different values of $s$, then $|\mathcal{Q}|\leq r$. 
\end{proof}

%--------------------------------------------------------------------------------------------------------------

%\input{proof_of_thm_comp_over_MAC.tex}

\section{Proof of Theorem \ref{thm: comp MACnon uniform}}\label{sec: proof of comp mac nonuniform}
We need to find conditions for which the probability of the error events $E_1, E_2$ and $E_d$ approach zero. Suppose $\mathbf{G}$ is the generator matrix, and $\mathbf{b}$ is the translation of $\mathcal{C}_{I,1}$ and $\mathcal{C}_{I,2}$.  In addition, suppose $\mathbf{\bar{G}}$ is the generator matrix and $\mathbf{\bar{b}}_i$ is the translation defined for $\bar{\mathcal{C}}_i, i=1,2$. 
{  For any $\mathbf{a}\in \ZZ_{p^r}^k$ and $\mathbf{\bar{a}}\in \ZZ_{p^r}^l$ define the map $\phi(\mathbf{a}, \mathbf{\bar{a}})=\mathbf{a} \mathbf{G}+\bar{\mathbf{a}}\mathbf{\bar{G}}$. By $\Phi(\cdot, \cdot)$ denote the map $\phi$ whose matrices are selected randomly and uniformly. }

\subsection{Analysis of $E_1, E_2$}
 For any sequence $\mathbf{v}_i \in \mathcal{V}_i$ define $$\lambda_i(\mathbf{v}_i)=\sum_{\mathbf{w}_i \in \mathcal{W}_i } \sum_{\mathbf{x}_i \in A_\epsilon^{(n)}(X_i)}  \mathbbm{1}\{\mathbf{x}_i=\phi(\mathbf{w}_i, \mathbf{v}_i)+\mathbf{b}+\mathbf{\bar{b}}_i\},$$ where $i=1,2$. Therefore, $E_i$ occurs if $\lambda_i(\mathbf{v}_i)=0$. For more convenience, we weaken the definition of event $E_i$. We say $E_i$ occurs, if $\lambda_i(\mathbf{v}_i)< \frac{1}{2}E(\lambda_i(v_i))$. Using Lemma \ref{lem: covering} we can show that $P(E_i)\rightarrow 0$ as $n\rightarrow \infty$, if 
 \begin{align}\label{eq comp: covering}
 \frac{k}{n}H([W_i]_t|Q) \geq \log_2 p^s -H([X_i]_t) + \gamma(\epsilon), ~ i=1,2, ~ 1\leq t \leq r,
 \end{align}
where $\lim_{\epsilon \rightarrow 0} \gamma(\epsilon)=0$.
{
\subsection{Analysis of $E_c \cap E^c_1 \cap E^c_2 $}
 Define the set 
\begin{align*}
\mathcal{E} \triangleq \{   (\mathbf{x}_1, \mathbf{x}_2) \in A_\epsilon^{(n)}(X_1) \times A_\epsilon^{(n)}(X_2) : (\mathbf{x}_1, \mathbf{x}_2) \in A_\epsilon^{(n)}(X_1, X_2)     \}.
\end{align*}
Therefore, probability of $E_c$ can be written as  
\begin{align*}
P(E_c \cap E^c_1 \cap E^c_2 ) =  \sum_{(\mathbf{x}_1, \mathbf{x}_2)\in \mathcal{E}} P( e_1(\Theta_1)=\mathbf{x}_1,e_2(\Theta_2)=\mathbf{x}_2 ),
\end{align*}
where $e_i$ is the output of the $i$th encoder, and $\Theta_i$ is the random message to be transmitted by encoder $i$, where $i=1,2$. By the definition of $\phi_1(\cdot)$ and $\phi_2(\cdot)$, we have
\begin{align*}
P(E_c \cap E^c_1 \cap E^c_2)&=\sum_{(\mathbf{x}_1, \mathbf{x}_2)\in \mathcal{E}} \prod_{i=1}^2 \left[  \sum_{ \mathbf{v}_i \in \mathcal{V}_i}   \sum_{ \mathbf{w}_i \in \mathcal{W}_i } \frac{1}{|\mathcal{V}_i|}  \mathbbm{1}\Big\{\lambda_i(\mathbf{v_i})\geq 1/2 ~ E(\lambda_i(\mathbf{v}_i)) \Big\} \mathbbm{1}\Big\{\phi_i(\mathbf{w}_i, \mathbf{v}_i)+\mathbf{b}+\bar{\mathbf{b}}_i \Big\} \right]
\end{align*}
We remove the indicator function on $\{\lambda_i(\mathbf{v_i})\geq 1/2 ~ E(\lambda_i(\mathbf{v}_i))\}$. This gives an upper-bound the above expression. Next, we taking expectation over all $\phi_1$ and $\phi_2$. We have
\begin{align*}
\EE\{P(E_c \cap E^c_1 \cap E^c_2)\}& \leq \sum_{(\mathbf{x}_1, \mathbf{x}_2)\in \mathcal{E}} \sum_{ \mathbf{v}_i \in \mathcal{V}_i, i=1,2 }   \sum_{ \mathbf{w}_i \in \mathcal{W}_i, i=1,2 } \frac{1}{|\mathcal{V}_1| |\mathcal{V}_2|} P\{\Phi_i(\mathbf{w}_i, \mathbf{v}_i)+\mathbf{B}+\bar{\mathbf{B}}_i, i=1,2\}\\
& \stackrel{(a)}{=} \sum_{(\mathbf{x}_1, \mathbf{x}_2)\in \mathcal{E}} \sum_{ \mathbf{v}_i \in \mathcal{V}_i, i=1,2 }   \sum_{ \mathbf{w}_i \in \mathcal{W}_i, i=1,2 } \frac{1}{|\mathcal{V}_1| |\mathcal{V}_2|} p^{-2nr}\\
&=  \sum_{(\mathbf{x}_1, \mathbf{x}_2)\in \mathcal{E}}  |\mathcal{W}_1||\mathcal{W}_2| p^{-2nr}.
\end{align*}
Note that  $(a)$ is because $\mathbf{B}_1$ and $\mathbf{B}_2$ are independent random vectors with uniform distribution over $\ZZ_{p^r}^n$.  Using the proof of Lemma \ref{lem: covering}, we provide a tighter than the one in \eqref{eq comp: covering}. We have
\begin{align*}
|\mathcal{W}_i|^{-1}|A_\epsilon^{(n)}(X_i)|^{-1}p^{nr} \leq 2^{-n\gamma(\epsilon)}, ~ i=1,2, 
\end{align*}
where $\gamma$ is any function of $\epsilon$, such that $\lim_{\epsilon\rightarrow 0} \gamma(\epsilon)=0$. This function will be determined. Therefore, we have 

\begin{align*}
\EE\{P(E_c \cap E^c_1 \cap E^c_2)\} \leq   \sum_{(\mathbf{x}_1, \mathbf{x}_2)\in \mathcal{E}} |A_\epsilon^{(n)}(X_1)|^{-1}|A_\epsilon^{(n)}(X_2)|^{-1} 2^{n2\gamma(\epsilon)}
\end{align*}
For any $\mathbf{x}_i \in A_\epsilon^{(n)}(X_i)$, we have $P^n_{X_1}(\mathbf{x}_1) \geq |A_\epsilon^{(n)}(X_i)|^{-1}$. Thus,
\begin{align*}
\EE\{P(E_c \cap E^c_1 \cap E^c_2)\} \leq   \sum_{(\mathbf{x}_1, \mathbf{x}_2)\in \mathcal{E}} 	 P^n_{X_1}(\mathbf{x}_1)P^n_{X_2}(\mathbf{x}_2) 2^{n2\gamma(\epsilon)} \leq 2^{n2\gamma(\epsilon)} P^n_{X_1X_2}( \mathcal{E})\leq 2^{-n( \delta(\epsilon)-2\gamma(\epsilon))}.
\end{align*}
Thus, if $\gamma < \frac{1}{2} \delta(\epsilon)$, then $\EE\{P(E_c \cap E^c_1 \cap E^c_2)\} \rightarrow 0$ as $n \rightarrow \infty$.
}

\subsection{Analysis of $E_d \cap ( E_1^c \cup E_2^c \cup E_c)^c$}
In what follows, we redefine the decoding operation. Suppose $\mathbf{x}_i=\phi(\mathbf{w}_i, \mathbf{v}_i)+\mathbf{b}+\bar{\mathbf{b}}_i$, is the codeword transmitted by encoder $i, i=1,2$. We require the decoder to decode $\mathbf{w}_1+\mathbf{w}_2$ and $\mathbf{v}_1+\mathbf{v}_2$. Upon receiving $\mathbf{y}$, the decoder finds {  $\tilde{\mathbf{w}} \in A_\epsilon^{(n)}(W_1 + W_2) $} and { $\tilde{\mathbf{v}} \in A_\epsilon^{(n)}(V_1 + V_2)$} such that $\phi(\tilde{\mathbf{w}}, \tilde{\mathbf{v}})+2\mathbf{b}+\bar{\mathbf{b}}_1+\bar{\mathbf{b}}_2$ is jointly typical with $\mathbf{y}$ with respect to $P_{X_1+X_2,Y}$.   Therefore, the new $E_d$ occurs,  if  $\tilde{\mathbf{w}}$ or $\tilde{\mathbf{v}}$ is not unique. This is a stronger condition, but it is more convenient for error analysis.  Fix $\phi, \mathbf{b}$ and $\mathbf{\bar{b}}_i, i=1,2$.
By $P(\mathbf{v}_i, \mathbf{w}_i, \mathbf{x}_i)$ denote the probability that $(\mathbf{v}_i, \mathbf{w}_i, \mathbf{x}_i)$ is selected at the $i$th encoder.  Then, $P(\mathbf{v}_i, \mathbf{w}_i, \mathbf{x}_i)= \frac{1}{|\mathcal{V}_i|}    \frac{1}{\lambda_i(\mathbf{v}_i)} \mathbbm{1}\{\phi(\mathbf{w}_{ i}, \mathbf{v}_i)+\mathbf{b}+ \mathbf{\bar{b}}_i=\mathbf{x}_i	\}.$

Then the probability of $E_d \cap ( E_1^c \cup E_2^c \cup E_c)^c$ equals
\begin{align*}
P(E_d \cap ( E_1^c \cup E_2^c \cup E_c)^c )&= \left[ \prod_{j=1}^{2}   \sum_{\mathbf{v}_j \in \mathcal{V}_j} \sum_{\mathbf{w}_{ j} \in \mathcal{	W}_j} \mathbbm{1}\Big\{\lambda_i(\mathbf{v_i})\geq 1/2 ~ E(\lambda_i(\mathbf{v}_i)), i=1,2  \Big\} \right]\\
& \sum_{(\mathbf{x}_1, \mathbf{x}_2) \in A_\epsilon^{(n)}(X_1,X_2)}  \sum_{\mathbf{y}\in \mathcal{Y}^n} P(\mathbf{v}_i, \mathbf{w}_{ i},\mathbf{x}_i, i=1,2)\\
&   P^n_{Y|X_1X_2}(\mathbf{y}|\mathbf{x_1, x_2})  P(E_d ~ | ~ ( E_1^c \cup E_2^c \cup E_c)^c, \mathbf{y}, \mathbf{x}_i, \mathbf{v}_i, \mathbf{w}_i, i=1,2)
\end{align*}
Next, we bound $ P(E_d ~ | ~ ( E_1^c \cup E_2^c \cup E_c)^c, \mathbf{y}, \mathbf{x}_i, \mathbf{v}_i, \mathbf{w}_i, i=1,2)$, and  $ P(\mathbf{v}_i \mathbf{w}_{ i},\mathbf{x}_i, i=1,2)$.
\begin{align*}%\label{eq comp: P(E_d)}
P(E_d ~ & | ~ ( E_1^c \cup E_2^c \cup E_c)^c, \mathbf{y}, \mathbf{x}_i, \mathbf{v}_i, \mathbf{w}_i, i=1,2)=\\\nonumber
&\mathbbm{1}\{ \exists ~ (\mathbf{\tilde{\mathbf{w}}},\tilde{\mathbf{v}})  \in  \mathcal{W} \times \mathcal{V} :  (\mathbf{\tilde{\mathbf{w}}},\tilde{\mathbf{v}})  \neq (\mathbf{w}_1+\mathbf{w}_2, \mathbf{v}_1+\mathbf{v}_2),   \phi(\mathbf{\tilde{\mathbf{w}}},\tilde{\mathbf{v}}) +2\mathbf{b}+\bar{\mathbf{b}}_1+\bar{\mathbf{b}}_2 \in A_{\epsilon'}^n(Z|\mathbf{y}) \},
\end{align*}
where  $\mathcal{W}\triangleq A_\epsilon^{(n)}(W_1 + W_2), \mathcal{V}\triangleq A_\epsilon^{(n)}(V_1+ V_2)$, and $Z \triangleq X_1+X_2$. Using the union bound, we have
\begin{align}\label{eq comp: bound on E_d given y, x, v, w}
P(E_d ~  | ~& ( E_1^c \cup E_2^c \cup E_c)^c, \mathbf{y}, \mathbf{x}_i, \mathbf{v}_i, \mathbf{w}_i, i=1,2)  \leq \\\nonumber
&  \sum_{\substack{ \tilde{\mathbf{w}} \in \mathcal{W}\\  \tilde{\mathbf{w}} \neq \mathbf{w}_1+\mathbf{w}_2}} \sum_{\substack{ \tilde{\mathbf{v}} \in \mathcal{V}\\  \tilde{\mathbf{v}} \neq \mathbf{v}_1+\mathbf{v}_2}} \sum_{\substack{ \tilde{\mathbf{z}}\in A_{\epsilon'}^{(n)}(Z|\mathbf{y})}} \mathbbm{1}\{\phi(\tilde{\mathbf{w}}, \tilde{\mathbf{v}})+2\mathbf{b}+\bar{\mathbf{b}}_1+\bar{\mathbf{b}}_2=\tilde{\mathbf{z}}\}
\end{align}

Note that $P(\mathbf{v}_i, \mathbf{w}_{ i},\mathbf{x}_i, i=1,2)=\prod_{i=1,2}P(\mathbf{v}_i, \mathbf{w}_{ i},\mathbf{x}_i)$. Since there is no encoding error, $\lambda_i(\mathbf{v}_i) \geq \frac{1}{2}E(\lambda_i(\mathbf{v}_i))$. As a result, 
\begin{align}\label{eq comp: bound on P(w_i, v_i, x_i)}
P(\mathbf{v}_i, \mathbf{w}_{ i},\mathbf{x}_i) \leq \frac{1}{|\mathcal{V}_i|}  \frac{2}{E(\lambda_i(\mathbf{v}_i))}\mathbbm{1}\{\phi(\mathbf{w}_{ i}, \mathbf{v}_i)+\mathbf{b}+ \mathbf{\bar{b}}_i=\mathbf{x}_i	\}
\end{align}

Therefore, using  \eqref{eq comp: bound on P(w_i, v_i, x_i)}, we have
\begin{align}\nonumber
P(E_d \cap & ( E_1^c \cup E_2^c \cup E_c)^c )\leq  \sum_{(\mathbf{x}_1, \mathbf{x}_2) \in A_\epsilon^{(n)}(X_1,X_2)}  \Big[ \prod_{j=1}^{2}   \sum_{\mathbf{v}_j \in \mathcal{V}_j} \sum_{\mathbf{w}_{ j} \in \mathcal{	W}_j} \mathbbm{1}\Big\{\lambda_j(\mathbf{v_j})\geq 1/2 ~ E(\lambda_j(\mathbf{v}_j))\Big\} \\\nonumber
& \frac{1}{|\mathcal{V}_j|}  \frac{2}{E(\lambda_i(\mathbf{v}_j))}\mathbbm{1}\{\phi(\mathbf{w}_{ j}, \mathbf{v}_j)+\mathbf{b}+ \mathbf{\bar{b}}_j=\mathbf{x}_j	\} \Big] \\\nonumber
&   \sum_{\mathbf{y}\in \mathcal{Y}^n}  P^n_{Y|X_1X_2}(\mathbf{y}|\mathbf{x_1, x_2})  P(E_d ~  | ~ ( E_1^c \cup E_2^c \cup E_c)^c, \mathbf{y}, \mathbf{x}_i, \mathbf{v}_i, \mathbf{w}_i, i=1,2)\\\nonumber
& \leq  \sum_{(\mathbf{x}_1, \mathbf{x}_2) \in A_\epsilon^{(n)}(X_1,X_2)}  \left[ \prod_{j=1}^{2}   \sum_{\mathbf{v}_j \in \mathcal{V}_j} \sum_{\mathbf{w}_{ j} \in \mathcal{	W}_j}  \frac{1}{|\mathcal{V}_j|}  \frac{2}{E(\lambda_i(\mathbf{v}_j))}\mathbbm{1}\{\phi(\mathbf{w}_{ j}, \mathbf{v}_j)+\mathbf{b}+ \mathbf{\bar{b}}_j=\mathbf{x}_j	\} \right] \\ \label{eq: bound 43}
&   \sum_{\mathbf{y}\in \mathcal{Y}^n}  P^n_{Y|X_1X_2}(\mathbf{y}|\mathbf{x_1, x_2})  P(E_d ~  | ~ ( E_1^c \cup E_2^c \cup E_c)^c, \mathbf{y}, \mathbf{x}_i, \mathbf{v}_i, \mathbf{w}_i, i=1,2)
\end{align}

The last inequality follows by eliminating the indicator function on $\{\lambda_i(\mathbf{v_i})\geq 1/2 ~ E(\lambda_i(\mathbf{v}_i)), i=1,2  \Big\}$. Note that for jointly $\epsilon$-typical sequences $\mathbf{x}_1, \mathbf{x}_2$ and large enough $n$, we have  $P(\mathbf{Y}^n \notin A_{\tilde{\epsilon}}^{(n)}(Y|\mathbf{x}_1, \mathbf{x}_2)) \leq \frac{c}{n\tilde{\epsilon}^2}$, where $c$ is a constant. This follows from the standard arguments on typical sets.  Thus, using  \eqref{eq: bound 43} and  \eqref{eq comp: bound on E_d given y, x, v, w}  we get
 
 \begin{align*}
P(E_d &	 \cap  ( E_1^c \cup E_2^c \cup E_c)^c )\leq \frac{c}{n\tilde{\epsilon}^2}+\\
& \sum_{(\mathbf{x}_1, \mathbf{x}_2) \in A_\epsilon^{(n)}(X_1,X_2)}  \left[ \prod_{j=1}^{2}   \sum_{\mathbf{v}_j \in \mathcal{V}_j} \sum_{\mathbf{w}_{ j} \in \mathcal{	W}_j}  \frac{1}{|\mathcal{V}_j|}  \frac{2}{E(\lambda_i(\mathbf{v}_j))}\mathbbm{1}\{\phi(\mathbf{w}_{ j}, \mathbf{v}_j)+\mathbf{b}+ \mathbf{\bar{b}}_j=\mathbf{x}_j	\} \right] \\\nonumber 
&\sum_{\mathbf{y}\in A_{\tilde{\epsilon}}^n(Y|\mathbf{x}_1,\mathbf{x}_2)}  P^n_{Y|X_1X_2}(\mathbf{y}|\mathbf{x_1, x_2})  \sum_{\substack{ \tilde{\mathbf{w}} \in \mathcal{W}\\  \tilde{\mathbf{w}} \neq \mathbf{w}_1+\mathbf{w}_2}} \sum_{\substack{ \tilde{\mathbf{v}} \in \mathcal{V}\\  \tilde{\mathbf{v}} \neq \mathbf{v}_1+\mathbf{v}_2}} \sum_{\substack{ \tilde{\mathbf{z}}\in A_{\epsilon'}^{(n)}(Z|\mathbf{y})}} \mathbbm{1}\{\phi(\tilde{\mathbf{w}}, \tilde{\mathbf{v}})+2\mathbf{b}+\bar{\mathbf{b}}_1+\bar{\mathbf{b}}_2=\tilde{\mathbf{z}}\}
\end{align*} 
Next, we take the average of the above expression over all maps $\phi$, and  all vectors $\mathbf{b}, \bar{\mathbf{b}}_i, i=1,2$. 
\begin{align*}
 \EE\{P(E_d \cap & ( E_1^c \cup E_2^c \cup E_c)^c )\} \leq \frac{c}{n\tilde{\epsilon}^2}+   \left[ \prod_{j=1}^{2}   \sum_{\mathbf{v}_j \in \mathcal{V}_j} \sum_{\mathbf{w}_{ j} \in \mathcal{	W}_j}   \frac{1}{|\mathcal{V}_j|}  \frac{2}{E(\lambda_j(\mathbf{v}_j))} \right] \\
& \sum_{(\mathbf{x}_1, \mathbf{x}_2, \mathbf{y}) \in A_{\bar{\epsilon}}^{(n)}(X_1,X_2,Y)}   P^n_{Y|X_1X_2}(\mathbf{y}|\mathbf{x_1, x_2})
  \sum_{\substack{ \tilde{\mathbf{w}} \in \mathcal{W}\\  \tilde{\mathbf{w}} \neq \mathbf{w}_1+\mathbf{w}_2}} \sum_{\substack{ \tilde{\mathbf{v}} \in \mathcal{V}\\  \tilde{\mathbf{v}} \neq \mathbf{v}_1+\mathbf{v}_2}} \sum_{\substack{ \tilde{\mathbf{z}}\in A_{\epsilon'}^{(n)}(Z|\mathbf{y})}} 
\\ &P\{\tilde{z}=\Phi(\mathbf{\tilde{w}}, \mathbf{\tilde{v}})+2\mathbf{B}+\bar{\mathbf{B}}_1+\bar{\mathbf{B}}_1, x_1 = \Phi(\mathbf{w}_1, \mathbf{v}_1)+\mathbf{B}+\bar{\mathbf{B}}_1, x_2= \Phi(\mathbf{w}_2, \mathbf{v}_2)+\mathbf{B}+\bar{\mathbf{B}}_1\}
\end{align*}

Notice that $ \mathbf{B},  \bar{\mathbf{B}}_1$, and are $\bar{\mathbf{B}}_1$ are uniform over $\ZZ_{p^r}^n$ and independent of other random variables. Hence, the innermost term in the above summations is simplified to
\begin{align}\label{eq: pf comp over MAC - probability}
p^{-2nr} P\{\mathbf{\tilde{z}-x_1-x_2}= \Phi(\mathbf{\tilde{w}}-(\mathbf{w_1+w_2}), \mathbf{\tilde{v}}-(\mathbf{v_1+v_2}))\}
\end{align}
Using  Lemma \ref{lem: P(phi)}, if $\mathbf{\tilde{w}}-(\mathbf{w_1+w_2}) , \mathbf{\tilde{v}}-(\mathbf{v_1+v_2}) \in H^k_s \backslash H_{s+1}^k$ the expression in \eqref{eq: pf comp over MAC - probability}  equals $$p^{-2nr} p^{-n(r-s)}\mathbbm{1}\{\tilde{z}-\mathbf{x_1-x_2} \in H_s^n\},$$ where $ 0\leq s \leq r-1$. Therefore, $\EE\{P(E_d \cap  ( E_1^c \cup E_2^c \cup E_c)^c )\}$ is upper-bounded as 
\begin{align} \nonumber
\EE\{P(E_d \cap & ( E_1^c \cup E_2^c \cup E_c)^c )\}  \leq \frac{c}{n\tilde{\epsilon}^2}+  \\ \nonumber
& \left[ \prod_{j=1}^{2}   \sum_{\mathbf{v}_j \in \mathcal{V}_j} \sum_{\mathbf{w}_{ j} \in \mathcal{	W}_j}   \frac{1}{|\mathcal{V}_j|}  \frac{2}{E(\lambda_j(\mathbf{v}_j))} \right] \sum_{(\mathbf{x}_1, \mathbf{x}_2, \mathbf{y}) \in A_{\bar{\epsilon}}^{(n)}(X_1,X_2,Y)}P^n_{Y|X_1X_2}(\mathbf{y}|\mathbf{x_1, x_2})\\\label{eq: pe_1}
&   \sum_{s=0}^{r-1}
  \sum_{\substack{ \tilde{\mathbf{w}} \in \mathcal{W}\\  \mathbf{\tilde{w}}-(\mathbf{w_1+w_2}) \in H^{k}_{s}}} \sum_{\substack{ \tilde{\mathbf{v}} \in \mathcal{V}\\  \mathbf{\tilde{v}}-(\mathbf{v_1+v_2}) \in H^{k}_{s}}}\sum_{\substack{\tilde{z}\in A_\epsilon^n(Z|y)\\ \mathbf{\tilde{z}-x_1-x_2} \in H^n_{s}}} p^{-2nr} p^{-n(r-s)}
\end{align}

Note the most inner term in the above summations does not depend on the value of $\mathbf{\tilde{z}, \tilde{v}}$ and $\mathbf{\tilde{w}}$. Hence, we replace those summations by the size of the corresponding subsets. Using Lemma \ref{lem: typical set intersection subgroup} we can bound the size of these subsets and get the following bound on the probability of error
\begin{align*}
\EE\{P(E_d \cap&  ( E_1^c \cup E_2^c \cup E_c)^c )\}  \leq \frac{c}{n\tilde{\epsilon}^2}+\\
&  \left[ \prod_{j=1}^{2}   \sum_{\mathbf{v}_j \in \mathcal{V}_j} \sum_{\mathbf{w}_{ j} \in \mathcal{	W}_j}   \frac{1}{|\mathcal{V}_j|}  \frac{2}{E(\lambda_j(\mathbf{v}_j))} \right] \sum_{(\mathbf{x}_1, \mathbf{x}_2, \mathbf{y}) \in A_{\bar{\epsilon}}^{(n)}(X_1,X_2,Y)}P^n_{Y|X_1X_2}(\mathbf{y}|\mathbf{x_1, x_2})\\
&\sum_{s=0}^{r-1} 2^{k(H(W|Q, [W]_s)+\eta_1(\epsilon))}2^{l(H(V|Q, [V]_s)+\eta_2(\epsilon))}~ 2^{n(H(Z|Y[Z]_{s})+\eta_3(\epsilon))}p^{-2nr} p^{-n(r-s)},
\end{align*}
where $W=W_1+W_2, V=V_1+V_2$, and $\lim_{\epsilon \rightarrow 0} \eta_i(\epsilon)=0, i=1,2,3$. Note that $E({\lambda_i}(\mathbf{v}_i))=|\mathcal{W}_i||A_\epsilon^{(n)}(X_i)|p^{-nr}, i=1,2$.
As the terms in the above expression do not depend on the values of $\mathbf{w}_i, \mathbf{v}_i, \mathbf{x}_i,i=1,2$ and  $\mathbf{y}$, we can replace the summations over them with the corresponding sets.  As a result, we have 
\begin{align*}
\EE\{P(E_d \cap  ( E_1^c \cup E_2^c \cup E_c)^c )\} & \leq \frac{c}{n\epsilon^2} + 4  \sum_{s=0}^{r-1}  p^{-n(r-s)}  2^{kH(W|Q, [W]_s)}2^{lH(V|Q, [V]_s)}~ 2^{n(H(Z|Y[Z]_{s})+\delta'(\epsilon))},
\end{align*}
where $\lim_{\epsilon \rightarrow 0}  \delta'(\epsilon)=0$.  Therefore, the right-hand side of the above inequality approaches zero as $n\rightarrow \infty$, if the following bounds hold:
\begin{align}\label{equ: bound simple form}
\frac{k}{n} H(W|Q, [W]_s) +\frac{l}{n} H(V|Q, [V]_s)\leq   \log_2 p^{r-s}- H(Z|Y[Z]_{s})-\delta(\epsilon), \quad \mbox{for} ~ 0\leq s\leq r-1.
\end{align}
Next, we apply the Fourier-Motzkin technique \cite{ElGamal-book} to eliminate $\frac{k}{n}$ from \eqref{eq comp: covering} and \eqref{equ: bound simple form}. We get
\begin{align*}
\frac{l}{n} H(V|Q, [V]_s) \leq \log_2 p^{r-s}- H(Z|Y[Z]_{s}) -\frac{H(W|Q, [W]_s)}{H([W_i]_t|Q)} (\log_2 p^t - H([X_i]_t)) -o(\epsilon),
\end{align*}
 where $i=1,2, ~~0 \leq s \leq r-1$, and $1\leq t \leq r$.  Note by definition $$R_i=\frac{1}{n}\log_2 |\bar{\mathcal{C}}_i|\leq \frac{1}{n}\log_2 |\mathcal{V}_i| \leq \frac{l}{n} H(V_i|Q)  .$$ Therefore, we obtain the bounds in the theorem. Using the same argument as in Lemma \ref{lem: cardinality of Q}, we can bound the cardinality of $Q$ by  $|\mathcal{Q}| \leq r^2$. This completes the proof.

\section{Proof of Lemma \ref{lem: suboptimality of Gelfand-Pinsker}}\label{sec: proof of Lemma MAC with states}
\begin{proof}
Consider the bound on the sum-rate given in (\ref{eq: sum-rate}). The set of all $(R_1, R_2)$ satisfying only this bound is an outer-bound for $\mathscr{R}_{GP}$. The time-sharing random variable $Q$ is trivial for this outer-bound, because there is only one inequality on the rates, and because of the cost constraints $\EE\{c_i(X_i)\}=0, i=1,2$.  For any distribution $P\in \mathscr{P}_{GP}$, we obtain% Fix a pmf $p(s_1,u_1,x_1)p(s_2,u_2,x_2) \in \mathcal{D}$. We have
\begin{align}\nonumber
R_1+R_2&\leq  I(U_1 U_2; Y)-I(U_1;S_1)-I(U_2;S_2)\\\nonumber
&=H(Y)-H(Y|U_1U_2)-H(S_1)+H(S_1|U_1)-H(S_2)+H(S_2|U_2)\\\nonumber
&\leq H(S_1|U_1)+H(S_2|U_2)-H(Y|U_1U_2)-2\\\label{eq: sum-rate last bound}
&=\max_{P \in \mathscr{P}_{GP}} \sum_{u_1\in \mathcal{U}_1}\sum_{u_2\in \mathcal{U}_2} p(u_1,u_2) \Big( H(S_1|u_1)+H(S_2|u_2)-H(Y|u_1u_2)-2\Big)
%&\leq \max_{u_1\in \mathcal{U}_1, u_2\in \mathcal{U}_2}\Big( H(S_1|u_1)+H(S_2|u_2)-H(Y|u_1u_2)-2\Big),
\end{align}
where the second inequality holds, as $H(Y)\leq 2$, and $H(S_i)=2$ for $i=1,2$.  In the next step, we relax the conditions in $\mathscr{P}_{GP}$, and provide an upper-bound on (\ref{eq: sum-rate last bound}). For $i=1,2$, and any $u_i\in \mathcal{U}_i$, define   $\mathscr{P}_{u_i}$ as the collection of all conditional PMFs $ p(s_i,x_i|u_i)$ on $\ZZ^2_4$ such that 
\begin{enumerate}
\item $X_i=f_i(S_i,u_i)$ for some function $f_i$,
\item  $E(c_i(X_i)|u_i)=0.$
\end{enumerate}
In the first condition, given $u_i$,  $f_i(s_i, u_i)$ can be thought as a function $g_{u_i}$ of $s_i$. For different $u_i$'s we have different functions $g_{u_i}(s_i)$.  The second condition is implied from the cost constraint $E(c_i(X_i))=0$, because without loss of generality we assume $p(u_i)>0$ for all $u_i\in \mathcal{U}_i$. Also, note that we removed the condition that $S_i$ is uniform over $\ZZ_4$. 
Hence, $\mathscr{P}_{GP}$ is a subset of the set of all PMFs of the form $P=\prod_{i=1}^2 p(u_i)p(s_i,x_i|u_i)$, where $p(s_i,x_i|u_i)\in \mathscr{P}_{u_i}, i=1,2$. 
%\end{lem}
%\begin{proof}
%For any PMF  $P \in \mathscr{P}$, the states $S_1,S_2$ are independent, and the Markov chain $U_1X_1-S_1-S_2-U_2X_2$ holds. Therefore, $P$ factors as $\prod_{i=1}^2 p(u_i) p(s_i, x_i|u_i)$. Note that $p(s_i, x_i|u_i)$ satisfies the conditions in the definition of $P_{u_i}$ for $i=1,2$. Hence, $\mathscr{P}$ is a subset of the set of all PMFs $\prod_{i=1}^2 p(u_i)p(s_i,x_i|u_i)$, where $p(s_i,x_i|u_i)\in \mathcal{P}_{u_i}$. 
%\end{proof}
As a result, (\ref{eq: sum-rate last bound}) is upper-bounded by 
\begin{align}
&R_1+R_2\\
&\leq  \max_{p(u_1), p(u_2)}\max_{\substack{ p(s_i,x_i|u_i)\in \mathscr{P}_{u_i}\\ i=1,2 }} \sum_{u_1\in \mathcal{U}_1}\sum_{u_2\in \mathcal{U}_2} p(u_1,u_2) \Big( H(S_1|u_1)+H(S_2|u_2)-H(Y|u_1u_2)-2\Big)\\
%&\leq \sum_{u_1\in \mathcal{U}_1}\sum_{u_2\in \mathcal{U}_2}  \max_{p(u_1), p(u_2)}\max_{\substack{ p(s_i,x_i|u_i)\in \mathscr{P}_{u_i}\\ i=1,2 }}p(u_1,u_2) \Big( H(S_1|u_1)+H(S_2|u_2)-H(Y|u_1u_2)-2\Big)\\
%&\leq \sum_{u_1\in \mathcal{U}_1}\sum_{u_2\in \mathcal{U}_2}  \max_{p(u_1), p(u_2)}p(u_1,u_2)\max_{\substack{ p(s_i,x_i|u_i)\in \mathscr{P}_{u_i}\\ i=1,2 }}\Big( H(S_1|u_1)+H(S_2|u_2)-H(Y|u_1u_2)-2\Big)\\
&\leq\max_{u_1 \in \mathcal{U}_1, u_2\in \mathcal{U}_2} \max_{\substack{ p(s_i,x_i|u_i)\in \mathscr{P}_{u_i}\\ i=1,2 }}\Big( H(S_1|u_1)+H(S_2|u_2)-H(Y|u_1u_2)-2\Big)
\end{align}
Fix $u_2\in \mathcal{U}_2$ and $p(s_2,x_2|u_2)\in \mathscr{P}_{u_2}$. We maximize over all $u_1\in \mathcal{U}_1$ and $p(s_1,x_1|u_1)\in \mathscr{P}_{u_1}$. Let $N=X_2+ S_2$, where $X_2$ and $S_2$ are distributed according to  $p(s_2,x_2|u_2)$. For fixed $u_2\in \mathcal{U}_2$, by $Q_{u_2}\in \mathscr{P}_{u_2}$ denote the PMF $p(s_2,x_2|u_2)$. This maximization problem is equivalent to finding  
\begin{align}\label{eq: p.t.p}
 R(u_2, Q_{u_2})\triangleq H(S_2|u_2)+ \max_{u_1\in \mathcal{U}_1} \max_{p(s_1,x_1|u_1)\in \mathscr{P}_{u_1} }H(S_1|u_1)-H(X_1+ S_1+ N |u_1)-2.
\end{align}
 Consider the problem of PtP channel with state, where the channel is $Y=X_1+ S_1 + N$.  It can be shown that $ R(u_2, Q_{u_2})-H(S_2|u_2)$ is an upper-bound on the capacity of this problem.  We proceed by the following lemma.
\begin{lem}\label{lem: R(u_2, Q)< 0.32}
The following bound holds $R(u_2, Q_{u_2})<1 $ for all $u_2\in \mathcal{U}_2$ and $Q_{u_2}\in \mathscr{P}_{u_2}$ . 
\end{lem}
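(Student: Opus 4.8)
The plan is to identify $R(u_2,Q_{u_2})$ with a single conditional entropy and then exploit the ``at most two--to--one'' structure that the cost constraints impose. Fix $u_2$ and $Q_{u_2}\in\mathscr{P}_{u_2}$, and let $u_1,Q_{u_1}$ attain the maximum in \eqref{eq: p.t.p}; all entropies below are understood to be conditioned on these $u_i$'s, which I suppress. Writing $Y=X_1+S_1+N=X_1+S_1+X_2+S_2$ for the channel output, the cost constraints force $X_1=g(S_1)\in\{0,2\}$ and $X_2=h(S_2)\in\{0,1\}$ for deterministic maps $g,h$, and $S_1\perp S_2$ (the states are independent and remain so after conditioning on $(u_1,u_2)$). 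Hence
\begin{align*}
R(u_2,Q_{u_2})=H(S_1)+H(S_2)-H(Y)-2,
\end{align*}
so it suffices to prove $H(S_1)+H(S_2)-H(Y)<3$ for every such configuration (the feasible set of distributions together with the finitely many choices of $g,h$ is compact and the quantity is continuous, so the outer maximum over $u_1,Q_{u_1}$ is attained).

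First I would set $T_1=X_1+S_1=\psi(S_1)$ and $T_2=X_2+S_2=\phi(S_2)$, so that $T_i$ is a function of $S_i$, $T_1\perp T_2$, and $Y=T_1+T_2$. Since $Y$ is a function of $(S_1,S_2)$ one has $H(S_1)+H(S_2)-H(Y)=H(S_1,S_2\mid Y)$, and inserting $(T_1,T_2)$ gives the decomposition
\begin{align*}
H(S_1,S_2\mid Y)=H(T_1,T_2\mid Y)+H(S_1\mid T_1)+H(S_2\mid T_2)=:A+B_1+B_2,
\end{align*}
using that, given $(T_1,T_2)$, the output $Y$ is determined and $S_1\perp S_2$, and that $H(T_1,T_2\mid Y)=H(T_1\mid Y)=H(T_1)+H(T_2)-H(Y)$ by independence of $T_1,T_2$. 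Next I would bound the three terms: since $X_1$ is even, $\psi$ preserves parity modulo $2$ and is bijective or constant on each residue class, so its fibres have size at most $2$ and $B_1\le 1$; since $\phi(s)\in\{s,s+1\}$, every fibre $\phi^{-1}(y)$ lies in $\{y,y-1\}$, so $B_2\le 1$; independence of $T_1,T_2$ gives $H(Y)=H(T_1+T_2)\ge\max(H(T_1),H(T_2))$, hence $A\le\min(H(T_1),H(T_2))$; and $H(T_i)=H(S_i)-B_i\le 2-B_i$ because $S_i$ takes values in $\ZZ_4$. Combining these, $A\le 2-\max(B_1,B_2)$, and therefore
\begin{align*}
H(S_1)+H(S_2)-H(Y)=A+B_1+B_2\le 2+\min(B_1,B_2)\le 3.
\end{align*}

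The remaining, and genuinely delicate, step is to rule out equality; this is the main obstacle, since the term-by-term estimate only yields $\le 3$ and one must use the explicit list of admissible maps on $\ZZ_4$ to gain the last bit. Equality forces $B_1=B_2=1$, $H(S_1)=H(S_2)=2$ (so $S_1,S_2$ are uniform), $H(T_1)=H(T_2)=1$ and $A=1$, i.e.\ $H(T_1+T_2)=1$. With $S_1$ uniform, $B_1=1$ makes $\psi$ exactly two--to--one; being parity preserving it is then constant on $\{0,2\}$ and on $\{1,3\}$, so $T_1$ is uniform on $\{e_1,o_1\}$ with $e_1$ even and $o_1$ odd, and $[T_1]_1$ is a fair bit. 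With $S_2$ uniform, $B_2=1$ makes $\phi$ exactly two--to--one, and a short enumeration of the maps $s\mapsto s+h(s)$, $h(s)\in\{0,1\}$, shows the only two--to--one such maps have image $\{0,2\}$ or $\{1,3\}$; hence $T_2$ is uniform on a coset of $H_1=\{0,2\}$ and $[T_2]_1$ is constant. Writing $T_2=[T_2]_1+2V$ with $V$ a fair bit independent of $T_1$, the bit $[T_1]_1=[T_1+T_2]_1-[T_2]_1$ is a function of $T_1+T_2$, whereas conditioned on $[T_1]_1$ the value $T_1+T_2=T_1+[T_2]_1+2V$ is uniform over a two--element set; therefore $H(T_1+T_2)=H([T_1]_1)+H(T_1+T_2\mid[T_1]_1)=1+1=2$, contradicting $H(T_1+T_2)=1$. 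This yields the strict bound $H(S_1)+H(S_2)-H(Y)<3$, hence $R(u_2,Q_{u_2})<1$.
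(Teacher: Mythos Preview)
Your argument is correct and takes a genuinely different route from the paper's proof. The paper proceeds by a four-way case analysis on $|\{f_2(u_2,s)+s:s\in\ZZ_4\}|$ (the size of the image of your map $\phi$). Cases $1$ and $4$ are handled directly, while Cases $2$ and $3$ write the distribution of $T_2$ as a convex combination of a few fixed ``noise'' distributions, use concavity of entropy, and then invoke an auxiliary lemma (Lemma~\ref{lem mac state: bound on H(S)-H(X+S+N)}) which in turn rests on the characterization of when $H(X\oplus_m Y)=H(Y)$ (Lemma~\ref{lem: H(X+Y)=H(X)}).

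Your decomposition $H(S_1,S_2\mid Y)=A+B_1+B_2$ together with $A\le \min(H(T_1),H(T_2))\le 2-\max(B_1,B_2)$ replaces all four cases by one uniform inequality, and your equality analysis replaces the convex-combination machinery and both auxiliary lemmas by a short enumeration of the admissible two-to-one maps. What you gain is a self-contained, structurally transparent proof; what the paper's approach gains is sharper intermediate information, since in Cases $1$ and $4$ it actually shows $R(u_2,Q_{u_2})\le 0$ rather than merely $<1$. Two small remarks: your compactness justification for attainment of the inner maximum is correct because the effective optimization is over finitely many functions $g$ times a probability simplex for $p(s_1\mid u_1)$; and in the equality step it is worth noting explicitly (as you implicitly use) that from $B_1=B_2=1$ and $H(T_i)=H(S_i)-B_i$ one gets $\min(H(T_1),H(T_2))=1$ only if \emph{both} $H(S_i)=2$, hence both $T_i$ have entropy exactly $1$.
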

\begin{proof}
The proof is given in Appendix \ref{sec: proof of lem R(u_2, Q)< 0.32}.
\end{proof}
 Finally, as a result of the above lemma the proof is completed.
\end{proof}

\section{Proof of Lemma \ref{lem: R(u_2, Q)< 0.32}}\label{sec: proof of lem R(u_2, Q)< 0.32}
\begin{proof}
Note that for any fixed $u_2\in \mathcal{U}_2$, the distribution of $N$ depends on  the conditional PMF $p(s_1|u_1)$, and the function $x_1=f_1(s_1,u_1)$.  For any $u\in \mathcal{U}_2$ define  
$$\mathcal{L}_u:=\{f_2(u,s)+ s: s\in \ZZ_4\}.$$
For any given $i\in \{1,2,3,4\}$,  define
$$\mathcal{B}_i \triangleq \{u\in \mathcal{U}_2:    |\mathcal{L}_u|=i   \}.$$
Note that $\mathcal{B}_i$'s are disjoint and $\mathcal{U}_2=\bigcup_i \mathcal{B}_i$. Depending on $u_2$, we consider four cases. In what follows, for each case, we derive an upper bound on  (\ref{eq: p.t.p}). Consider the PMF $p(\omega)$ on $\ZZ_4$. For brevity, we represent this PMF by the vector $\mathbf{p}:=(p(0), p(1), p(2), p(3))$. 

\subsection*{ Case 1: $u_2\in \mathcal{B}_1$}
Since $|\mathcal{L}_{u_2}|=1$, then for all $s_2\in \ZZ_4$ the equality $s_2+ f_2(s_2,u_2)=a$ holds, where $a\in\ZZ_4$ is a constant that only depends on $u_2$. This implies that conditioned on $u_2$, $X_2+ S_2$ equals to a constant $a$, with probability one. Therefore,
%Therefore, for any fixed $u_2\in \mathcal{B}_1$, the function $f_2(s_2, u_2)$ can be expressed by  $f_2(s_2,u_2)=a\ominus s_2$. This implies that 
\begin{align*}
H(X_1+ S_1 + X_2 + S_2|u_2 u_1)=H(X_1+ S_1+ a|u_1u_2)=H(X_1+ S_1|u_1)
\end{align*}
Moreover, $$H(S_2|u_2)=H(a \ominus X_2|u_2)=H(X_2|u_2).$$ By assumption $p(u_2)>0$. Therefore, the cost constraint $\EE(c_2(X_2))=0$ implies that  $\EE(c_2(X_2)|U_2=u_2)=0$. Hence, given $U_2=u_2$, the random variable $X_2$ takes at most two values with positive probabilities.  As a result,  $H(X_2|u_2)\leq 1$. Given this inequality, we obtain
\begin{align*}
R(u_2, Q_{u_2}) \leq H(S_1|u_1)-H(X_1+ S_1|u_1)-1 \leq 0
\end{align*}
where the last inequality follows by Lemma \ref{lem mac state: bound on H(S)-H(X+S+N)} in Appendix \ref{sec: useful lemmas}.

\subsection*{ Case 2: $u_2\in \mathcal{B}_2$}
For any fixed $u_2\in \mathcal{B}_2$, $f_2(s_2, u_2) + s_2$ takes two values for all $s_2\in \ZZ_4$. Assume these values are $a, b\in \ZZ_4$, where $a\neq b$. Given $u_2$ the random variable  $X_2+ S_2$ is distributed over $\{a,b\}$. Therefore, $X_2 + S_2\ominus a$ is distributed over $\{0, b\ominus a\}$, and  
\begin{align*}
H(X_1+ S_1 + X_2 + S_2|u_2 u_1)=H(X_1+ S_1 + X_2 + S_2\ominus a |u_2 u_1).
\end{align*}
As a result, the case $\{a,b\}$ gives the same bound as $\{0, b\ominus a\}$, and we need to consider only the case in which $a=0$. For the case in which $a=0$, and  $b=3$, consider $X_2 + S_2+ 1$. Using a similar argument as above, we can show that when $b=3$, we get the same bound when $b=1$. Therefore, we only need to consider the cases in which $a=0$, and $b\in \{1, 2\}$. We address these cases in the next Claim.
%
%$H(Z)=H(Z+ c)$ for any random variable $Z$ and any constant $c\in \ZZ_4$. Therefore, we only have two cases for the values that $X_2+ S_2$ can take; it is either $\{0, 2\}$ or $\{0, 1\}$. 
\begin{claim}\label{claim: case 2}
Let $P(X_2 + S_2 =0 |u_1)=p_0$. The following holds: 

1) If $b=2$, then
\begin{align*}
R(u_2, Q_{u_2}) &\leq \beta(H(S_1|u_1)-H(X_1+ S_1 + N_{(2/3,0,1/3,0)}|u_1))\\
&+(1-\beta)(H(S_1|u_1)-H(X_1+ S_1+ N_{(1/3,0,2/3,0)}|u_1))+H(S_2|u_2)-2
\end{align*}

2) If $b=1$, then 
\begin{align*}
R(u_2, Q_{u_2})& \leq  \beta(H(S_1|u_1)-H(X_1+ S_1 + N_{(2/3,1/3,0,0)}|u_1))\\
&+(1-\beta)(H(S_1|u_1)-H(X_1+ S_1+ N_{(1/3,2/3,0,0)}|u_1))+H(S_2|u_2)-2
\end{align*}
\end{claim}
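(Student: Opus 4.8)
The plan is to reduce, for a fixed maximizing pair $\bigl(u_1,p(s_1,x_1\mid u_1)\bigr)$, the term $-H(X_1+S_1+N\mid u_1)$ appearing in $R(u_2,Q_{u_2})$ to its values at the two reference noise laws in the statement, exploiting that entropy is concave under convolution by an independent summand.

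First I would pin down the conditional law of $N=X_2+S_2$ given $u_2$. Since $u_2\in\mathcal{B}_2$ and — after the shift normalizations carried out just before the claim — $X_2+S_2$ takes values in $\{0,b\}$ with $b\in\{1,2\}$, the cost constraint $X_2=f_2(S_2,u_2)\in\{0,1\}$ on the support of $S_2$ together with $f_2(S_2,u_2)+S_2\in\{0,b\}$ forces the map $s_2\mapsto f_2(s_2,u_2)$ there, hence forces $N$ as a deterministic function of $S_2$. For $b=2$ one checks that $N=0$ exactly when $S_2\in\{0,3\}$ and $N=2$ otherwise, so the law of $N$ given $u_2$ is $(p_0,0,1-p_0,0)$ with $p_0=P(X_2+S_2=0\mid u_2)$; for $b=1$ the analogous computation gives $(p_0,1-p_0,0,0)$.

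Next, fix the pair $\bigl(u_1,p(s_1,x_1\mid u_1)\bigr)\in\mathscr{P}_{u_1}$ that attains the inner maximum defining $R(u_2,Q_{u_2})$, and hold it fixed. Because the joint law factors as $P_{S_1S_2}\prod_i P_{U_iX_i\mid S_iQ}$, the summand $N$ is independent of $(U_1,S_1,X_1)$; hence the law of $X_1+S_1+N$ given $u_1$ is the $\ZZ_4$-convolution of the now-fixed law of $X_1+S_1$ given $u_1$ with the law of $N$. Since entropy is concave in the probability vector and convolution by a fixed law is affine in the law of $N$, the quantity $H(X_1+S_1+N\mid u_1)$ is concave in the law of $N$, i.e.\ $-H(X_1+S_1+N\mid u_1)$ is convex in it. For $p_0$ in the balanced range one has
\begin{align*}
(p_0,0,1-p_0,0)=\beta\,(2/3,0,1/3,0)+(1-\beta)\,(1/3,0,2/3,0),\qquad\beta=3p_0-1,
\end{align*}
so Jensen's inequality gives $-H(X_1+S_1+N\mid u_1)\le\beta\bigl(-H(X_1+S_1+N_{(2/3,0,1/3,0)}\mid u_1)\bigr)+(1-\beta)\bigl(-H(X_1+S_1+N_{(1/3,0,2/3,0)}\mid u_1)\bigr)$. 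Adding $H(S_1\mid u_1)=\beta H(S_1\mid u_1)+(1-\beta)H(S_1\mid u_1)$ and then the constant $H(S_2\mid u_2)-2$ reproduces part~1); part~2) is verbatim the same with the reference laws replaced by their translates $(2/3,1/3,0,0)$ and $(1/3,2/3,0,0)$, the only change for $b=1$ being that $N$ is supported on $\{0,1\}$ rather than $\{0,2\}$.

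The step I expect to be the main obstacle is making precise for which $p_0$ the claim is intended and why that is exactly the regime that needs it: the convex decomposition above is valid only when $\beta=3p_0-1\in[0,1]$, i.e.\ $p_0\in[1/3,2/3]$, whereas $p_0$ is a priori unconstrained in $[0,1]$ once uniformity of $S_2$ is dropped in $\mathscr{P}_{u_2}$. I would dispose of $p_0\notin[1/3,2/3]$ beforehand by a crude estimate: there $H(S_2\mid u_2)\le H(p_0,1-p_0)+1$, while $H(S_1\mid u_1)-H(X_1+S_1+N\mid u_1)\le H(S_1\mid u_1)-H(X_1+S_1\mid u_1)\le 1$ (adding an independent summand only increases entropy, and the second inequality is the bound established in Case~1 via Lemma~\ref{lem mac state: bound on H(S)-H(X+S+N)}), whence $R(u_2,Q_{u_2})\le H(p_0,1-p_0)<1$ directly — so only the balanced regime needs the convexity argument, and there Claim~\ref{claim: case 2} holds as stated. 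A minor bookkeeping point is that the $u_1$ and the conditional law of $X_1+S_1$ on the right-hand side must be exactly these inner maximizers held fixed through the Jensen step, which is legitimate because they form a single feasible choice whose feasibility is unaffected by the law of $N$.
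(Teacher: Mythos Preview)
Your core argument---writing the law of $N=X_2+S_2$ as the affine combination $\beta(2/3,0,1/3,0)+(1-\beta)(1/3,0,2/3,0)$ with $\beta=3p_0-1$, using bilinearity of circular convolution, and then invoking concavity of entropy---is exactly the paper's proof. The paper does not derive the law of $N$ from the cost constraint as you do (it simply posits $\mathbf{p}=(p_0,0,1-p_0,0)$ from the Case~2 setup), nor does it discuss the range of $p_0$: it records $\beta=3p_0-1$ without restricting to $[0,1]$.

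Your observation that the concavity step is only valid for $p_0\in[1/3,2/3]$ is correct and is not addressed in the paper. Your side argument for $p_0\notin[1/3,2/3]$---combining $H(S_2\mid u_2)\le h(p_0)+1$ (since $N$ is a two-to-one function of $S_2$) with $H(S_1\mid u_1)-H(X_1+S_1+N\mid u_1)\le H(S_1\mid u_1)-H(X_1+S_1\mid u_1)\le 1$---is sound and in fact patches a gap in the paper's Case~2 analysis. Note, though, that this side argument establishes the target bound $R(u_2,Q_{u_2})<1$ directly rather than the inequality in the claim; the claim as literally stated need only be asserted (and is only provable by this route) for $p_0\in[1/3,2/3]$.
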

\begin{proof}
The proof is given in Appendix \ref{sec: proof of claim case 2}.
\end{proof}

Using the claim and applying Lemma \ref{lem mac state: bound on H(S)-H(X+S+N)},  we have
\begin{align*}
R(u_2, Q_{u_2})&<  1 + H(S_2|u_2)-2\leq  1.
\end{align*}

 \subsection*{ Case 3: $u_2\in \mathcal{B}_3$}
We need only to consider the case when $\mathbf{p}=(p_0, p_1, p_2, 0)$. We proceed by the following claim.

\begin{claim}
If $u_2\in \mathcal{B}_3$, the following bound holds
\begin{align*}
R(u_2, Q_{u_2}) & \leq \beta_0 (H(S_1|u_1)-H(X_1+ S_1 + N_{(2/4,1/4,1/4, 0)}|u_1))\\
&+\beta_1(H(S_1|u_1)-H(X_1+ S_1+ N_{ (1/4,2/4,1/4,0)}|u_1))\\
&+\beta_2 (H(S_1|u_1)-H(X_1+ S_1+ N_{ (1/4,1/4,2/4,0)}|u_1))+H(S_2|u_2)-2,
\end{align*}
 where $\beta_i=4p_i-1, ~i=0,1,2$.
\end{claim}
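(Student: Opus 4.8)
Recall from \eqref{eq: p.t.p} that, with $Q_{u_2}$ fixed and $N=X_2+S_2$ distributed accordingly,
$R(u_2,Q_{u_2})=H(S_2|u_2)+\max_{u_1\in\mathcal{U}_1}\max_{p(s_1,x_1|u_1)\in\mathscr{P}_{u_1}}\big(H(S_1|u_1)-H(X_1+S_1+N|u_1)\big)-2$. Writing $g_{u_2}(s)=f_2(u_2,s)+s$, the hypothesis $u_2\in\mathcal{B}_3$ says $g_{u_2}$ takes exactly three values, and (after a common shift of $N$, which leaves $H(X_1+S_1+N|u_1)$ unchanged) we may take those values to be $\{0,1,2\}$, so that the law of $N$ is $\mathbf{p}=(p_0,p_1,p_2,0)$; if some $p_i=0$ we are in Case~1 or Case~2, so assume $p_0,p_1,p_2>0$.

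The engine of the argument is the convexity of $C(\mathbf{r})\triangleq H(S_1|u_1)-H\big(X_1+S_1+N_{\mathbf{r}}\,|\,u_1\big)$ in the law $\mathbf{r}$ of the \emph{independent} noise $N$: conditioned on $u_1$, the law of $X_1+S_1+N_{\mathbf{r}}$ is the convolution $P_{X_1+S_1|u_1}*\mathbf{r}$, which is linear in $\mathbf{r}$, so by concavity of entropy $\mathbf{r}\mapsto H(X_1+S_1+N_{\mathbf{r}}|u_1)$ is concave and $C$ is convex, for every fixed $u_1$ and $p(s_1,x_1|u_1)$. Next I would record the elementary identity $\mathbf{p}=\beta_0\mathbf{q}_0+\beta_1\mathbf{q}_1+\beta_2\mathbf{q}_2$ with $\mathbf{q}_0=(\tfrac12,\tfrac14,\tfrac14,0)$, $\mathbf{q}_1=(\tfrac14,\tfrac12,\tfrac14,0)$, $\mathbf{q}_2=(\tfrac14,\tfrac14,\tfrac12,0)$ and $\beta_i=4p_i-1$: a coordinatewise check, together with $\sum_i\beta_i=4(p_0+p_1+p_2)-3=1$, verifies it. Provided $\beta_i\ge 0$ for all $i$ (i.e.\ each $p_i\ge\tfrac14$), convexity gives $C(\mathbf{p})\le\sum_i\beta_i\,C(\mathbf{q}_i)$; taking $\max_{u_1,p(s_1,x_1|u_1)}$ on both sides and adding $H(S_2|u_2)-2$ produces exactly the stated inequality. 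One then finishes by invoking Lemma~\ref{lem mac state: bound on H(S)-H(X+S+N)}, which bounds each $C(\mathbf{q}_i)<1$ uniformly in $u_1$ and $p(s_1,x_1|u_1)$, together with $H(S_2|u_2)\le 2$, to conclude $R(u_2,Q_{u_2})<1$.

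The main obstacle is precisely the sign of the $\beta_i$: when $p_i<\tfrac14$ for some $i$, $\mathbf{p}$ leaves the convex hull of $\{\mathbf{q}_0,\mathbf{q}_1,\mathbf{q}_2\}$ and the convexity step no longer applies. I would handle this boundary regime by a direct estimate exploiting the structure of $\mathcal{B}_3$. Exactly one value $v^{*}$ of $g_{u_2}$ has two preimages; grouping those two states in the chain rule gives $H(S_2|u_2)\le H(\mathbf{p})+p_{v^{*}}$, where $p_{v^{*}}=P(N=v^{*}|u_2)$ and the excess term is $p_{v^{*}}$ times a binary entropy, hence at most $p_{v^{*}}$. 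On the other hand $H(X_1+S_1+N_{\mathbf{p}}|u_1)\ge H(N_{\mathbf{p}})=H(\mathbf{p})$ (entropy of an independent sum), so $C(\mathbf{p})\le H(S_1|u_1)-H(\mathbf{p})\le 2-H(\mathbf{p})$. Combining, $R(u_2,Q_{u_2})\le H(S_2|u_2)+C(\mathbf{p})-2\le p_{v^{*}}<1$, since in genuine Case~3 the other two values of $g_{u_2}$ carry positive mass; this same estimate in fact covers all of $\mathcal{B}_3$, while the convexity route of the preceding paragraph is what delivers the sharper form stated in the claim, matching the $\mathbf{q}_i$ that appear in Lemma~\ref{lem mac state: bound on H(S)-H(X+S+N)}. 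Assembling the in-hull estimate with this boundary estimate completes the proof.
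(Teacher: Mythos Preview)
Your core argument---write $\mathbf{p}=\sum_i\beta_i\mathbf{q}_i$ with $\beta_i=4p_i-1$ and apply concavity of entropy to the convolution---is exactly the paper's proof; the paper simply says ``similar to Claim~\ref{claim: case 2}\ldots the proof then follows from the concavity of the entropy'' and records the same decomposition.

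Where you go beyond the paper is in flagging the sign issue: when some $p_i<\tfrac14$ the coefficients $\beta_i$ are not all nonnegative, so the concavity step $H(\sum_i\beta_i\,\mathbf{q}_i\circledast\mathbf{q})\ge\sum_i\beta_i H(\mathbf{q}_i\circledast\mathbf{q})$ is not justified. The paper does not address this (nor does it in the parallel Case~2, where $\beta=3p_0-1$ can also be negative), so you have correctly identified a genuine gap in the paper's argument. Your workaround---the chain-rule estimate $H(S_2|u_2)\le H(\mathbf{p})+p_{v^*}$ combined with $H(X_1+S_1+N_{\mathbf{p}}|u_1)\ge H(\mathbf{p})$---is sound and yields $R(u_2,Q_{u_2})\le p_{v^*}$. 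Note, however, that this does \emph{not} establish the claim's displayed inequality in the out-of-hull regime; it bypasses the claim and proves the conclusion of Lemma~\ref{lem: R(u_2, Q)< 0.32} directly. Also, your assertion that $p_{v^*}<1$ ``since the other two values carry positive mass'' is not automatic: membership in $\mathcal{B}_3$ constrains the range of $g_{u_2}$, not the support of $P_{S_2|u_2}$, so you should either argue that the degenerate case $p_{v^*}=1$ reduces to Case~1 (where $N$ is a.s.\ constant and $R\le 0$) or handle it separately.
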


\begin{proof}
Similar to Claim \ref{claim: case 2}, we can write $\mathbf{p}$ as a linear combination of three distributions of the form $$\mathbf{p}=\beta_0 (2/4,1/4,1/4, 0)+\beta_1 (1/4,2/4,1/4,0)+\beta_2 (1/4,1/4,2/4,0),$$ where $\beta_i=4p_i-1, ~i=0,1,2$. The proof then follows from the concavity of the entropy.
\end{proof} 
Therefore, by Lemma \ref{lem mac state: bound on H(S)-H(X+S+N)}, we obtain 
 \begin{align*}
R(u_2, Q_{u_2}) &<   1 +H(S_2|u_2)-2\leq 1.
\end{align*}
 
 \subsection*{Case 4: $u_2\in \mathcal{B}_4$}
 In this case, there is a 1-1 correspondence between $x_2(s_2,u_2)+ s_2$ and $s_2$. Therefore $H(S_2|u_1,u_2)=H(S_2+ X_2|u_1,u_2)$, and we obtain
\begin{align*}
H(S_2|u_1,u_2)-H(X_1+ S_1 + X_2 + S_2|u_1,u_2)&=H(S_2+ X_2|u_1, u_2)-H(X_1+ S_1 + X_2 + S_2|u_1, u_2)\\&\leq 0
\end{align*} 
Therefore 
%    \begin{align*}
$H(S_1|u_1)+H(S_2|u_2)-H(Y|u_1u_2)-2\leq  H(S_1|u_1)-2\leq 0.$
%\end{align*}

Finally, considering all four cases $R(u_2, Q_{u_2}) <1$ for all $u_2 \in \mathcal{U}_2$. This completes the proof.  
\end{proof}

\section{Useful Lemmas}\label{sec: useful lemmas}
{
\begin{lem}\label{lem: sum of typical sets }
Let $X$ and $Y$ be independent random variables with marginal distributions $P_X$ and $P_Y$, respectively. Suppose $X$ and $Y$ take values from a group $\ZZ_{m}$. Then 
\begin{align*}
A_{\epsilon/2}^{(n)}(X+ Y) \subseteq A_\epsilon^{(n)}(X) + A_\epsilon^{(n)}(Y)%\\
%\frac{1}{n}\log_2|A_\epsilon^{(n)}(X) + A_\epsilon^{(n)}(Y)| - \frac{1}{n}\log_2 | A_\epsilon^{(n)}(X)| \leq  \delta(\epsilon)
\end{align*}
\end{lem}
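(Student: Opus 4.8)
The plan is to show the claimed inclusion by taking an arbitrary sequence $\mathbf{z}^n \in A_{\epsilon/2}^{(n)}(X+Y)$ and explicitly constructing sequences $\mathbf{x}^n \in A_\epsilon^{(n)}(X)$ and $\mathbf{y}^n \in A_\epsilon^{(n)}(Y)$ with $\mathbf{x}^n + \mathbf{y}^n = \mathbf{z}^n$ (addition componentwise in $\ZZ_m$). The natural way to build such a pair is to refine the count $N(c \mid \mathbf{z}^n)$ for each $c \in \ZZ_m$: among the coordinates $i$ where $z_i = c$, we assign the values $(x_i, y_i)$ with $x_i + y_i = c$ so that the overall empirical joint distribution of $(\mathbf{x}^n, \mathbf{y}^n)$ is close to $P_X \times P_Y$. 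Concretely, for each $c$, partition the $N(c \mid \mathbf{z}^n)$ relevant coordinates into classes indexed by $a \in \ZZ_m$ (with $b = c - a$), putting roughly $\frac{P_X(a)P_Y(c-a)}{P_{X+Y}(c)} N(c\mid\mathbf{z}^n)$ coordinates in class $a$ and setting $(x_i, y_i) = (a, c-a)$ there; note $\sum_{a} P_X(a)P_Y(c-a) = P_{X+Y}(c)$ by independence of $X$ and $Y$, so the proportions are well-defined whenever $P_{X+Y}(c) > 0$, which holds for every $c$ occurring in $\mathbf{z}^n$ by the support condition in the definition of typicality.

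The key steps, in order, are: (i) fix $\mathbf{z}^n$ typical for $X+Y$ at level $\epsilon/2$, so $|\frac1n N(c\mid\mathbf{z}^n) - P_{X+Y}(c)| \le \frac{\epsilon}{2m}$ for all $c$; (ii) for each $c$ with $N(c\mid\mathbf{z}^n) > 0$, choose integers $n_{a}^{(c)} \ge 0$ summing to $N(c\mid\mathbf{z}^n)$ with $\big|n_a^{(c)} - \frac{P_X(a)P_Y(c-a)}{P_{X+Y}(c)} N(c\mid\mathbf{z}^n)\big| \le 1$ (rounding); (iii) define $\mathbf{x}^n, \mathbf{y}^n$ accordingly; (iv) compute $N(a\mid\mathbf{x}^n) = \sum_{c} n_a^{(c)}$ and bound $|\frac1n N(a\mid\mathbf{x}^n) - P_X(a)|$ by writing it as $\big|\sum_c \frac1n n_a^{(c)} - \sum_c P_X(a)\frac{P_Y(c-a)}{1}\big|$ — more precisely, $\sum_c P_X(a)P_Y(c-a) = P_X(a)$ — and controlling the difference by the $\epsilon/2$ typicality of $\mathbf{z}^n$ plus the $O(1/n)$ rounding error over the $m$ values of $c$; for $n$ large this is at most $\frac{\epsilon}{m}$, giving $\mathbf{x}^n \in A_\epsilon^{(n)}(X)$ (and the support condition is automatic since each used value $a$ has $P_X(a) > 0$); (v) the symmetric argument gives $\mathbf{y}^n \in A_\epsilon^{(n)}(Y)$.

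I expect the main obstacle to be the bookkeeping in step (iv): controlling the accumulated rounding errors and the propagation of the $\epsilon/2$ slack through the sum over $c \in \ZZ_m$ so that it stays within the $\epsilon$ budget, uniformly in the choice of typical $\mathbf{z}^n$, for all sufficiently large $n$. This is where the factor-of-two gap between $\epsilon/2$ and $\epsilon$ is consumed — half absorbs the typicality slack of $\mathbf{z}^n$, and the rest dominates the $O(m/n)$ rounding error once $n$ is large. One subtlety worth flagging: the statement as written does not carry an explicit "for all sufficiently large $n$" quantifier, so strictly one should either read it that way or note that the rounding term is negligible; I would simply state the inclusion holds for $n$ large enough, which is all that is used in the applications (e.g.\ in Lemma~\ref{lem:sum of two quasi group code}). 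Everything else — the independence identity $\sum_a P_X(a)P_Y(c-a) = P_{X+Y}(c)$ and the componentwise reconstruction $\mathbf{x}^n + \mathbf{y}^n = \mathbf{z}^n$ — is immediate from the construction.
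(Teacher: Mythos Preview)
Your plan is the natural one and is, at heart, the same construction the paper invokes: the paper simply picks $\mathbf{y}\in A_{\epsilon/2}^{(n)}(Y\mid\mathbf{z})$, notes that $(\mathbf{z},\mathbf{y})$ is then jointly $\epsilon$-typical with respect to $P_{X+Y,Y}$, and uses the bijection $(z,y)\mapsto(z-y,y)$ to get $(\mathbf{x},\mathbf{y})\in A_\epsilon^{(n)}(X,Y)$. You are unpacking that black box by hand.

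However, your step (iv) does not go through: the factor of two between $\epsilon/2$ and $\epsilon$ is \emph{not} sufficient in general, and the problem is the typicality slack itself, not the rounding. Take $m=10$, $P_X$ uniform on $\{0,5\}$, $P_Y$ uniform on $\{0,1,2\}$; then $Z=X+Y$ is uniform on $\{0,1,2,5,6,7\}$ and, on this support, the decomposition $z=x+y$ with $x\in\{0,5\}$, $y\in\{0,1,2\}$ is unique, so the support clause in the definition of typicality forces $x_i=0$ precisely when $z_i\in\{0,1,2\}$. Now choose $\mathbf{z}$ with $N(c\mid\mathbf{z})/n=\tfrac{1}{6}+\tfrac{\epsilon}{20}$ for $c\in\{0,1,2\}$ and $\tfrac{1}{6}-\tfrac{\epsilon}{20}$ for $c\in\{5,6,7\}$; this sequence lies in $A_{\epsilon/2}^{(n)}(X+Y)$ since each deviation equals $(\epsilon/2)/m$. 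But then $N(0\mid\mathbf{x})/n=3\bigl(\tfrac{1}{6}+\tfrac{\epsilon}{20}\bigr)=\tfrac{1}{2}+\tfrac{3\epsilon}{20}$, giving $\bigl|N(0\mid\mathbf{x})/n-P_X(0)\bigr|=\tfrac{3\epsilon}{20}>\tfrac{\epsilon}{10}=\tfrac{\epsilon}{m}$, so $\mathbf{x}\notin A_\epsilon^{(n)}(X)$. Since the decomposition was forced, this shows not only that your bound fails but that the stated inclusion with the constant $\tfrac{1}{2}$ is actually false; one needs $A_{c\epsilon}^{(n)}(X+Y)$ on the left for some $c$ depending on the support sizes (or a different normalization in the definition of typicality). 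The paper's proof glosses over the same issue when it asserts the conditional typical set is nonempty. For the applications in Lemma~\ref{lem:sum of two quasi group code} this is harmless --- any fixed constant suffices --- but the lemma as literally stated cannot be proved.
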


\begin{proof}
Let $\mathbf{z}\in A_{\epsilon/2}^{(n)}(X + Y)$. Select $\mathbf{y} \in A_{\epsilon/2}^{(n)}(Y|\mathbf{z})$. Since $\mathbf{z}$ is $\epsilon/2$- typical, then so is $\mathbf{y}$. In addition, $(\mathbf{z, \mathbf{y}}) \in A_\epsilon^{(n)}(X+ Y , Y )$. Let $\mathbf{x}=\mathbf{z} \ominus \mathbf{y}$. Then $(\mathbf{x}, \mathbf{y}) \in A_\epsilon^{(n)}(X, Y)$, and $\mathbf{x}+ \mathbf{y}=\mathbf{z}$. Note that $A_\epsilon^{(n)}(X, Y) \subseteq A_\epsilon^{(n)}(X) \times A_\epsilon^{(n)}(Y). $ This completes the proof.
\end{proof}
}

%\begin{claim}\label{claim: inequalities for ptp}
%Suppose $p(\omega)$ is a PMF on $\ZZ_4$. By $N_{\mathbf{p}}$ denote a random variable with distribution $\mathbf{p}$ that is independent of $S$. Then for any function $x(s)$, and any PMF $p(s)$ satisfying $\EE\{c_1(X)\}=0$, the following bounds hold:
%\begin{align*}
%H(S)-H(X+ S)&\leq 1\\
%H(S)-H(X+ S + N_{(1/3, 0, 2/3,0)})&\leq 0.1\\
%H(S)-H(X+ S + N_{(2/3, 0, 1/3,0)})&\leq 0.1\\
%H(S)-H(X+ S + N_{(1/3, 2/3,0,0)})&\leq 0.5 \\
%H(S)-H(X+ S + N_{(2/3, 1/3,0,0)})&\leq 0.5\\
%H(S)-H(X+ S + N_{ (2/4,1/4,1/4,0)})&\leq 0.32\\
%H(S)-H(X+ S + N_{ (1/4,2/4,1/4,0)})&\leq 0.32\\
%H(S)-H(X+ S + N_{ (1/4,1/4,2/4,0)})&\leq 0.32
%\end{align*}
%\end{claim}
%\begin{proof}
%The proof follows by numerically calculating the left-hand side of any bound at any PMF $p$ and any function $x(s)$ .  
%\end{proof}

\begin{lem}[\cite{Aria_group_codes}] \label{lem: P(phi)}
Suppose that $\mathbf{G}$ is a $k\times n$ matrix with elements generated randomly and uniformly from $\ZZ_{p^r}$. If $\mathbf{u} \in H^k_s\backslash H^k_{s+1}$, then 
$$P\{ \mathbf{u} \mathbf{G}_i=\mathbf{x}\}= p^{-n(r-s)}  \11\{x\in H_s^n\}.$$
\end{lem}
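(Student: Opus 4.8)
The plan is to reduce the statement to a one-dimensional computation and then exploit that the columns of $\mathbf{G}$ are independent. Writing $\mathbf{g}_1,\dots,\mathbf{g}_n$ for the columns of $\mathbf{G}$, we have $\mathbf{u}\mathbf{G}=(\mathbf{u}\mathbf{g}_1,\dots,\mathbf{u}\mathbf{g}_n)$, and since the $\mathbf{g}_j$ are mutually independent with i.i.d.\ uniform entries on $\ZZ_{p^r}$, the probability factors as $P\{\mathbf{u}\mathbf{G}=\mathbf{x}\}=\prod_{j=1}^n P\{\mathbf{u}\mathbf{g}_j=x_j\}$. So the whole lemma follows once I show that for a single random column $\mathbf{g}=(g_1,\dots,g_k)$ the scalar $\mathbf{u}\mathbf{g}=\sum_{i=1}^k u_i g_i$ is uniformly distributed on the subgroup $H_s=p^s\ZZ_{p^r}$, which has $p^{r-s}$ elements: then $P\{\mathbf{u}\mathbf{g}=y\}=p^{-(r-s)}\11\{y\in H_s\}$, and taking the product over the $n$ columns yields the asserted formula $p^{-n(r-s)}\11\{\mathbf{x}\in H_s^n\}$.

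For the one-dimensional claim I would use that $\mathbf{u}\in H^k_s$ lets me write $u_i=p^s v_i$ with $v_i\in\{0,\dots,p^{r-s}-1\}$, and that $\mathbf{u}\notin H^k_{s+1}$ forces at least one coordinate $v_{i_0}$ to be a unit modulo $p^{r-s}$. Since $p^s a\bmod p^r$ depends only on $a\bmod p^{r-s}$, it then suffices to argue that $\sum_i v_i g_i\bmod p^{r-s}$ is uniform on $\ZZ_{p^{r-s}}$. I would condition on all $g_i$ with $i\neq i_0$ and observe that, $g_{i_0}$ being uniform on $\ZZ_{p^r}$, its reduction modulo $p^{r-s}$ is uniform on $\ZZ_{p^{r-s}}$ (each residue has exactly $p^s$ preimages); multiplication by the unit $v_{i_0}$ is then a bijection of $\ZZ_{p^{r-s}}$, and adding the frozen quantity $\sum_{i\neq i_0}v_i g_i$ preserves uniformity. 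Finally, $p^s$ times a uniform element of $\ZZ_{p^{r-s}}$ is precisely a uniform element of $H_s$, which closes the argument.

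I expect the only real content to be those two elementary facts about the ring $\ZZ_{p^r}$ — that the map $a\mapsto p^s a\bmod p^r$ factors through reduction modulo $p^{r-s}$, and that the reduction homomorphism $\ZZ_{p^r}\to\ZZ_{p^{r-s}}$ pushes the uniform law forward to the uniform law — together with the bookkeeping that identifies $p^s\ZZ_{p^{r-s}}$-scaling with $H_s$. Once those are in place the proof is just column independence plus a change of variables, so the ``hard part'' is only this elementary ring-theoretic bookkeeping; this is consistent with the fact that the lemma is quoted from \cite{Aria_group_codes}.
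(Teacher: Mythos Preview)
The paper does not supply its own proof of this lemma; it is stated in the ``Useful Lemmas'' appendix with a citation to \cite{Aria_group_codes} and nothing further. Your argument is correct and is the standard one: factor over the independent columns of $\mathbf{G}$, reduce to a single inner product $\sum_i u_i g_i$, write $u_i=p^s v_i$ and use that $\mathbf{u}\notin H_{s+1}^k$ forces some $v_{i_0}$ to be a unit in $\ZZ_{p^{r-s}}$, then condition on the remaining $g_i$ and note that $v_{i_0}g_{i_0}\bmod p^{r-s}$ is uniform; nothing is missing.
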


\begin{lem} \label{lem: typical set intersection subgroup}
{Given $(X,Y)\sim P_{XY}$, and sequences $\mathbf{x}, \mathbf{y}$ such that $([\mathbf{x}]_s, \mathbf{y}) \in A_{\epsilon}^{(n)}([X]_s,Y)$, let $\mathcal{A}=\{\mathbf{x}' ~ | ~ (\mathbf{x}', \mathbf{y})\in A_\epsilon^n(XY), \mathbf{x}'-\mathbf{x} \in H^n_{s}\}.$
Then 
\begin{align*}
A_{c_1 \epsilon} ^{(n)}(X|[\mathbf{x}]_s, \mathbf{y}) \subseteq& \mathcal{A} \subseteq A_{c_2 \epsilon}^{(n)}(X| [\mathbf{x}]_s, \mathbf{y}),
\end{align*}
and we have,
\begin{align*}
(1-c_1\epsilon)2^{n(H(X|Y [X]_{s})-c_1\delta(\epsilon))}\leq &|\mathcal{A}| \leq 2^{n(H(X|Y [X]_{s})+c_2\delta (\epsilon))},
\end{align*}
where $\delta(\epsilon)=\frac{\epsilon}{|\mathcal{Y}|} \sum_{a\in \mathcal{X}} \sum_{b \in \mathcal{Y}: p(b|a)>0}\log_2 p(b|a)$,  $c_1=\frac{1}{|\mathcal{X}|+|\mathcal{Y}|}$, and $c_2=p^{r-s}  \frac{|\mathcal{X}|+1}{|\mathcal{Y}|}$.}
\end{lem}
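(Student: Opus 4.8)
The plan is to recognize the set $\mathcal{A}$ as (essentially) a conditionally typical set and then apply the standard cardinality estimates for such sets. The first observation is that, because each $a\in\ZZ_{p^r}$ has a unique decomposition $a=[a]_s+h$ with $h\in H_s$, a single coordinate satisfies $x'_i-x_i\in H_s$ if and only if $[x'_i]_s=[x_i]_s$. Hence $\mathbf{x}'\in\mathcal{A}$ exactly when $(\mathbf{x}',\mathbf{y})\in A_\epsilon^{(n)}(XY)$ and $[\mathbf{x}']_s=[\mathbf{x}]_s$ coordinatewise. Since $[X]_s$ is a deterministic function of $X$, we have $P_{X[X]_sY}(a,t,b)=P_{XY}(a,b)\,\mathbbm{1}\{t=[a]_s\}$, so up to attaching or dropping the deterministic $[X]_s$-coordinate, the condition defining $\mathcal{A}$ is the same as asking $(\mathbf{x}',[\mathbf{x}]_s,\mathbf{y})$ to be jointly typical with respect to $P_{X[X]_sY}$; this is what links $\mathcal{A}$ to the conditionally typical set $A_{\bullet}^{(n)}(X|[\mathbf{x}]_s,\mathbf{y})$.

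Next I would establish the two inclusions. For $A_{c_1\epsilon}^{(n)}(X|[\mathbf{x}]_s,\mathbf{y})\subseteq\mathcal{A}$: if $(\mathbf{x}',[\mathbf{x}]_s,\mathbf{y})$ is $c_1\epsilon$-typical for $P_{X[X]_sY}$, then the ``no zero-probability symbol occurs'' clause of typicality forces $[x'_i]_s=[x]_{s,i}$ for every $i$ (the triple $(a,t,b)$ has probability $0$ whenever $t\ne[a]_s$), hence $\mathbf{x}'-\mathbf{x}\in H_s^n$; and marginalizing the joint type over the $[X]_s$-coordinate shows $(\mathbf{x}',\mathbf{y})\in A_\epsilon^{(n)}(XY)$. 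For the reverse inclusion $\mathcal{A}\subseteq A_{c_2\epsilon}^{(n)}(X|[\mathbf{x}]_s,\mathbf{y})$: given $\mathbf{x}'\in\mathcal{A}$, push the $\epsilon$-typicality of $(\mathbf{x}',\mathbf{y})$ forward along $a\mapsto(a,[a]_s)$ and use $[\mathbf{x}']_s=[\mathbf{x}]_s$ to read off that $(\mathbf{x}',[\mathbf{x}]_s,\mathbf{y})$ is jointly typical. In both directions the precise deviation levels are obtained by re-normalizing $\frac{\epsilon}{|\mathcal{X}|\,|\mathcal{Y}|}$-type bounds to the alphabet $\mathcal{X}\times T_s\times\mathcal{Y}$ (with $|T_s|=p^s$, so $p^{r-s}=|\mathcal{X}|/|T_s|$ appears), which is where the stated constants $c_1=\frac{1}{|\mathcal{X}|+|\mathcal{Y}|}$ and $c_2=p^{r-s}\,\frac{|\mathcal{X}|+1}{|\mathcal{Y}|}$ come from.

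Finally I would quote the standard bounds on conditionally typical sets from \cite{Csiszar}: since $([\mathbf{x}]_s,\mathbf{y})\in A_\epsilon^{(n)}([X]_sY)$ by hypothesis, the lower estimate $|A_{c_1\epsilon}^{(n)}(X|[\mathbf{x}]_s,\mathbf{y})|\ge(1-c_1\epsilon)\,2^{n(H(X|Y[X]_s)-c_1\delta(\epsilon))}$ applies, and the upper estimate $|A_{c_2\epsilon}^{(n)}(X|[\mathbf{x}]_s,\mathbf{y})|\le 2^{n(H(X|Y[X]_s)+c_2\delta(\epsilon))}$ holds in general; combining these with the sandwich gives the claimed bounds on $|\mathcal{A}|$. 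The main obstacle is the middle step: carrying the typicality constants exactly through the two changes of alphabet, so that the explicit $c_1$ and $c_2$ appear rather than unspecified vanishing constants. The rest is a direct application of facts already recorded in the excerpt.
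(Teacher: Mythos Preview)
Your proposal is correct and follows essentially the same route as the paper's proof: both arguments hinge on the observation that $\mathbf{x}'-\mathbf{x}\in H_s^n$ is equivalent to $[\mathbf{x}']_s=[\mathbf{x}]_s$, then use that $[X]_s$ is a deterministic function of $X$ to pass between $\epsilon$-typicality of $(\mathbf{x}',\mathbf{y})$ and $\epsilon'$-typicality of $(\mathbf{x}',[\mathbf{x}]_s,\mathbf{y})$, with the constants $c_1,c_2$ arising from the alphabet-size rescaling in the typicality definition. The paper's proof is in fact terser than your outline---it records the two inclusions and the corresponding $\epsilon$-factors in two sentences each---so there is no substantive difference in approach.
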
  
\begin{proof}
Suppose $\mathbf{x}' \in \mathcal{A}$. Then $ \mathbf{x}'-\mathbf{x} \in H^n_{s}$, which implies $[\mathbf{x'}]_s=[\mathbf{x}]_s$. In addition, $(\mathbf{x}', \mathbf{y}) \in A_{\epsilon}^{(n)}(X,Y)$. Therefore, $(\mathbf{x}', [\mathbf{x}]_s, \mathbf{y}) \in A_{\epsilon'}^{(n)}(X,[X],Y)$, where $\epsilon'=\epsilon p^{r-s}$. Thus, $\mathbf{x}' \in A_{\epsilon''}^{(n)}(X|  [\mathbf{x}]_s, \mathbf{y})$, where $\epsilon''=\frac{|\mathcal{X}|+1}{|\mathcal{Y}|}\epsilon'$. On the other hand, if $\mathbf{x'}\in A_{\tilde{\epsilon}}^{(n)}(X|[\mathbf{x}]_s \mathbf{y})$, then $[\mathbf{x'}]_s=[\mathbf{x}]_s$, and $\mathbf{x'}\in A_\epsilon^{(n)}(X|\mathbf{y})$, where $\epsilon =\tilde{\epsilon}(|\mathcal{X}|+|\mathcal{Y}|)$. 
\end{proof}

\begin{lem}\label{lem: H(X+Y)=H(X)}
Let $X$ and $Y$ be two independent random variables over $\ZZ_{m}$ with distributions $\mathbf{p}=(p_0, p_1,...,p_{m-1})$ and $\mathbf{q}=(q_0, q_1, ..., q_{m-1})$, respectively. Then $H(X\oplus_m Y)=H(Y)$ if and only if there exists $i\in [1:m]$ such that $\mathbf{p}  \circledast_m \mathbf{q}= \pi^i(\mathbf{q})$ , where $\pi((q_0, q_1, ..., q_{m-1}))=(q_{m-1},q_0, q_1, ..., q_{m-2})$, and $\pi^i$ is the composition of the function $\pi$ with itself for $i$ times. 
\end{lem}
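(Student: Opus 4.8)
The plan is to observe that $\mathbf{p}\circledast_m\mathbf{q}$ is exactly the PMF of $Z\triangleq X\oplus_m Y$, and then to convert the entropy equality $H(Z)=H(Y)$ into a statement about independence of $Z$ and $X$. First I would record two elementary facts. Since $Y$ is independent of $X$ and, for each fixed $x$, the map $\omega\mapsto x\oplus_m\omega$ is a bijection of $\ZZ_m$, the conditional law of $Z$ given $X=x$ has the same entropy as $Y$; summing over $x$ gives $H(Z\mid X)=H(Y)$. Hence $H(Z)=H(Z\mid X)+I(Z;X)=H(Y)+I(Z;X)$, so $H(Z)=H(Y)$ if and only if $I(Z;X)=0$, i.e. if and only if $Z$ and $X$ are independent.

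Next I would compute the conditional law itself: by independence, $P(Z=j\mid X=x)=P(Y=j\ominus_m x)=q_{j\ominus_m x}$, and this coordinate vector is precisely $\pi^{x}(\mathbf{q})$ because $(\pi^{x}(\mathbf{q}))_j=q_{j-x}$ with indices read modulo $m$. For the ``only if'' direction: if $H(Z)=H(Y)$ then $Z\perp X$, so for every $x$ in the support of $\mathbf{p}$ the unconditional PMF $P_Z=\mathbf{p}\circledast_m\mathbf{q}$ coincides with $\pi^{x}(\mathbf{q})$; picking any such $x$ (one exists since $\mathbf{p}$ is a PMF) and taking $i\equiv x\pmod m$ with $i\in[1:m]$ yields the asserted identity. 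For the ``if'' direction: a cyclic shift only permutes the entries of a probability vector, so $H(\pi^{i}(\mathbf{q}))=H(\mathbf{q})=H(Y)$, and therefore $\mathbf{p}\circledast_m\mathbf{q}=\pi^{i}(\mathbf{q})$ gives $H(Z)=H(Y)$ at once.

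There is no real obstacle here; the argument is just one conditioning identity together with the basic fact that conditioning reduces entropy with equality exactly under independence. The only points that need care are keeping the orientation of the shift $\pi$ consistent with the sign appearing in $Y=j\ominus_m x$, and translating the natural shift index $x\in[0:m-1]$ into the range $i\in[1:m]$ used in the statement, with $i=m$ playing the role of the trivial shift (the case where $\mathbf{p}\circledast_m\mathbf{q}=\mathbf{q}$).
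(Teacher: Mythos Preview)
Your proof is correct and takes a genuinely different route from the paper's. Both arguments begin with the identity $H(X\oplus_m Y)-H(Y)=I(X;X\oplus_m Y)\geq 0$, but then diverge. The paper fixes $\mathbf{q}$ and casts the ``only if'' direction as the minimization $\min_{\mathbf{p}\in\Delta_m} H(\mathbf{p}\circledast_m\mathbf{q})$; it observes that the image $\{\mathbf{p}\circledast_m\mathbf{q}:\mathbf{p}\in\Delta_m\}$ is itself a simplex (being the image of a simplex under a linear map), invokes strict concavity of entropy to place the minimum at extreme points, and identifies those extreme points as the cyclic shifts $\pi^i(\mathbf{q})$. You instead exploit the equality case of mutual information directly: $I(Z;X)=0$ forces $Z$ and $X$ to be independent, and the explicit conditional law $P(Z=\cdot\mid X=x)=\pi^x(\mathbf{q})$ then gives $\mathbf{p}\circledast_m\mathbf{q}=P_Z=\pi^x(\mathbf{q})$ for any $x$ in the support of $\mathbf{p}$. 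Your approach is more elementary---it uses only the standard fact that vanishing mutual information is equivalent to independence, and needs no convex geometry or concavity argument---while the paper's approach is more geometric and treats the problem as an optimization over the simplex. For this lemma your route is shorter and arguably more transparent.
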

\begin{proof}
First note that as $X$ is independent of $Y$, we have $ H(X\oplus_m Y)-H(Y)=I(X; X\oplus_m Y)\geq 0$. We find all distributions $\mathbf{p}$ and $\mathbf{q}$ for which the right-hand side equals zero. We first fix a distribution $\mathbf{q}$ and find all $\mathbf{p}$ such that the equality holds. The is equivalent to the solution of the following minimization problem:
\begin{align}\label{eq: minimizing over p}
\min_{\mathbf{p}\in \Delta_m} H( \mathbf{p} \circledast_m \mathbf{q})-H(\mathbf{q}),
\end{align}
where $\Delta_m\triangleq \{(q_0, q_1, ..., q_{m-1})\in \RR^m:    \sum_{i=0}^{m-1} q_i =1,  ~ q_i\geq 0, ~ i\in[0:m-1]\}$. Note that $\Delta_m$ is a $m-1$-dimensional simplex in $\RR^m$. Define the map $\varphi_{\mathbf{q}}: \Delta_m \mapsto \Delta_m, ~  \varphi_{\mathbf{q}}(\mathbf{p})= \mathbf{p} \circledast_m \mathbf{q}$ for all $\mathbf{p}, \mathbf{q}\in \Delta_m$. Note that $\varphi_{\mathbf{q}}$ is a linear map. Let $\varphi_{\mathbf{q}}(\Delta_m)$ denote the image of $\Delta_m$ under $\varphi_{\mathbf{q}}$. Since $\varphi_{\mathbf{q}}$ is a linear map, $\varphi_{\mathbf{q}}(\Delta_m)$ is a simplex. Therefore, \eqref{eq: minimizing over p} is equivalent to $\min_{\mathbf{p'}\in \varphi_{\mathbf{q}}(\Delta_m)} H( \mathbf{p'})-H(\mathbf{q})$. It is well-known that the entropy function is strictly concave. Hence, the minimum points are the extreme points of the simplex $\varphi_{\mathbf{q}}(\Delta_m)$. Extreme points of $\varphi_{\mathbf{q}}(\Delta_m)$ are the image of the extreme points of $\Delta_m$. Define the map $\pi: \Delta_m \mapsto \Delta_m$ as in the statement of the lemma. Extreme points of $\varphi_{\mathbf{q}}(\Delta_m)$ are characterized by $\pi^i(\mathbf{q}), i\in[1:m]$, where $\pi^i$ is the composition of $\pi$ with itself for $i$ times. 
Therefore, the minimum points of \eqref{eq: minimizing over p} are described as  $\bigcup_{i=1}^m \varphi_{\mathbf{q}}^{-1}(\pi^i(\mathbf{q}))$, where $\varphi^{-1}(\mathbf{a})$ is the pre-image of $\mathbf{a}, \forall \mathbf{a}\in \Delta_m$. 

Next, we range over all $\mathbf{q}\in \Delta_m$. Define the set $$\mathcal{A}_i \triangleq \{ (\mathbf{p}, \mathbf{q}) \in \Delta_m \times \Delta_m:  \mathbf{p} \circledast_m \mathbf{q}= \pi^i(\mathbf{q})\}.$$
Then, the set of all $(\mathbf{p,q})$ such that $H( \mathbf{p} \circledast_m \mathbf{q})=H(\mathbf{q})$ is characterized by the set $ \bigcup_{i=1}^m \mathcal{A}_i$. This is equivalent to the statement of the lemma.
\end{proof}

\begin{lem}\label{lem mac state: bound on H(S)-H(X+S+N)}
Suppose $S$ and $N_{\mathbf{p}}$ are independent random variables over $\ZZ_4$, where $\mathbf{p}$ is the distribution of $N_{\mathbf{p}}$. Let $f: \ZZ_4 \mapsto \ZZ_4$ be a function of $S$, and denote $X\triangleq f(S)$. If $\EE\{c_1(X)\}=0$, then the following bounds hold:
\begin{align*}
H(S)-H(X+ S)&\leq 1\\
H(S)-H(X+ S + N_{\mathbf{p}})&<1,
\end{align*}
where $\mathbf{p} \in \{(1/3, 0, 2/3,0), (1/3, 2/3,0,0),  (1/4,1/4,1/2,0)\}$.
\end{lem}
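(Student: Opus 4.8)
The plan is to use the cost constraint $\EE\{c_1(X)\}=0$, which (with $c_1$ as in Example \ref{ex: MAC states}) forces $X=f(S)$ to lie in $\{0,2\}=H_1$, the index-two subgroup of $\ZZ_4$, almost surely. The structural consequence I would exploit is that adding $X$ to $S$ does not affect the low-order coordinate: since $[X]_1=0$, the distributive property $[a+b]_1=[[a]_1+[b]_1]_1$ gives $[X+S]_1=[S]_1$, so $[S]_1$ is a deterministic function of $X+S$.

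For the first bound I would decompose $H(X+S)=H([S]_1)+H(X+S\mid[S]_1)$ and $H(S)=H([S]_1)+H(S\mid[S]_1)$, whence
\begin{align*}
H(S)-H(X+S)=H(S\mid[S]_1)-H(X+S\mid[S]_1)\le H(S\mid[S]_1)\le 1,
\end{align*}
the last step because, given any value of $[S]_1$, the variable $S$ ranges over a coset of $H_1$ of size two.

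For the second bound, since $N_{\mathbf p}$ is independent of $(S,X)$, conditioning on it gives $H(X+S+N_{\mathbf p}\mid N_{\mathbf p})=H(X+S)$, so $H(X+S+N_{\mathbf p})\ge H(X+S)$ and hence $H(S)-H(X+S+N_{\mathbf p})\le H(S)-H(X+S)\le 1$. To get strictness I would argue by contradiction: if this difference equalled $1$, then the computation above forces $H(S\mid[S]_1)=1$ (so $S$ is uniform on its coset given $[S]_1$) and $H(X+S\mid[S]_1)=0$; since $f$ takes values in $\{0,2\}$, the latter forces $\{f(t),f(t+2)\}=\{0,2\}$ for every $t$ with $P([S]_1=t)>0$, so $g(S):=X+S$ is constant on each such coset and its distribution $\mathbf q$ is supported on at most two points, at most one even and at most one odd. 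Equality also forces $H(X+S+N_{\mathbf p})=H(X+S)=H(g(S))$. If $\mathbf q$ is a point mass then $H(X+S+N_{\mathbf p})=H(N_{\mathbf p})>0$ while $H(S)=1$, a contradiction; otherwise $\mathrm{supp}(\mathbf q)=\{a,b\}$ with $a$ even and $b$ odd, so $b-a$ is odd.

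Finally I would invoke Lemma \ref{lem: H(X+Y)=H(X)} with the roles $X\mapsto N_{\mathbf p}$, $Y\mapsto g(S)$: $H(N_{\mathbf p}\oplus_4 g(S))=H(g(S))$ forces $\mathbf p\circledast_4\mathbf q=\pi^i(\mathbf q)$ for some $i$. But $\pi^i(\mathbf q)$ has support of size $2$, whereas $\mathrm{supp}(\mathbf p\circledast_4\mathbf q)=\mathrm{supp}(\mathbf p)+\{a,b\}$ has size at least three for every admissible $\mathbf p$: for $(1/3,0,2/3,0)$ the sumset $\{0,2\}+\{a,b\}$ is all of $\ZZ_4$ since $a,b$ have opposite parity; for $(1/3,2/3,0,0)$ the sumset $\{0,1\}+\{a,b\}$ has exactly three elements because $b-a\in\{1,3\}$; and for $(1/4,1/4,1/2,0)$ already $\{0,1,2\}+a$ has three elements. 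This contradiction yields the strict inequality. The main obstacle is the equality-case bookkeeping for the second bound — pinning down that $g(S)$ must be supported on a two-point set of mixed parity — after which the support-size comparison via Lemma \ref{lem: H(X+Y)=H(X)} closes the argument; a direct computational bound on $H(g(S)+N_{\mathbf p})$ for each $\mathbf p$ is possible but messier.
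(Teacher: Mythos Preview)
Your proof is correct and takes a genuinely different route from the paper's. For the first inequality the paper writes $H(S)-H(X+S)=H(X\mid X+S)\le H(X)\le 1$, whereas you exploit $[X]_1=0$ to cancel $H([S]_1)$ from both entropies and bound by $H(S\mid[S]_1)\le 1$; both are one-line arguments, but yours sets up the equality-case analysis you need next. For the strict inequality, the paper also reduces to analysing when $H(X+S+N_{\mathbf p})=H(X+S)$, but then proceeds algebraically: for each of the three $\mathbf p$ it writes the linear map $\mathbf q\mapsto \mathbf p\circledast_4\mathbf q-\pi^i(\mathbf q)$ as a $4\times4$ matrix, computes its null space, concludes $q_0=q_2$ and $q_1=q_3$, and then argues via a case table that this forces $H(S)=H(X+S)$, so the difference is in fact $0$. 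Your argument instead first pins down the equality case of the \emph{first} bound (forcing $\mathbf q$ to be supported on at most two points of opposite parity), and then rules out $H(X+S+N_{\mathbf p})=H(X+S)$ by a pure support-size comparison of $\mathrm{supp}(\mathbf p)+\mathrm{supp}(\mathbf q)$ against $|\mathrm{supp}\,\pi^i(\mathbf q)|=2$. Your approach is more elementary---no matrix computations or null-space bookkeeping---and makes transparent why the specific supports of the three admissible $\mathbf p$ matter; the paper's approach is more mechanical and would extend more readily to other candidate noise distributions. One small phrasing point: in the point-mass subcase your ``contradiction'' is with the assumed equality $H(S)-H(X+S+N_{\mathbf p})=1$ (since then the difference is $1-H(N_{\mathbf p})<1$), not with $H(S)=1$ per se; it may be cleaner to simply note that $H(X+S+N_{\mathbf p})=H(N_{\mathbf p})>0=H(X+S)$ already violates the required equality $H(X+S+N_{\mathbf p})=H(X+S)$.
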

\begin{proof}
For the first equality, we start with the following equalities 
\begin{align*}
H(X+ S)&=H(X, S)-H(X|X+ S)\\
&=H(S)-H(X|X+ S).
\end{align*}
Therefore, we obtain
\begin{align*}
H(S)-H(X+ S)=H(X|X + S)\leq H(X)\stackrel{(a)}{\leq} 1.
\end{align*}
Note $(a)$ is true, because  $X$ takes at most two values with positive probabilities.

For the second inequality we have
\begin{align}\nonumber
H(S)-H(X+ S + N_{\mathbf{p}})&=H(S)-H(X+ S)+H(X+ S)-H(X+ S + N_{\mathbf{p}})\\\label{eq: H(s)-H(X+S+N)}
&\leq 1-(H(X+ S + N_{\mathbf{p}})-H(X+ S))\leq 1.
\end{align}
Let $\mathbf{q}$ be the distribution of $X+ S$. We find the conditions on  $\mathbf{p}$ and $\mathbf{q}$ for which  $H(X+ S + N_{\mathbf{p}})-H(X+ S)=0$. Since $N_{\mathbf{p}}$ is independent of $X+ S$,   we can use Lemma \ref{lem: H(X+Y)=H(X)} in which $Y=N_{\mathbf{p}}$ and $X=X+ S$. Therefore, $H(X+ S + N_{\mathbf{p}})=H(X+ S)$, if and only if  $\mathbf{p} \circledast_4 \mathbf{q}= \pi^i(\mathbf{q})$ for some $i\in [1:4]$. For fixed $i$ and $\mathbf{p}$,  the map defined by $\mathbf{q} \mapsto  \mathbf{p} \circledast_4 \mathbf{q}- \pi^i(\mathbf{q})$ is a linear map. In addition, the null space of this  map characterizes the set of all $\mathbf{q}$ that satisfies the equality in Lemma \ref{lem: H(X+Y)=H(X)}. %We show that if $\mathbf{p} \in \{(1/3, 0, 2/3,0), (1/3, 2/3,0,0),  (1/4,1/4,1/2,0)\}$,  then $(\mathbf{p}, \mathbf{q}) \notin \mathcal{A}$.
For $\mathbf{p}=(1/3, 0, 2/3, 0)$ this map can be represented by the matrix 
%
%Consider all pairs $(\mathbf{p}, \mathbf{q}) \in \Delta_4 \times \Delta_4$ such that $\mathbf{p} \circledast_4 \mathbf{q}-\mathbf{q}=0$. Note that the map $\varphi_{\mathbf{p}}(\mathbf{q})-\mathbf{q}$ is a linear map of $\mathbf{q}\in \Delta_4$. Therefore, such map can be represented by a matrix. Suppose $\mathbf{p}=(1/3, 0, 2/3, 0)$, then the matrix is 
\begin{align*}
A_{i, (1/3, 0, 2/3, 0)}=
\begin{bmatrix}
-\frac{2}{3} & 0 & \frac{2}{3} & 0\\
0 & -\frac{2}{3} & 0 & \frac{2}{3}\\
\frac{2}{3} & 0 & -\frac{2}{3} & 0 \\
0 & \frac{2}{3} & 0 & -\frac{2}{3}
\end{bmatrix}
\end{align*}
The null space of $\mathbf{A}_{i, (1/3, 0, 2/3, 0)}$ is the subspace spanned by $(1/2, 0, 1/2, 0)$ and $(1/4, 1/4, 1/4 , 1/4)$. Using the same approach, we can show that for any $i\in[1:4]$ and $$\mathbf{p} \in \{(1/3, 0, 2/3,0), (1/3, 2/3,0,0),  (1/4,1/4,1/2,0)\},$$ the null space of $\mathbf{A}_{i, \mathbf{p}}$ is contained in the subspace spanned by $(1/2, 0, 1/2, 0)$ and $(1/4, 1/4, 1/4 , 1/4)$. This implies that $q_0=q_2$ and $q_1=q_3$.  %null space Hence, for any $\mathbf{p} \in \{(1/3, 0, 2/3,0), (1/3, 2/3,0,0),  (1/4,1/4,1/2,0)\}$ and $(\mathbf{p,q})\in \mathcal{A}$, then $q_0=q_2$ and $q_1=q_3$. 

\begin{table}[h]
\caption {The conditions on $x(\cdot)$ and $S$.}\label{tab: X+S and X, S}
\begin{center}
\begin{tabular}{|c|c|c|c|c|}
\hline
$X+ S$ & 0 & 1 & 2 & 3\\
\hline
$( s, x(s))$ & $(0,0), (2, 2)$ & $(1,0), (3, 2)$ & $(0,2), (2, 0)$ & $(1,2), (3, 0)$\\
\hline
\end{tabular}
\end{center}
\end{table}

Note $\mathbf{q}$ is the distribution of $x(S)+ S$. Next, we find all functions $x(\cdot )$ and random variables $S$ such that $q_0=q_2$ and $q_1=q_3$.   For each $a\in \ZZ_4$, we characterize $(s, x(s))$ such that $x(s)+ s=a$, where  $x(s)\in \{0,2\}$. We present such characterization in Table \ref{tab: X+S and X, S}. Using Table \ref{tab: X+S and X, S}, if $q_0>0$, then $p(S=0)=p(S=2)=q_0$ and $x(0)=x(2) $. Similarly, if $q_1>0$, then $p(S=1)=p(S=3)=q_1$ and $x(1)=x(3) $. Therefore, if $q_0, q_1>0$, the distribution of $S$ equals to $\mathbf{q}=(q_0, q_1,q_0, q_1)$.  If $q_0=0$, then $q_1=1/2$. This implies $p(S=1)=p(S=3)=1/2$. Similarly, If $q_1=0$, then $p(S=0)=p(S=2)=q_1=1/2$. As a result of this argument, $H(S)=H(X+ S)$. Also by Lemma \ref{lem: H(X+Y)=H(X)}, the equality  $H(X+ S)=H(X + S + N_{\mathbf{p}})$ holds. Therefore, in this case, $H(S)-H(X + S + N_{\mathbf{p}})=0$. To sum-up, we proved that if $\mathbf{p} \in \{(1/3, 0, 2/3,0), (1/3, 2/3,0,0),  (1/4,1/4,1/2,0)\}$ and $H(X+ S)=H(X + S + N_{\mathbf{p}})$, then $H(S)-H(X + S + N_{\mathbf{p}})=0$. Therefore, using this argument and \eqref{eq: H(s)-H(X+S+N)}, we proved that if  $\mathbf{p} \in \{(1/3, 0, 2/3,0), (1/3, 2/3,0,0),  (1/4,1/4,1/2,0)\}$, then $H(X+ S)-H(X + S + N_{\mathbf{p}})<1$. 
\end{proof}

\section{Proof of Claim \ref{claim: case 2}}\label{sec: proof of claim case 2}
\begin{proof}
\paragraph*{1)}
 Let $a=0, b=2$, and  $P(X_2 + S_2 =0 |u_1)=p_0$, and $P(X_2 + S_2 =2 |u_1)=1-p_0$. We represent this PMF by the vector $\mathbf{p}=(p_0, 0, 1-p_0, 0)$. This probability distribution is a linear combination of the form 
\begin{align}\label{eq: p case 1 lin comb}
\mathbf{p}=\beta (2/3, 0, 1/3, 0)+(1-\beta)(1/3,0,2/3,0),
\end{align} 
where $\beta=3p_0-1$.
 \begin{remark} \label{rem: circular conv and bilinear}
Let $Z=X+ Y$, where the PMF of $X$ is $\mathbf{p}=(p_0, p_1, p_2,p_3)$, and the PMF of $Y$ is $\mathbf{q}=(q_0, q_1, q_2,q_3)$. If $\mathbf{t}$ is the PMF of $Z$, then $\mathbf{t}=\mathbf{p} \circledast_4 \mathbf{q} $, where $ \circledast_4$  is the circular convolution in $\ZZ_4$. In addition, the map $(\mathbf{p} , \mathbf{q})\longmapsto \mathbf{p} \circledast_4 \mathbf{q} $ is a bi-linear map.
\end{remark}
Let $t_i=p(X_1+ S_1 + X_2+ S_2=i|u_1u_2)$ and $q_i=p(X_1+ S_1 =i |u_1)$ for all $i\in \ZZ_4$. Also denote $\mathbf{q}=(q_0, q_1, q_2,q_3)$, and $\mathbf{t}=(t_0, t_1, t_2, t_3)$. Using  Remark \ref{rem: circular conv and bilinear} and equation (\ref{eq: p case 1 lin comb}) we obtain 
% since This means $\mathbf{q}=\mathbf{p} \circledast_4 \mathbf{\lambda} $, where $ \circledast_4$  is the circular convolution. Since $\mathbf{p}$ is a linear combination of two probability distributions, and $\mathbf{q}$ is a bi-linear transformation of $\mathbf{p}$ and $\mathbf{\lambda}$, then $\mathbf{q}$ is also linear combination of two PMF's. In particular: 
\begin{align*}
\mathbf{t}&=\beta \big((2/3, 0, 1/3, 0)\circledast_4 \mathbf{q}\big)+(1-\beta)\big((1/3, 0, 2/3, 0)\circledast_4 \mathbf{q}\big).
\end{align*}
% 
% Let $q_i=p(X_1+ S_1 + X_2+ S_2=i|u_1u_2)$ and $\lambda_i=p(X_1+ S_1 =i |u_1)$ for all $i\in \ZZ_4$. Also denote $\mathbf{q}=(q_0, q_1, q_2,q_3)$, and $\mathbf{\lambda}=(\lambda_0, \lambda_1,\lambda_2, \lambda_3)$. Note that $q_i=\sum_{k\in \ZZ_4} p_k\lambda_{i\ominus k}$. This means $\mathbf{q}=\mathbf{p} \circledast_4 \mathbf{\lambda} $, where $ \circledast_4$  is the circular convolution. Since $\mathbf{p}$ is a linear combination of two probability distributions, and $\mathbf{q}$ is a bi-linear transformation of $\mathbf{p}$ and $\mathbf{\lambda}$, then $\mathbf{q}$ is also linear combination of two PMF's. In particular: 
%\begin{align*}
%\mathbf{q}&=\beta (2/3, 0, 1/3, 0)\circledast_4 \mathbf{\lambda}+(1-\beta)(1/3, 0, 2/3, 0)\circledast_4 \mathbf{\lambda}
%\end{align*}
%Suppose $N_{2/3,1/3}$ is a random variable which takes values from $\{0,2\}$ with probability . 
This implies that, $\mathbf{t}$ is also a linear combination of two PMFs. From the concavity of entropy, we get the following lower-bound:
\begin{align*}
H(X_1+ S_1 &+ X_2+ S_2|u_1u_2)=H(\mathbf{t})\\
&=H(\beta \big((2/3, 0, 1/3, 0)\circledast_4 \mathbf{q}\big)+(1-\beta)\big((1/3, 0, 2/3, 0)\circledast_4 \mathbf{q}\big))\\
& \geq \beta H((2/3, 0, 1/3, 0)\circledast_4 \mathbf{q}) +(1-\beta)H((1/3, 0, 2/3, 0)\circledast_4 \mathbf{q})\\  
& = \beta H(X_1+ S_1 + N_{(2/3, 0, 1/3, 0)}| u_1) +(1-\beta)H(X_1+ S_1+ N_{(1/3, 0, 2/3, 0)} | u_1),  
\end{align*}
where in the last equality, $N_{(\lambda_0, \lambda_1,\lambda_2, \lambda_3)}$ denotes a random variable with PMF $(\lambda_0, \lambda_1,\lambda_2, \lambda_3)$ that is also independent of $u_1$ and $X_1+ S_1$. As a result of the above argument, equation (\ref{eq: sum-rate last bound}) is bounded by
\begin{align*}
H(S_1|u_1)&+H(S_2|u_2)-H(Y|u_1u_2)-2\\& \leq H(S_1|u_1)+H(S_2|u_2) - \beta H(X_1+ S_1 + N_{(2/3, 0, 1/3, 0)}| u_1)\\ &-(1-\beta)H(X_1+ S_1+ N_{(1/3, 0, 2/3, 0)} | u_1)-2\\
&= \beta(H(S_1|u_1)-H(X_1+ S_1 + N_{(2/3,0,1/3,0)}|u_1))\\
&+(1-\beta)(H(S_1|u_1)-H(X_1+ S_1+ N_{(1/3,0,2/3,0)}|u_1))+H(S_2|u_2)-2
\end{align*}

\paragraph*{2)}
  Let $a=0, b=2$, and $P(X_2 + S_2 =0 |u_1)=p_0$, and $P(X_2 + S_2 =1 |u_1)=1-p_0$. In this case $\mathbf{p}=(p_0, 1-p_0, 0, 0)$. Also,  $$\mathbf{p}=\beta (2/3, 1/3, 0, 0)+(1-\beta)(1/3,2/3,0,0),$$ where $\beta=3p_0-1$. Similar to case 1), we use Remark \ref{rem: circular conv and bilinear} and the concavity of the entropy to get,
\begin{align*}
H(S_1|u_1)&+H(S_2|u_2)-H(Y|u_1u_2)-2\\& \leq  \beta(H(S_1|u_1)-H(X_1+ S_1 + N_{(2/3,1/3,0,0)}|u_1))\\
&+(1-\beta)(H(S_1|u_1)-H(X_1+ S_1+ N_{(1/3,2/3,0,0)}|u_1))+H(S_2|u_2)-2
\end{align*}
\end{proof}

\bibliographystyle{IEEEtran}
\bibliography{IEEEabrv,test2}

\end{document}